    \newcommand*{\circled}[1]{\lower.7ex\hbox{\tikz\draw (0pt, 0pt)%
    circle (.5em) node {\makebox[1em][c]{\small #1}};}}
	\newcommand{\presec}{\vspace{-0.2cm}}
	\newcommand{\postsec}{\vspace{-0.1cm}}
	\newcommand{\prefig}{\vspace{-0in}}
	\newcommand{\postfig}{\vspace{-0.4cm}}
		\mathchardef\Gamma="0100 \mathchardef\Delta="0101
\mathchardef\Theta="0102 \mathchardef\Lambda="0103
\mathchardef\Xi="0104 \mathchardef\Pi="0105
\mathchardef\Sigma="0106 \mathchardef\Upsilon="0107
\mathchardef\Phi="0108 \mathchardef\Psi="0109
\mathchardef\Omega="010A
\newcommand{\outline}[1]{}
\newcommand{\ie}{\emph{i.e.}\xspace}
\newcommand{\eg}{\emph{e.g.}\xspace}
\newcommand{\etc}{\emph{etc.}\xspace}
		\newcommand{\bbb}{\noindent\textbf}
	\newtheorem{theorem}{Theorem}[section]
	\definecolor{greener}{RGB}{0,166,0}
	\definecolor{reder}{RGB}{255,0,0}
	\definecolor{bluer}{RGB}{0,0,255}
	\definecolor{greener}{RGB}{0,166,0}
	\definecolor{ljz}{RGB}{0,0,255}
	\definecolor{lzr}{RGB}{255,0,0}
	\definecolor{wyh}{RGB}{0,200,0}
    \definecolor{ykc}{RGB}{0,0,255}
   	\definecolor{amethyst}{rgb}{0.6, 0.4, 0.8}
    \newcommand{\hstate}{healthy network state}
    \newcommand{\istate}{ill network state}
	\newcommand{\systemname}{ChameleMon}
	\newcommand{\sketchname}{FermatSketch}
\begin{document}

\thanks{\textsuperscript\textasteriskcentered Kaicheng Yang, Yuhan Wu, and Ruijie Miao contribute equally to this paper. Tong Yang (yangtongemail@gmail.com) is the corresponding author.}

\author{{Kaicheng Yang$^{\dag}$}, {Yuhan Wu$^{\dag}$},
\ 
{Ruijie Miao$^{\dag}$},
\ 
{Tong Yang$^{\dag}$},
\ 
{Zirui Liu$^{\dag}$},
\
{Zicang Xu$^{\dag}$}
\\
{Rui Qiu$^{\dag}$},
\ 
{Yikai Zhao$^{\dag}$},
\ 
{Hanglong Lv$^{\dag}$},
\ 
{Zhigang Ji$^{\P}$},
\ 
{Gaogang Xie$^{\S}$}}
\affiliation{{{$^{\dag}$National Key Laboratory for Multimedia Information Processing, School of Computer Science, Peking University}}
\\
{{$^{\P}$Huawei Technologies Co., Ltd.}}
\ \ \ \
{{$^{\S}$CNIC CAS, UCAS}}\country{}}

\begin{CCSXML}
<ccs2012>
   <concept>
       <concept_id>10003033.10003068.10003069</concept_id>
       <concept_desc>Networks~Data path algorithms</concept_desc>
       <concept_significance>500</concept_significance>
       </concept>
   <concept>
       <concept_id>10003033.10003079.10011704</concept_id>
       <concept_desc>Networks~Network measurement</concept_desc>
       <concept_significance>500</concept_significance>
       </concept>
   <concept>
       <concept_id>10003033.10003099.10003102</concept_id>
       <concept_desc>Networks~Programmable networks</concept_desc>
       <concept_significance>500</concept_significance>
       </concept>
   <concept>
       <concept_id>10003033.10003099.10003105</concept_id>
       <concept_desc>Networks~Network monitoring</concept_desc>
       <concept_significance>300</concept_significance>
       </concept>
 </ccs2012>
\end{CCSXML}

\ccsdesc[500]{Networks~Data path algorithms}
\ccsdesc[500]{Networks~Network measurement}
\ccsdesc[500]{Networks~Programmable networks}
\ccsdesc[300]{Networks~Network monitoring}

\keywords{Sketch; Programmable Switch; Packet Loss; Measurement Attention}

\acmYear{2023}\copyrightyear{2023}
\setcopyright{acmlicensed}
\acmConference[ACM SIGCOMM '23]{ACM SIGCOMM 2023 Conference}{September 10--14, 2023}{New York, NY, USA}
\acmBooktitle{ACM SIGCOMM 2023 Conference (ACM SIGCOMM '23), September 10--14, 2023, New York, NY, USA}
\acmPrice{15.00}
\acmDOI{10.1145/3603269.3604850}
\acmISBN{979-8-4007-0236-5/23/09}

\settopmatter{printacmref=true}
\renewcommand{\shortauthors}{Kaicheng Yang et al.}
    \title{\textbf{ChameleMon: Shifting Measurement Attention as Network State Changes}}
    \presec
\begin{abstract}
\postsec
\textsuperscript\textasteriskcentered Network measurement is critical to many network applications.
There are mainly two kinds of flow-level measurement tasks: 1) packet accumulation tasks and 2) packet loss tasks.
In practice, the two kinds of tasks are often required at the same time, but existing works seldom handle both. 
In this paper, we design \systemname{} to support the two kinds of tasks simultaneously.
The key design of \systemname{} is to shift measurement attention as network state changes, through two dimensions of dynamics:
1) dynamically allocating memory between the two kinds of tasks; 2) dynamically monitoring the flows of importance.
To realize the key design, we propose a key technique, leveraging Fermat's little theorem to devise a flexible data structure, namely \sketchname{}.
\sketchname{} is dividable, additive, and subtractive, supporting the two kinds of tasks.
We have implemented a \systemname{} prototype on a testbed with a Fat-tree topology.
We conduct extensive experiments and the results show \systemname{} supports the two kinds of tasks with low memory/bandwidth overhead, and more importantly, it can automatically shift measurement attention as network state changes.

\end{abstract} 
\maketitle

\presec 
\section{Introduction} 

\postsec

Network measurement provides critical statistics for various network operations, such as traffic engineering \cite{benson2011microte,feldmann2001deriving}, congestion control \cite{li2019hpcc}, network accounting \cite{cusketch}, anomaly detection \cite{zhang2013adaptive,mai2006sampled,estan2004building,duffield2003estimating}, and failure troubleshooting \cite{handigol2014know,netbouncer2019}.
In earlier years, sampling-based solutions \cite{everflow2015,netflow2004,sflow2001,csamp2008} were widely accepted thanks to their simplicity and ease of use.
Recently, sketch-based solutions \cite{elastic2018,nitrosketch2019,sketchlearn2018} have attracted much more attention than sampling-based ones, as they are designed to approach the ultimate goal of network measurement \cite{univmon2016,beaucoup2020,zhang2021cocosketch}: to support more tasks and achieve higher accuracy with less memory.
The emerging programmable switches that can process packets at terabit line rate further make sketches practical for production deployment, and designing novel sketches for flow-level measurement capabilities on programmable switches has become a hot topic \cite{univmon2016,beaucoup2020,zhang2021cocosketch}.

There are mainly two kinds of flow-level measurement tasks.
The first kind is \textit{packet accumulation tasks} that focus on flow sizes at certain network nodes, including flow size estimation \cite{cmsketch}, heavy-hitter detection \cite{sivaraman2017heavy}, entropy estimation \cite{gu2005detecting}, \etc.
The second kind is \textit{packet loss tasks} that focus on changes of flow sizes between network nodes, among which the most representative one is packet loss detection \cite{lossradar2016}.
However, the two kinds of tasks are seldom considered and supported simultaneously in one solution.
One reason behind is that the two kinds of tasks require very different flow-level statistics.

However, in practice, the two kinds of tasks are often required at the same time, and there are only limited resources for measurement in programmable switches (\eg, O(10MB) SRAM and limited accesses to the SRAM).
Therefore, the \textit{first requirement} for a practical measurement system is versatile to support the two kinds of tasks with high accuracy using limited resources, where limited resources refer to sub-linear space complexity.

Based on the first requirement, the \textit{second requirement} is to pay attention to different kinds of tasks for different network states.
When the network state is healthy and there are only few packet losses in the network, the system should pay more attention (\eg, allocate more memory) to packet accumulation tasks.
When the network state is ill and there are lots of packet losses in the network, the system should pay more attention to packet loss tasks to help diagnose network faults, especially for those flows which experience a great number of packet losses.

In summary, a practical measurement system should meet the following requirements: \textbf{[R$_{1.1}$]} \textit{versatility requirement}: supporting both packet loss tasks and packet accumulation tasks simultaneously; \textbf{[R$_{1.2}$]} \textit{efficiency requirement}: achieving high accuracy with sub-linear space complexity; \textbf{[R$_{2}$]} \textit{attention requirement:} paying attention to different kinds of tasks for different network states.

Existing solutions can be mainly classified into three categories according to supported measurement tasks:
\begin{enumerate}
[leftmargin=*,parsep=0pt,itemsep=0pt,topsep=2pt,partopsep=2pt]

    \item \textit{Solutions for packet loss tasks:}
    Typical solutions including LossRadar \cite{lossradar2016} based on Invertible Bloom filter \cite{eppstein2011s}, Netseer \cite{netseer2020} and Dapper \cite{dapper} based on the advanced features of programmable switches, and more.
    These solutions are often carefully designed to only obtain the exact difference set of flows/packets to minimize measurement overhead, while packet accumulation tasks require approximate sizes of all flows or simply large flows.
    Therefore, these solutions can hardly be extended to packet accumulation tasks and fail to meet \textbf{[R$_{1.1}$]}. 

    \item \textit{Solutions for packet accumulation tasks:}
    These solutions are usually based on sketches, including CM sketch \cite{cmsketch}, UnivMon \cite{univmon2016}, ElasticSketch \cite{elastic2018}, HashPipe \cite{sivaraman2017heavy}, and more.
    To efficiently maintain approximate flow sizes, these solutions choose to embrace hash collisions and select the estimation with least collisions to minimize error.
    For these solutions, due to their inherent error caused by hash collisions, it is difficult to obtain the exact difference set of flows/packets.
    Therefore, these solutions can hardly be extended to packet loss tasks and fail to meet \textbf{[R$_{1.1}$]}.

    \item \textit{Solutions for both kinds of tasks:}
    These solutions support both kinds of tasks by recording exact IDs and sizes of all flows, including FlowRadar \cite{flowradar2016}, OmniMon \cite{omnimon2020}, Counter Braids \cite{lu2008counter}, Marple \cite{marple} and more.
    However, recording exact IDs and sizes of all flows requires at least memory/bandwidth overhead linear with the number of flows.
    Therefore, these solutions fail to meet \textbf{[R$_{1.2}$]}.

\end{enumerate}

In summary, the first two categories of solutions cannot meet \textbf{[R$_{1.1}$]} due to their limited measurement capabilities, and the third category of solutions cannot meet \textbf{[R$_{1.2}$]} due to their linear space complexities.
A naive solution meeting both \textbf{[R$_{1.1}$]} and \textbf{[R$_{1.2}$]} is to combine the first two categories of solutions: choosing one solution in the corresponding category for each kind of tasks.
However, such a combination fails to achieve \textbf{[R$_{2}$]} on programmable switches. 
The reason behind is that the data structures and operations of different categories of solutions usually differ significantly.
For example, LossRadar \cite{lossradar2016} records the IDs and existences of packets using XOR operation and addition, while ElasticSketch \cite{elastic2018} records the IDs and sizes of flows using comparison, substitution, and addition.
Therefore, solutions in different categories can only utilize their resources allocated at compile time, which prohibits flexible allocation of memory resources between packet loss tasks and packet accumulation tasks.
Therefore, the naive solution cannot pay attention to different kinds of tasks for different network states.
%
%




%
%

In this paper, we design \systemname{}, which meets all the above requirements simultaneously.
\systemname{} can support almost all packet loss tasks and packet accumulation tasks.
Compared to the state-of-the-art solutions, for packet loss tasks, \systemname{} reduces the memory overhead from proportional to the number of all flows (FlowRadar) or lost packets (LossRadar), to proportional to the number of flows experiencing packet losses, which we call \textit{victim flows}; 
for packet accumulation tasks, \systemname{} achieves at least comparable accuracy.
Our \systemname{} has a key design and a key technique as follows.

The key design of \systemname{} is to shift measurement attention as network state changes, which is just like the process of the chameleons changing their skin coloration, through two dimensions of dynamics:
1) dynamically allocating memory between the two kinds of tasks; 2) dynamically monitoring the flows of importance.
First, \systemname{} monitors the network state and allocates memory between the two kinds of tasks accordingly.
When the network state is healthy and only a few packet losses occur in the network, \systemname{} pays most attention to and allocates most of the memory for packet accumulation tasks.
As the network state degrades and packet losses increase, \systemname{} gradually shifts measurement attention to and allocates more and more memory for packet loss tasks to assist in diagnosing network faults.
Second, \systemname{} ranks the flows according to their importance, and selects those of most importance to monitor.
When the network state is ill and there are too many victim flows, \systemname{} selects those flows experiencing many packet losses (called \textit{heavy-losses}, HLs for short) to monitor, instead of monitoring all victim flows.
Overall, when the network state continuously degrades from the healthy state to the ill state, \systemname{} runs as follows.
1) As the number of victim flows increases, \systemname{} leverages the first dimension of dynamic: gradually shifting measurement attention to and allocating more and more memory for packet loss tasks;
2) When the victim flows are too many to monitor, \systemname{} leverages the second dimension of dynamic: focusing measurement attention on HLs while monitoring a small portion of other packet losses (named \textit{light-losses}, LLs for short) through sampling.

To realize the key design, \systemname{} incorporates a key technique, leveraging Fermat's little theorem\footnote{Fermat's little theorem states that if $p$ is a prime, then for any integer $a$ that is indivisible by $p$, we have $a^{p-1} \equiv 1 \mod{p}$.} to devise a flexible data structure, namely \textit{\sketchname{}}.
The data structure of \sketchname{} is made of many same units.
\sketchname{} is dividable, additive, and subtractive, supporting packet loss detection and heavy-hitter (HH for short) detection simultaneously.
By dividing \sketchname{} into three parts to detect HLs, LLs, and HHs, \systemname{} can flexibly move the division points to shift attention and allocate memory between the two kinds of tasks as network state changes.
For each incoming packet, We further use a flow classifier (TowerSketch \cite{yang2021sketchint}) to determine which of the three parts to insert.
For packet loss detection, owing to Fermat's little theorem, \sketchname{} only requires memory proportional to the number of victim flows. 
Differently, the state-of-the-art solutions require memory proportional to the number of all flows (FlowRadar) or lost packets (LossRadar).

Thanks to the visibility to per-flow size provided by Towersketch, \systemname{} can support five other widely-studied \cite{elastic2018,univmon2016,song2020fcm,yang2021sketchint} packet accumulation tasks, including flow size estimation, heavy-change detection, flow size distribution estimation, entropy estimation, and cardinality estimation.
We have fully implemented a \systemname{} prototype on a testbed with a Fat-tree topology composed of 10 Tofino switches and 8 end-hosts.
We conduct extensive experiments and the results show that \systemname{} supports both kinds of tasks with low memory/bandwidth overhead, and more importantly, it can \textit{automatically shift measurement attention as network state changes at run-time without recompilation}.
We have released all related source codes at Github \cite{opensource}.

\bbb{Ethics:} This work does not raise any ethical issue.

 \begin{figure*}[t]
    \setlength{\abovecaptionskip}{0.1cm}
    \setlength{\belowcaptionskip}{-0.35cm}
    \centering  
\includegraphics[width=0.9\linewidth]{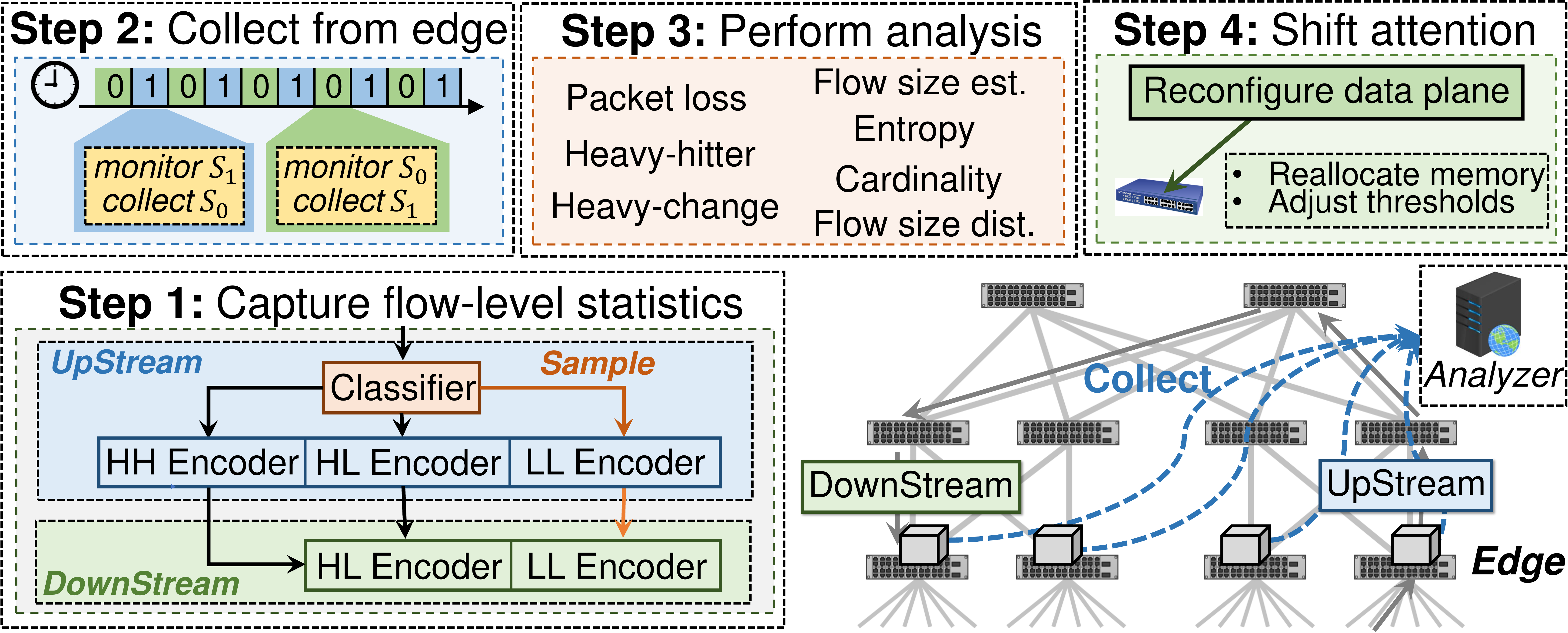}
    \caption{Overview of \systemname{}.}
    \label{fig:overview}
\end{figure*}

 \presec \section{Overview of \systemname{}} \postsec

\systemname{} monitors the network in four steps (Figure \ref{fig:overview}).  

\bbb{1) Capturing flow-level statistics on edge switches:}
To capture desired flow-level statistics, \systemname{} deploys three sketches on the data plane of each edge switch, including a flow classifier (TowerSketch), an upstream flow encoder (our \sketchname{}), and a downstream flow encoder (our \sketchname{}).
To detect HHs, HLs, and LLs, the upstream and downstream flow encoders are divided into multiple parts:
1) the upstream flow encoder is divided into an upstream HH encoder, an upstream HL encoder, and an upstream LL encoder;
2) the downstream flow encoder is divided into a downstream HL encoder and a downstream LL encoder.
For every packet with flow ID $f$ entering the network, according to the size of flow $f$, the flow classifier classifies flow $f$ into one of three hierarchies: 1) HH candidate, 2) HL candidate, or 3) LL candidate.
The LL candidate is further classified into sampled LL candidate or non-sampled LL candidate through sampling.
Based on the hierarchy of flow $f$, the packet is then inserted into the corresponding part of the upstream flow encoder and downstream flow encoder when it enters and exits the network, respectively.

\bbb{2) Collecting sketches from edge switches:}
A central controller periodically collects sketches from each edge switch to support persistent measurement.
To avoid colliding with packet insertion when collecting sketches, each edge switch divides the timeline into consecutive fixed-length time intervals (called \textit{epochs}), and copies a group of sketches for rotation.
Every time an epoch ends, the central controller collects the group of sketches monitoring this epoch, and the other group of sketches starts to monitor the current epoch.
%
%

%
%

\bbb{3) Performing network-wide analysis:}
Every epoch, the central controller performs network-wide analysis of the collected sketches to support seven measurement tasks.
By analyzing the upstream and downstream flow encoders, the central controller can support packet loss detection.
By analyzing the flow classifier and the upstream HH encoder, the central controller can support heavy-hitter detection and five other packet accumulation tasks.
%

%
%

\bbb{4) Shifting measurement attention as network state changes:}
Every epoch, the central controller monitors the real-time network state by analyzing the collected sketches.
Then, the central controller reconfigures the data plane of edge switches at run-time according to the real-time network state, shifting measurement attention through two dimensions of dynamics.
In the first dimension, the central controller dynamically allocates memory between packet loss tasks and packet accumulation tasks by reallocating the memory of the upstream and downstream encoders between their different parts.
In the second dimension, the central controller dynamically selects the most important flows (HH/HL/sampled LL candidates) to monitor by adjusting the thresholds for flow classification and the sample rate for sampling LL candidates.



\newcommand{\classifiername}{The flow classifier}

\presec \section{\systemname{} Data Plane} \postsec

The \systemname{} data plane consists of the flow classifier, the upstream flow encoder, and the downstream flow encoder deployed on each edge switch.
In this section, we detail the design of the \systemname{} data plane.
First, we propose the key technique of \systemname{}, namely \textit{\sketchname{}}.
%
Second, we detail each component of the \systemname{} data plane in sequence. 
%

\presec \subsection{The \sketchname{} Algorithm} \postsec
\label{sec:sketch}

\bbb{Rationale:} 
Our primary goal is to detect packet losses with low memory overhead.
Existing solutions focus on either per-packet loss (LossRadar \cite{lossradar2016}) or all-flow visibility (FlowRadar \cite{flowradar2016}), incurring unacceptable memory overhead.
To reduce overhead, we hope to aggregate all the lost packets of the same flow to detect per-flow packet losses.
It is very challenging because existing solutions commonly use XOR operation for high memory efficiency and hardware-friendliness, but simply using XOR operation to aggregate flow IDs of lost packets causes every two lost packets of the same flow to cancel each other out.
While invertible Bloom lookup table (IBLT) \cite{goodrich2011invertibleIBLT} can overcome this challenge as IBLT uses addition to aggregate flow IDs, such design requires computation over large numbers, and thus complicates the implementation of IBLT on programmable switches.
To address this challenge while maintaining hardware-friendliness, we devise \sketchname{}, which uses modular addition to aggregate flow IDs and leverages Fermat's little theorem to extract flow IDs.

\bbb{Data structure (Figure \ref{algopic:insertion}):}
\sketchname{} has $d$ equal-sized bucket arrays $\mathcal{B}_1, \cdots, \mathcal{B}_d$, each of which consists of $m$ buckets.
Each bucket array $\mathcal{B}_i$ is associated with a pairwise-independent hash function $h_i(\cdot)$ that maps each incoming packet into one bucket (called mapped bucket) in it. 
Each bucket $\mathcal{B}_i[j]$ contains two fields: 1) a \textit{count field} $\mathcal{B}_i^{c}[j]$ recording the number of packets mapped into the bucket; 2) an \textit{IDsum field} $\mathcal{B}_i^{ID}[j]$ recording the result of the sum of flow IDs of packets mapped into the bucket modulo a prime $p$.
%
%
At initialization, we set all fields of all buckets in \sketchname{} to zero, and $p$ to a prime that must be larger than any available flow ID $f$ and the size of any flow, so as to make use of Fermat's little theorem.
%

\bbb{Encoding/Insertion operation (Figure \ref{algopic:insertion}):} 
To encode an incoming packet with flow ID $f$, we first calculate the $d$ hash functions to locate $d$ mapped buckets: $\mathcal{B}_1[h_1(f)], \mathcal{B}_2[h_2(f)], \cdots, \mathcal{B}_d[h_d(f)]$. 
For each mapped bucket $\mathcal{B}_i[h_i(f)]$, we update it as follows.
First, we increment its count field $\mathcal{B}_i^{c}[h_i(f)]$ by one. 
Second, we update its IDsum field through \textit{modular addition}: $\mathcal{B}_i^{ID}[f)] \gets (( \mathcal{B}_i^{ID}[h_i(f)] $ $ + f ) \mod{p})$. 
The pseudo-code of encoding operation is shown in Algorithm \ref{alg:insert} in Appendix \ref{sec:pseudo}. 

\begin{figure}[t!]
    \setlength{\abovecaptionskip}{0.15cm}
    \setlength{\belowcaptionskip}{-0.4cm}
    \centering  
\includegraphics[width=0.9\linewidth]{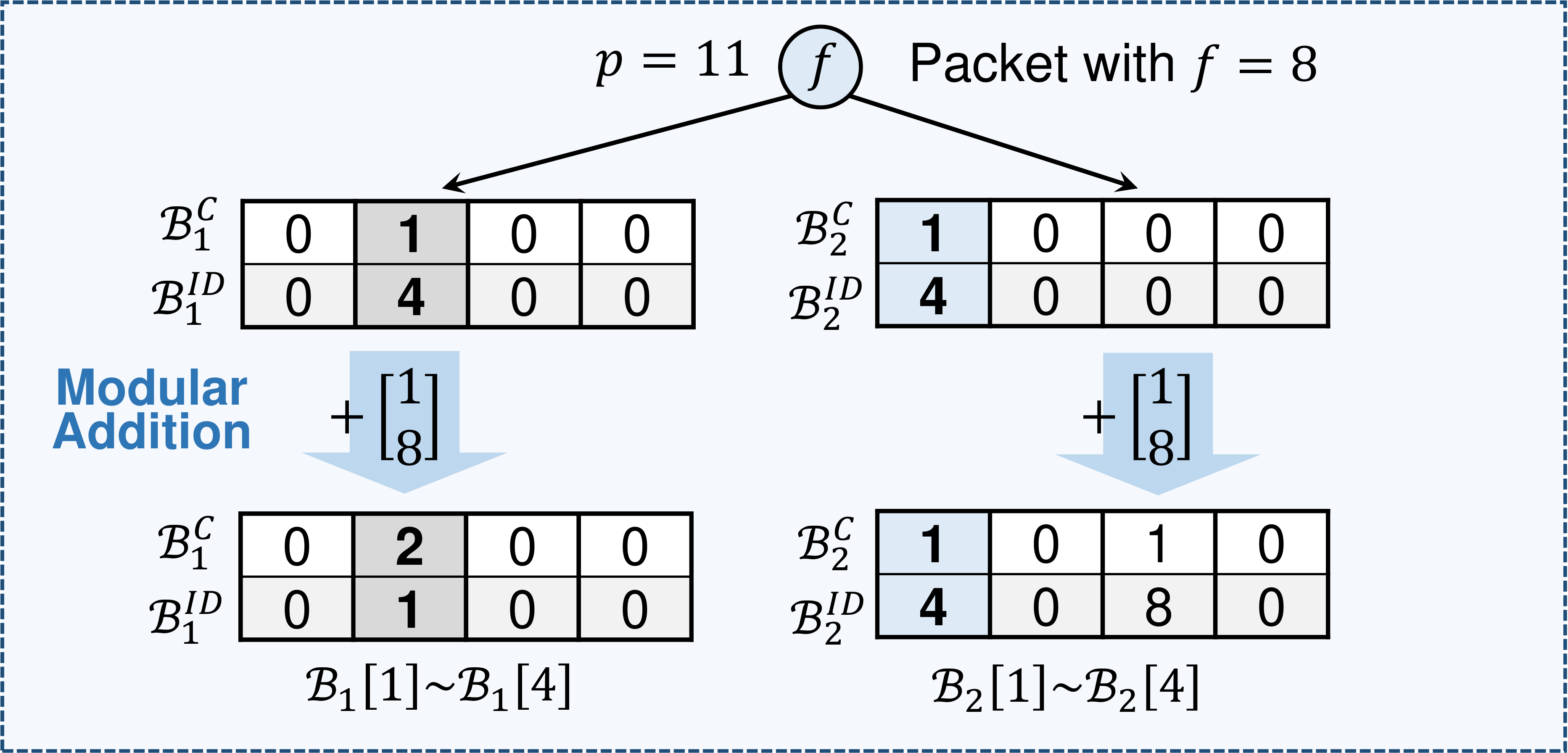}
    \caption{An example of encoding/insertion.}
    \label{algopic:insertion}
\end{figure}



\bbb{Decoding operation:}
The decoding operation, which can extract exact flow IDs and flow sizes from \sketchname{}, has two important suboperations: 1) \textit{pure bucket verification} that verifies whether a bucket only records packets of a single flow (pure bucket); 2) \textit{single flow extraction} that extracts and deletes a single flow and its size from all its mapped buckets.
%
%
Next, we propose the workflow of decoding operation and detail the two suboperations.  
The pseudo-code of decoding operation is shown in Algorithm \ref{alg:decode} in Appendix \ref{sec:pseudo}.

\noindent\textit{$\bullet$ Decoding workflow (Figure \ref{algopic:decoding}):}
decoding proceeds as follows.

\noindent\circled{1} Traverse \sketchname{} and push all non-zero buckets to decoding queue.

\noindent\circled{2} Pop a bucket from queue.

\noindent\circled{3} For the popped bucket, we perform pure bucket verification to verify whether it is a pure bucket.
If not, we simply ignore the bucket and go back to step \noindent\circled{2}.

\noindent\circled{4} If so, we perform single flow extraction to extract and delete a single flow and its size from the pure bucket as well as the other mapped buckets of the single flow.

\noindent\circled{5} We insert the extracted single flow and its size into a hash table, namely \textit{Flowset}, which is used to record all the extracted flows and their sizes.
We regard all flows recorded in Flowset as the flows previously encoded into \sketchname{}.

\noindent\circled{6}Except the pure bucket, we push the other mapped non-zero buckets of the extracted flow into queue.
%

\noindent\circled{7}
Check whether the queue is empty.
If so, the decoding stops.
Otherwise, go back to step \noindent\circled{2}.
After stopping, if there are still non-zero buckets in \sketchname{}, the decoding is considered as failed. Otherwise, the decoding is considered as successful.

\noindent\textit{$\bullet$ Pure bucket verification:}
The pure bucket verification reports whether one given bucket is pure (\ie, only records a single flow), but it may misjudge a non-pure bucket as a pure one with a small probability $\frac{1}{m}$.
Suppose a bucket ${B}_i[j]$ only records a single flow $f^{\prime}$, it should satisfy that $(\mathcal{B}_i^{c}[j] \times f^{\prime}) \mod p = \mathcal{B}_i^{ID}[j]$.
Leveraging Fermat's little theorem, we can get that $f^{\prime} = (\mathcal{B}_i^{ID}[j] \times (\mathcal{B}_i^{c}[j])^{p-2}) \mod p$.
Considering that bucket ${B}_i[j]$ should be one of the $d$ mapped buckets of flow $f^{\prime}$, to verify whether ${B}_i[j]$ is a pure bucket, we propose a verification method namely \textit{rehashing verification}.
First, we calculate the $i^{th}$ hash function $h_i(\cdot)$ to locate the $i^{th}$ mapped bucket of $f^{\prime}$, \ie{}, we calculate $h_i(f^{\prime})$.
Then we check whether $h_i(f^{\prime})$ is equal to $j$. 
If so, we consider $\mathcal{B}_i[j]$ as a pure bucket recording flow $f^{\prime}$ with size ${B}_i^{c}[j]$. 
Note that the false positive rate of pure bucket verification, \ie, the probability of misjudging a non-pure bucket as a pure one, is $\frac{1}{m}$, which is calculated as follows.
For any non-pure bucket, we can calculate its flow ID, which should be considered as a random value.
The probability that a random ID is hashed to the same bucket is $\frac{1}{m}$.
We further discuss that such false positives can be automatically eliminated during decoding ($\S$~\ref{sec:dicussion}), and prove they have little impact on decoding success rate through mathematical analysis (Theorem ~\ref{math:sketch1}).

\noindent\textit{$\bullet$ Single flow extraction:}
To extract/delete flow $f^{\prime}$ from $\mathcal{B}_i[j]$ as well as its other mapped buckets, first, we locate its other $(d-1)$ mapped buckets.
Second, for each mapped bucket ${B}_{i^{\prime}}[h_{i^{\prime}}(f^{\prime})]$, we decrement its count field ${B}_{i^{\prime}}^{c}[h_{i^{\prime}}(f^{\prime})]$ by ${B}_i^{c}[j]$, and update its IDsum field to $(({B}_{i^{\prime}}^{ID}[h_{i^{\prime}}(f^{\prime})] - {B}_i^{ID}[j]) \mod p)$ through \textit{modular subtraction}.

\begin{figure}[t!]
    \setlength{\abovecaptionskip}{0.15cm}
    \setlength{\belowcaptionskip}{-0.8cm}
    \centering  
    \includegraphics[width=0.9\linewidth]{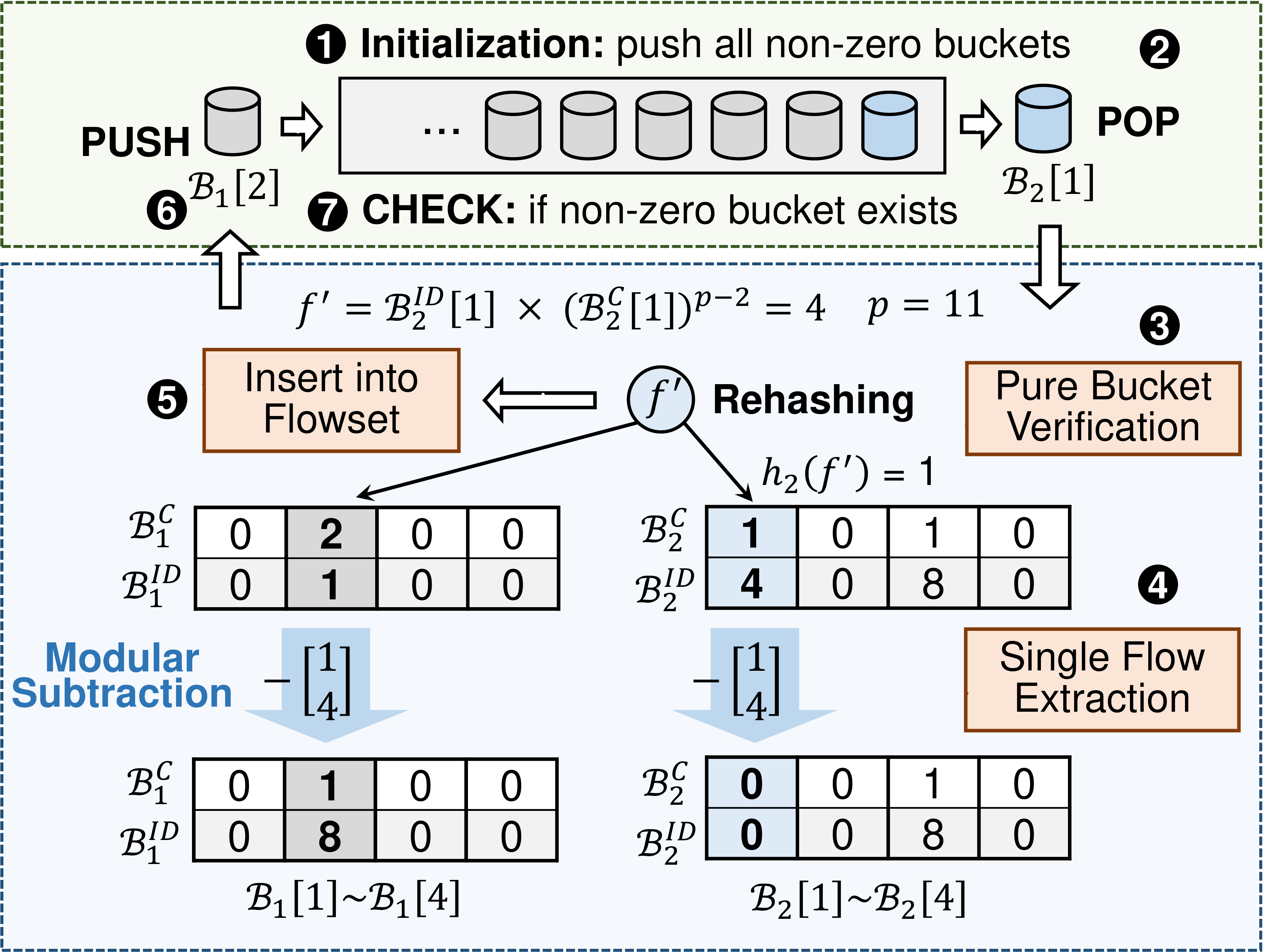}
    \caption{An example of decoding.}
    \label{algopic:decoding}
\end{figure}

\bbb{Addition/Subtraction operations:}
Adding/Subtracting \sketchname{} $FS_1$ to/from \sketchname{} $FS_2$.
$FS_1$ and $FS_2$ must use the same parameters including hash functions, number of arrays, number of buckets, and primes. 
For each bucket of $FS_2$, we update it as follows.
First, we locate the bucket of $FS_1$ that is in the same position as it.
Second, we add/subtract the count field of the located bucket of $FS_1$ to/from its count field.
Third, we modular add/subtract the IDsum field of the located bucket of $FS_1$ to/from its IDsum field.

%

\bbb{Complexities:}
We discuss the complexities of \sketchname{} in detail in Appendix \ref{sec:dicussion}.
Its space complexity is $\Theta(M)$ and time complexity of decoding is $O(md^2)$, where $M$ refers to the number of flows encoded into \sketchname{}.
When $d=3$ and $m=\frac{1.23 M}{d}$, \sketchname{} achieves the highest memory efficiency.
That is to say, on average at least 1.23 buckets can record a flow and its size.
We use Theorem~\ref{math:sketch1} to show that when $m d > c_d M+\epsilon$ and $M$ is not too small, the decoding only fails with an extremely small probability $O(\frac{1}{M^{d-2}})$, where $c_d$ refers to the minimum \textit{average number of buckets} required to record a flow and its size for a $d$-array \sketchname{}.
\begin{theorem}
\label{math:sketch1}
Suppose $m d > c_d M+\epsilon$ and $M= \Omega (d^{4d}log^d(md))$.
the decoding of \sketchname{} fails with probability $O(\frac{1}{M^{d-2}})$, where both $\epsilon$ and $c_d$ are small constants.
$$c_d=\left(sup\left\{\alpha \Big| \alpha \in (0,1), \forall x\in(0,1), 1-e^{-d\alpha x^{d-1} } \right\}\right)^{-1}$$
For example, $c_3=1.23, c_4=1.30, c_5=1.43.$
\end{theorem}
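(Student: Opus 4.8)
The plan is to recast the decoding procedure as a peeling process on a random hypergraph and to reduce the failure event to the non-emptiness of that hypergraph's $2$-core. Concretely, I would build a $d$-partite $d$-uniform hypergraph $H$ whose vertices are the $dm$ buckets and whose $M$ hyperedges are the encoded flows, each flow $f$ being incident to its $d$ mapped buckets $\mathcal{B}_1[h_1(f)],\dots,\mathcal{B}_d[h_d(f)]$. A bucket is pure exactly when it has degree $1$ in the residual hypergraph, and single flow extraction is the deletion of a degree-$1$ vertex together with its incident hyperedge. Hence, modulo the false positives of pure bucket verification, the decoding clears all buckets iff repeated peeling of degree-$1$ vertices empties $H$, i.e. iff the $2$-core of $H$ is empty. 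The false positives (each of probability $\tfrac1m$, caught or reversed by rehashing verification as discussed in $\S$~\ref{sec:dicussion}) I would absorb into a lower-order additive term, so the whole analysis reduces to bounding $\Pr[\text{$2$-core of }H\neq\varnothing]$.

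Since the flows choose their buckets independently and uniformly, $H$ is locally tree-like and the peeling admits the standard density-evolution recursion. Writing $\mu=M/m$ for the mean bucket load and letting $x$ denote the edge-perspective probability that a bucket fails to free its incident flow, the recursion is $x\mapsto 1-e^{-\mu x^{d-1}}$. Setting $\alpha=M/(md)$ so that $\mu=d\alpha$, the fixed-point map becomes $x\mapsto 1-e^{-d\alpha x^{d-1}}$, and the hypothesis $md>c_dM+\epsilon$ is precisely the statement that $\alpha<c_d^{-1}=\sup\{\alpha:\forall x\in(0,1),\,1-e^{-d\alpha x^{d-1}}<x\}$ with a uniform gap $\delta=\delta(\epsilon)>0$ between the curve and the diagonal. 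I would then show that the deterministic trajectory started at $x_0=1$ decreases monotonically to $O(1/M)$ within $O(\log M)$ rounds, and couple it to the actual random peeling either by Wormald's differential-equation method or by a McDiarmid bounded-difference estimate; the hypothesis $M=\Omega(d^{4d}\log^d(md))$ is exactly what makes the per-round deviation terms, the union bound over the $O(\log M)$ rounds, and the $d$-dependent local structures negligible, so that a surviving $2$-core of size $\Omega(M/\mathrm{polylog})$ has probability exponentially small in $M$, hence $o(M^{2-d})$.

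Once the trajectory has driven the active set down to $o(M)$, the only possible surviving $2$-cores are of bounded size, and these I would count by a first-moment argument. A sub-hypergraph with $k$ edges and $b$ vertices forming a $2$-core satisfies $2b\le dk$, and its expected number scales like $\binom{M}{k}$ times the probability that the $dk$ incidences collide onto $b$ buckets, giving (with $m=\Theta(M)$) an expected count $\Theta(M^{\,k+b-kd})$, whose exponent is maximised at $b=kd/2$, namely $k(1-d/2)$. The minimiser over admissible configurations is the \emph{double edge}: two flows hashing to the same $d$ buckets, irreducible under peeling because each of its buckets permanently has degree $\ge2$; its expected count is $\binom{M}{2}(1/m)^d=\Theta(M^{2-d})$. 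Every other bounded $2$-core has $k\ge3$, hence exponent $k(1-d/2)\le 3-\tfrac{3d}{2}<2-d$ for $d\ge3$, i.e. a strictly smaller power of $M$; summing over all bounded sizes therefore leaves the leading term $O(M^{2-d})=O(1/M^{d-2})$, which combined with the exponentially small large-core bound yields the claimed failure probability.

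The delicate point is not either half in isolation but their junction: the crude first moment does not reach the sharp constant $c_d$, whereas the density-evolution argument by itself delivers only a $1-o(1)$ success probability rather than the precise polynomial rate. The main work is therefore to prove that intermediate $2$-cores---of super-constant but sublinear size---contribute $o(M^{2-d})$, which requires a size-by-size estimate of $\mathbb{E}[N_k]$ that is uniform in $k$ and meshes cleanly with the concentration regime where density evolution takes over. Getting the constants in this cross-over to match the threshold $md>c_dM+\epsilon$ together with the sample-size condition $M=\Omega(d^{4d}\log^d(md))$ is where I expect essentially all the difficulty to lie, alongside the more routine bookkeeping needed to confirm that the rehashing false positives do not spoil the leading order.
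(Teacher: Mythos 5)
The peeling/2-core half of your plan is sound, but it is precisely the part the paper does \emph{not} prove: the paper delegates the threshold characterization and the $O(1/M^{d-2})$ failure rate of error-free peeling entirely to the existing literature on 2-cores of random hypergraphs and IBLT \cite{molloy2004pure,dietzfelbinger2010tight,goodrich2011invertibleIBLT}. Your density-evolution recursion $x\mapsto 1-e^{-d\alpha x^{d-1}}$, the identification of $c_d$ as the reciprocal of the critical load, and the first-moment argument singling out the ``double edge'' (two flows sharing all $d$ buckets, expected count $\binom{M}{2}m^{-d}=\Theta(M^{2-d})$) as the dominant bounded obstruction are all exactly the content of those cited results; carrying them out rigorously would re-derive known theorems rather than prove what is new here.

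The genuine gap is your treatment of the false positives of pure-bucket verification, which you propose to ``absorb into a lower-order additive term'' and call ``routine bookkeeping.'' That step is the entire content of the paper's proof, and it is not a benign additive perturbation: a false positive is a \emph{phantom extraction} --- the deletion of a never-inserted flow from $d$ buckets (equivalently, insertion of a wrong flow with multiplicity $-1$) --- which corrupts (``poisons'') those buckets, can generate further spurious pure verdicts, and breaks the clean equivalence between decoding success and emptiness of the 2-core on which your whole reduction rests. The paper closes this as follows: (i) at most $O(Md)$ verifications occur, each falsely positive with probability $O(1/m)$, so by a Chernoff bound the number of false positives is $F=O(d^3\log(md))$ except with probability $O(1/M^{d-2})$; (ii) these poison at most $Fd$ buckets, and by the poisoned-bucket analysis of \cite{goodrich2011invertibleIBLT} the decoding still recovers except with probability $O\bigl((Fd/M)^d\bigr)$; (iii) the hypothesis $M=\Omega(d^{4d}\log^d(md))$ is exactly what makes this last term $O(1/M^{d-2})$. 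This also exposes a concrete misattribution in your plan: you assign that hypothesis to Wormald-type deviation terms and union bounds over $O(\log M)$ peeling rounds, but its $d^{4d}\log^d(md)$ form comes directly from $(Fd)^d$ with $F=O(d^3\log(md))$ --- it is the false-positive budget, not a concentration error term. Without the Chernoff-plus-poisoned-bucket argument (or an equivalent rigorous version of the ``false positives are automatically eliminated'' claim of $\S$~\ref{sec:dicussion}), your proof cannot reach the stated bound.
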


The detailed proof is presented in Appendix \ref{app:math}.

\bbb{Packet loss detection:}
To support packet loss detection, we can deploy a group of \sketchname{}es on edge switches to encode the packets entering the network, and another group of \sketchname{}es to encode the packets exiting the network. 
Thanks to the additivity and subtractivity of \sketchname{}, for each group, we add up the \sketchname{}es in it to obtain a cumulative \sketchname{} encoding all the packets entering or exiting the network.
Then, we subtract the cumulative \sketchname{} encoding all the packets exiting the network from the other one, and the \sketchname{} after subtraction just encodes all the victim flows in the network.
Therefore, this \sketchname{} just requires memory proportional to the number of victim flows for successful decoding.
In other words, \sketchname{} can support packet loss detection with memory overhead proportional to the number of victim flows.

\bbb{[Optional] fingerprint verification:}
To reduce the false positive rate of pure bucket verification, we propose an extra verification method, namely \textit{fingerprint verification}.
Please refer to Appendix \ref{app:fv} for details.

\presec \subsection{Data Plane Components} \postsec
As shown in Figure \ref{fig:overview}, every packet entering the network undergoes the three components of the \systemname{} data plane in sequence: 1) the flow classifier, 2) the upstream flow encoder, and 3) the downstream flow encoder.
%

\subsubsection{Flow Classifier}
\label{sec:fc}
~\\
\bbb{Rationale:}
To detect HHs, HLs, and LLs, \systemname{} deploys the flow classifier in the ingress of each edge switch, so as to classify flows into different hierarchies.
While it is easy to select HHs to monitor according to flow sizes, it is not easy to select HLs to monitor because we can hardly predict how many packets a flow will lose.
Our observation is that for each flow, the number of its lost packets cannot exceed its size.
Therefore, the sizes of HLs should have a minimum value.
\systemname{} selects flows whose sizes exceed this value to monitor, so as to approximate the monitoring of HLs.
In summary, the flow classifier classifies flows purely according to flow sizes.
We choose TowerSketch \cite{yang2021sketchint}, a simple, accurate, and hardware-friendly sketch, as the flow classifier.

\bbb{Data Structure:}
The flow classifier consists of $l$ equal-sized arrays.
The $i^{th}$ array $\mathcal{A}_i$ consists of $w_i$ $\delta_i$-bit counters, where $w_i \times \delta_i$ is a constant and $\delta_{i-1} < \delta_i$.
Also, array $\mathcal{A}_i$ is associated with a pairwise-independent hash function $s_i(\cdot)$.
For each $\delta_i$-bit counter, its maximum value $2^{\delta_i}-1$ is used to represent the state that it is overflowed, and thus be regarded as $+\infty$.


\bbb{Insertion:}
To insert a packet with flow ID $f$, we first calculate the $l$ hash functions to locate $l$ counters: $\mathcal{A}_1[s_1(f)],\mathcal{A}_2[s_2(f)],\cdots$, $\mathcal{A}_l[s_l(f)]$.
We call these counters \textit{the $l$ mapped counters}.
Then, for each of the $l$ mapped counters, we increment it by one unless it is overflowed.
%
%

\bbb{Online query:}
To query the size of flow $f$ online, we simply report the minimum value among the $l$ mapped counters.

\bbb{Packet processing:} 
For a packet with flow ID $f$ entering the network, the flow classifier processes it as follows.
First, we insert it into the flow classifier and query the size of flow $f$.
Then, with the queried flow size, we classify flow $f$ into the corresponding hierarchy according to two thresholds $T_h$ and $T_l$, where $T_h$ is used for selecting HH candidates, and $T_l$ is used for selecting HL candidates.
In general, it satisfies that $T_l <= T_h$.
%
%
If the flow size is larger than or equal to $T_h$, flow $f$ is classified as a HH candidate.
If the flow size is less than $T_l$, flow $f$ is classified as a LL candidate.
If the flow size is between $T_l$ and $T_h$, flow $f$ is classified as a HL candidate.
%
%
%
The LL candidate is further classified into sampled LL candidate or non-sample LL candidate through sampling.
%

 \subsubsection{Upstream Flow Encoder} 
\label{sec:ufe}
~\\
\bbb{Rationale:}
To support packet loss detection, \systemname{} deploys the upstream flow encoder in the ingress of each edge switch just after the flow classifier, so as to encode the packets entering the network.
%
%
Therefore, the upstream flow encoder should contain two \sketchname{}es to encode HL candidates and sampled LL candidates individually.
Here, for better monitoring of the network state, \systemname{} monitors a portion of LLs to maintain an overview of all victim flows. 
Besides, to support heavy-hitter detection, the upstream flow encoder should contain a \sketchname{} to encode HH candidates.
In summary, the upstream flow encoder should consist of three \sketchname{}es.
%
%

\bbb{Data structure:}
The upstream flow encoder is a $d$-array \sketchname{} divided into three $d$-array \sketchname{}es: 1) an upstream HH encoder for encoding HH candidates; 2) an upstream HL encoder for encoding HL candidates; 3) an upstream LL encoder for encoding sampled LL candidates.
We denote the number of buckets per array of the upstream flow encoder, HH encoder, HL encoder, and LL encoder by $m_{uf}$, $m_{hh}$, $m_{hl}$, and $m_{ll}$, respectively.
Obviously, it satisfies that $m_{uf} = m_{hh} + m_{hl} + m_{ll}$.

\bbb{Packet processing:}
For a packet with flow ID $f$ entering the network, the upstream flow encoder processes it by encoding the packet into one of the encoders corresponding to the hierarchy of flow $f$ unless flow $f$ is a non-sampled LL candidate.
Here, the hierarchy of flow $f$ can be directly obtained because the upstream flow encoder and the flow classifier are deployed on the same edge switch.
%
%

\subsubsection{Downstream Flow Encoder} 
\label{sec:dfe}
~\\
\bbb{Rationale:}
To support packet loss detection, \systemname{} deploys the downstream flow encoder in the egress of each edge switch, so as to encode the packets exiting the network.
As the downstream flow encoder is not responsible for heavy-hitter detection, it should consist of two \sketchname{}es to encode HL candidates and sampled LL candidates.
%
%

\bbb{Data structure:}
The downstream flow encoder is a $d$-array \sketchname{} divided into two $d$-array \sketchname{}es: 1) a downstream HL encoder; 2) a downstream LL encoder.
To support packet loss detection, the number of buckets per array of the downstream HL encoder and LL encoder must also be $m_{hl}$ and $m_{ll}$, respectively, so as to support addition and subtraction operations with the corresponding upstream encoder.
We denote the number of buckets per array of the downstream flow encoder by $m_{df}$.
In general, it satisfies that $m_{df} < m_{uf}$, and therefore satisfies that $m_{df} \geqslant m_{hl} + m_{ll}$.
%
%

\bbb{Packet processing:}
For a packet with flow ID $f$ exiting the network, the downstream flow encoder processes it by encoding the packet into one of the encoders corresponding to the hierarchy of flow $f$ unless flow $f$ is a non-sampled LL candidate.
Here, packets of HH candidates are also encoded into the downstream HL encoder.
Different from the upstream flow encoder, the downstream flow encoder cannot directly obtain the flow hierarchy from the flow classifier, as a flow could enter and exit the network at different edge switches.
To address this issue, first, considering that there are four flow hierarchies, we can use $\lceil \log(4) \rceil = 2$ bits in the original packet header to transmit this information.
For example, for IPv4 protocol, we can use the unused bits in the type of service (ToS) field.
If there are not enough unused bits, second, we can transmit the flow hierarchy in an INT-like \cite{kim2015band} manner.

\presec \section{\systemname{} Control Plane} \postsec
The \systemname{} control plane consists of a central controller, as well as the control plane of each edge switch.
In this section, we detail the design of the \systemname{} control plane.
We begin by laying out how the \systemname{} control plane collects sketches from the \systemname{} data plane, then introduce how to support seven measurement tasks with the collected sketches, and finally propose how to shift measurement attention as network state changes.

\presec \subsection{Collection from Data Plane} \postsec
The central controller needs to periodically collect sketches, \ie, the flow classifier, the upstream flow encoder, and the downstream flow encoder, from the \systemname{} data plane, so as to support persistent measurement.
However, the collection cannot be completed in an instant, and thus inevitably collide with packet insertion if there is only a group of sketches.
Specifically, if the central controller wants to collect sketches at time $t$, it will inevitably collect some counters inserted by packets after $t$, which could result in decoding failure of \sketchname{}.
%
%
%
To address this issue, \systemname{} takes two steps: 1) \textit{timeline division} and 2) \textit{clock synchronization}.
Next, we just briefly cover the two steps.
We detail the two steps in Appendix~\ref{appendix:collection}, where we further analyze \textit{the appropriate time} for the central controller to collect sketches.

\bbb{Timeline division:}
Each edge switch periodically flips a 1-bit timestamp to divide the timeline into fixed-length time intervals (called epochs) with interleaved 0/1 timestamp, and copies a group of sketches for rotation.
Each group of sketches corresponds to a distinct timestamp value, and monitors the epochs with that timestamp value.

\bbb{Clock synchronization:}
The central controller also maintains a 1-bit periodically flipping timestamp, and periodically synchronizes its clock with the control plane of each edge switch, so as to make opportunities for collection.

Every time the locally maintained 1-bit timestamp flips, an epoch ends, the central controller starts to collect the group of sketches monitoring this epoch, and the other group of sketches starts to monitor the current epoch.

%

\presec \subsection{Measurement Tasks} \postsec
\label{sec:mtasks}
With the collected sketches, the central controller can support packet loss detection and six packet accumulation tasks.

\bbb{Packet loss detection:}
reporting each victim flow and the number of its lost packets.
The central controller can support packet loss detection by analyzing the upstream and downstream flow encoders collected from each edge switch.
First, for each edge switch, we decode the upstream HH encoder to obtain the HH Flowset, and then reinsert each flow with its size in the HH Flowset into the upstream HL encoder.
Second, we add up the upstream/downstream HL/LL encoder of each edge switch through addition operation to obtain the cumulative upstream/downstream HL/LL encoder.
Third, we subtract the cumulative downstream HL/LL encoder from the cumulative upstream HL/LL encoder to obtain the \textit{delta HL/LL encoder}.
Fourth, we decode the delta HL/LL encoder to obtain the HL/LL Flowset.
Finally, we report the flows in the HL Flowset as HLs, and the flows in the LL Flowset but not in the HL Flowset as LLs.
For each of these flows, its estimated number of lost packets is the sum of its size in the HL Flowset and the LL Flowset.

For each edge switch, the central controller can support the following six widely-studied \cite{elastic2018,univmon2016,song2020fcm,yang2021sketchint} packet accumulation tasks by analyzing the flow classifier and upstream HH encoder collected from it.
Then, by synthesizing the results of each edge switch, the central controller can easily support these tasks in a network-wide manner.
We detail these six tasks from the perspective of an edge switch.

\bbb{Heavy-hitter detection:}
reporting flows whose sizes exceed $\Delta_h$. 
First, we decode the upstream HH encoder to obtain the HH Flowset, which records flows with ID $f_i$ and size $q_i$.
For any flow $f_j$ in the HH Flowset, if its estimated flow size $\mathcal{T}_h+ q_j$ is larger than $\Delta_h$, we report it as a HH.
Note that $T_h$ is the threshold used for selecting HH candidates.

\bbb{Flow size estimation:} 
reporting flow size of flow $f_j$.
%
%
Similarly, we obtain the HH Flowset.
If flow $f_j$ is in the HH Flowset, we report its flow size as $\mathcal{T}_h+ q_j$.
Otherwise, we report its flow size as query result from the flow classifier. 

\bbb{Heavy-change detection:}
reporting flows whose sizes change beyond $\Delta_c$ in two adjacent epochs. 
Similarly, we obtain the HH Flowset.
For any flow $f_j$ in the HH Flowset of either epoch, we estimate its flow size in the two epochs.
If the difference between the two estimated flow sizes is larger than $\Delta_c$, we report flow $f_j$ as a heavy-change.

\bbb{Cardinality estimation:}
reporting number of flows. 
We apply linear-counting algorithm \cite{whang1990linear} to the counter array with most counters in the flow classifier for estimation.

\bbb{Flow size distribution estimation:}
reporting distribution of flow sizes.
We apply MRAC algorithm \cite{kumar2004data} to each counter array in the flow classifier.
Array $\mathcal{A}_i$ provides distribution of flow size in range $[2^{\delta_{i - 1}} - 1, 2^{\delta_i} - 1)$. 
The remaining distribution of flow size in range $[2^{\delta_{i}} - 1, +\infty)$ is estimated from the flows larger than $2^{\delta_{i}} - 2$ in the HH Flowset.
%

\bbb{Entropy estimation:}
reporting entropy of flow sizes. 
Based on the estimated flow size distribution, we can easily compute the entropy as follows: $-\sum \left ( n_i \cdot \frac{i}{N}\log{\frac{i}{N}} \right )$, where $n_i$ is the number of flows of size $i$, and $N = \sum ( i \cdot n_i)$.


%


\presec
\subsection{Shifting Measurement Attention}
\postsec

A practical measurement system should pay attention to different kinds of tasks for different network states.
When there are only rare packet losses in network, the system should pay more attention to and allocate more memory for packet accumulation tasks.
In contrast, when there are lots of packet losses in network, the system should pay more attention to and allocate more memory for packet loss detection to help diagnose network faults.

Aiming at this target, \systemname{} decides to shift measurement attention as network changes at run-time without recompilation.
Every time all the sketches monitoring the previous epoch are collected, \systemname{} takes two phases to shift measurement attention.
First, the central controller monitors the real-time network state, including the number and flow size distribution of flows and victim flows, by analyzing the collected sketches.
Second, the central controller reconfigures the \systemname{} data plane according to the real-time network state while \textit{maintaining high memory utilization.}
The central controller not only reallocates memory of the upstream and downstream encoders between their different parts, but also adjusts the thresholds for flow classification and the sample rate for sampling LL candidates.
To avoid interference with the monitoring of the current epoch, the reconfiguration will not function immediately, but in the next epoch.

For \systemname{}, the network state could be clearly classified into two levels: 1) \hstate{} that \systemname{} can allocate sufficient memory to monitor all victim flows; 2) \istate{} that \systemname{} cannot allocate sufficient memory to monitor all victim flows, and thus must select HLs to monitor.
For each level of network state, \systemname{} behaves almost the same in shifting measurement attention, and we detail how it behaves in this section.
%

\presec
\subsubsection{Healthy Network State}

\label{sec:healthy}
~\\
Suppose the previously monitored network state is healthy, and now the central controller starts to shift measurement attention.
Currently, the LL encoders are not allocated any memory as \systemname{} can monitor all victim flows, and $T_l$ must be $1$ as no flows should be classified into LL candidates.
The memory allocation between the upstream HH encoder and the upstream HL encoder is flexible.


\bbb{Monitoring real-time network state:}
The monitoring proceeds as follows.
First, for each edge switch, the central controller estimates the number of flows and flow size distribution\footnote{The estimation of flow size distribution using the MRAC algorithm typically takes several seconds to perform multiple iterations. Therefore, we recommend either 1) monitoring the flow size distribution over time intervals longer than epoch or 2) reducing the number of iterations to support more real-time monitoring.} as described above ($\S$~\ref{sec:mtasks}).
Second, for each edge switch, the central controller obtains the number of HH candidates by decoding the upstream HH encoder.
After all decoding stops, if the decoding of any upstream HH encoder fails, the central controller stops the monitoring as the decoding of the delta HL encoder requires reinserting the decoded HH candidates into the upstream HL encoders.
Third, the central controller obtains the number of HLs (equals to victim flows for healthy network state) by decoding the delta HL encoder as described above ($\S$~\ref{sec:mtasks}).
If the decoding fails, the central controller estimates the number of HLs by applying linear-counting algorithm to any bucket array of the delta HL encoder.

\bbb{Reconfiguring \systemname{} data plane:}
The core idea of reconfiguration is to first ensure the successful decoding of \sketchname{}es for supporting packet loss detection and heavy-hitter detection, while maintaining high memory utilization.
The reconfiguration proceeds as follows.

\bbb{\textit{Step 1:}} We focus on the successful decoding of the upstream HH encoders as they are decoded first.
For each edge switch, if the decoding of the upstream HH encoder fails, the central controller turns up $T_h$ according to the number of flows and flow size distribution, controlling the expected load factor\footnote{Load factor refers to the ratio of the number of recorded flows to the number of buckets of \sketchname{}. \sketchname{} achieves the highest memory efficiency when $d$ is set to $3$, that on average 1.23 buckets can record a flow and its size. Thus, the maximum load factor of \sketchname{} is around $81.3\% = \frac{1}{1.23}$.} of the upstream HH encoder at around 70\%\footnote{Here, we decide not to pursue the maximum load factor for two reasons: 1) the potential increase of HH candidates in the current epoch and 2) the inevitable estimation error in linear-counting.}, so as to maintain high memory utilization.
After turning up $T_h$, the central controller stops the reconfiguration as the decoding of the delta HL encoder cannot proceed.

\bbb{\textit{Step 2:}} We focus on the successful decoding and high memory utilization of the delta HL encoder.
If the decoding of the delta HL encoder fails, according to the estimated number of HLs, the central controller estimates the required memory for 70\% load factor.
If the maximum memory that the HL encoders can be allocated to, \ie, all the memory of the downstream flow encoder, cannot cover the required memory, the healthy network state transitions to the ill network state.
In this case, the central controller 1) reallocates the memory inside the upstream and downstream flow encoders as the fixed allocation described in the ill network state ($\S$~\ref{sec:ill}), 2) sets $T_l$ to $T_h$, and 3) adjusts the sample rate for 70\% load factor of the delta LL encoder assuming that each HL will be a LL.
Otherwise, the central controller just expands the HL encoders to the required memory.
If the decoding of the delta HL encoder succeeds and its load factor is lower than 60\%, the central controller tries to compress the HL encoders to approach 70\% load factor for high memory utilization.
Here, we \textit{reserve the minimum memory} for the HL encoders to handle the potential small burst of victim flows.

\bbb{\textit{Step 3:}} After all the memory reallocation, we focus on the successful decoding and high memory utilization of the upstream HH encoders.
For each edge switch, with the number of HH candidates and the memory of the upstream HH encoder, the central controller further estimates the expected load factor of the upstream HH encoder.
if the expected load factor of the upstream HH encoder is lower than 60\% or larger than 70\%, the central controller turns down or up $T_h$ to approach 70\% load factor.

\presec
\subsubsection{Ill Network State}

\label{sec:ill}
~\\
Suppose the previously monitored network state is ill, and now the central controller starts to shift measurement attention.
Currently, all the HH, HL and LL encoders are allocated fixed memory, and $T_l$ must be larger than $1$ to select HL candidates.
Specifically, the upstream HH encoder is compressed to the minimum memory, which is the memory difference between the upstream flow encoder and the downstream flow encoder.

\bbb{Monitoring real-time network state:}
The monitoring proceeds in a similar way to that of the healthy network state.
In addition, the central controller obtains the number of LLs by decoding the delta LL encoder as described above ($\S$~\ref{sec:mtasks}).
If the decoding fails, the central controller estimates the number of LLs by applying linear-counting algorithm to the delta LL encoder, and then stops the monitoring.
If both decoding of the delta HL and LL encoders succeeds, the central controller estimates the number and flow size distribution of victim flows as follows.
First, the central controller samples the HLs with the same sampling method and rate as LLs.
Second, the central controller merges sampled HLs and sampled LLs to obtain sampled victim flows.
Third, the central controller estimates the flow size distribution of victim flows through querying the flow size of each sampled victim flow, and the number of victim flows through dividing the number of sampled victim flows by sample rate.
If the decoding of the delta HL encoder fails, the central controller regards the estimated flow size distribution of sampled LLs, which is also estimated by querying flow sizes, as the flow size distribution of victim flows.

\bbb{Reconfiguring \systemname{} data plane:}
The core idea of reconfiguration is the same as that of the healthy network state.
The reconfiguration proceeds as follows.

\bbb{\textit{Step 1:}} We focus on the successful decoding of the upstream HH encoders, and the reconfiguration proceeds the same as the first step of the healthy network state.
In addition, we focus on the successful decoding of the delta LL encoder.
If the decoding of the delta LL encoder fails, according to the estimated number of LLs, the central controller adjusts the sample rate to make the delta LL encoder approach 70\% load factor, and then stops the reconfiguration.

\bbb{\textit{Step 2:}} We focus on the successful decoding of the delta HL encoder.
If the decoding of the delta HL encoder fails, according to the estimated flow size distribution of victim flows, assuming that each victim flow larger than $T_l$ will be a HL, the central controller turns up $T_l$ to make the delta HL encoder approach 70\% load factor.

\bbb{\textit{Step 3:}} we focus on the high memory utilization of the HL and LL encoders.
If both the decoding of the delta HL and LL encoders succeeds, according to the estimated number of victim flows, the central controller estimates the required memory for monitoring all the victim flows with 70\% load factor.
If the downstream flow encoder can cover the required memory, the ill network state transitions to the healthy network state.
In this case, the central controller 1) eliminates the LL encoders, 2) allocates the required memory (at least the reserved minimum memory) to the HL encoders, and 3) sets $T_l$ to 1.
If the downstream flow encoder cannot cover the required memory, and the load factor of the delta HL encoder or the delta LL encoder is lower than 60\%, the central controller turns up $T_l$ or the sample rate according to the estimated flow size distribution of victim flows or the estimated number of LLs, respectively, so as to approach 70\% load factor.

\bbb{\textit{Step 4:}}
After all the memory reallocation, we focus on the successful decoding and high memory utilization of the upstream HH encoders, and the reconfiguration proceeds the same as the third step of the healthy network state.

\presec
\section{Evaluation}
\postsec
We conduct various experiments on CPU platform and our testbed, and focus on the following five key questions. 
%

    \bbb{How much memory/time can \systemname{} save in packet loss detection? (Figure~\ref{fig:sketch:setting1}-\ref{fig:sketch:setting3})}
    We implement \sketchname{} and its competitors in C++, and use CAIDA dataset \cite{caida} to evaluate their memory and time overhead for packet loss detection on CPU platform.
    Results show that \sketchname{} can save memory in all cases and time in most cases.

    \bbb{How accurately can \systemname{} support six packet accumulation tasks? (Figure~\ref{fig:accuracy} in Appendix \ref{eval:2})}
    We implement the combination of TowerSketch and \sketchname{} and its competitors in C++, and use CAIDA dataset to evaluate their accuracy for these six tasks on CPU platform.
    Results show that the combination can achieve at least comparable accuracy in all six tasks.
    
    \bbb{Can \systemname{} automatically shift measurement attention? (Figure~\ref{fig:testbed:flownum}-\ref{fig:testbed:lossrate})}
    We generate workloads according to widely used traffic distributions (\eg, DCTCP \cite{dctcp}) for evaluation.
    We use the above workloads to evaluate \systemname{} by generating different network states on our testbed.
    Results show that \systemname{} can always automatically shift measurement attention as network state changes at run-time, and maintains high memory utilization in most cases.
    
    \bbb{How fast can \systemname{} shift measurement attention? (Figure~\ref{fig:testbed:shift-process})}
    We use the above workloads to evaluate \systemname{} over a large time window, in which the network state changes 8 times.
    Results show that \systemname{} can shift measurement attention within at most 3 epochs. 
    
    \bbb{How fast can \systemname{} monitor the network? (Figure~\ref{fig:testbed:response-time}-\ref{fig:testbed:cdf} in Appendix \ref{app:timeoverhead})}
    We use the above workloads to evaluate various factors that can affect the epoch length.
    Results show that \systemname{} can monitor the network every 50ms on our testbed, using only one CPU core and consuming only 320Mbps bandwidth. 
    We believe \systemname{} can easily scale to monitor a much larger network in a faster manner.


%


\presec
\subsection{Evaluation on Packet Loss Detection}
\label{eval:1}
\postsec
\bbb{Dataset:}
We use the anonymized IP traces collected in 2018 from CAIDA \cite{caida} as dataset, and use the 32-bit source IP address as the flow ID.
We use the first 100K flows containing 5.3M packets for evaluation.





\bbb{Setup:}
We set up a simulation with a simple topology consisting of only a link on CPU platform.
We compare \sketchname{} with FlowRadar \cite{flowradar2016} and LossRadar \cite{lossradar2016}.
For \sketchname{}, we set its count field and ID field to 32bits, and the number of hash functions to 3.
For FlowRadar, we allocates 10\% memory to the flow filter and 90\% memory to the counting table.
For the flow filter, which is actually a Bloom filter \cite{bloom1970space}, we sets its number of hash functions to $10$.
For the counting table, we set its FlowXOR field, FlowCount field, and PacketCount field to 32bits, and its number of hash functions to $3$.
For LossRadar, we set its count field to 32bits, xorSum field to 48bits, and number of hash functions to $3$.
Here, the xorSum field of LossRadar encodes a 32-bit flow ID as well as a 16-bit packet-specific information that represents the order of a packet in a flow.
For each solution, we deploy it upstream and downstream of the link to encode the packets entering and exiting the link.
%

%
%
%
%

\bbb{Memory/Time overhead\footnote{The memory overhead refers to the minimum memory required to achieve 99.9\% decoding success rate, and the time overhead refers to the corresponding decoding time with the minimum memory.
} \textit{vs.} number of victim flows (Figure \ref{fig:sketch:setting1}):}
Experimental results show that the memory/time overhead of \sketchname{} is proportional to the number of victim flows.
We let the largest 10K flows pass through the link, among which a part of flows are victim flows.
The packet loss rate of victim flows is set to 1\%.
%
%
As the number of victim flows increases, the memory/time overhead of FlowRadar remains unchanged, while that of \sketchname{} increases almost linearly.
We find when the number of victim flows exceeds 6000, the decoding time of \sketchname{} exceeds that of FlowRadar.
This is because the decoding operation of \sketchname{} is more complex than FlowRadar.
Compared to FlowRadar/LossRadar, \sketchname{} saves up to $15.9$/$23.2$ times memory and up to $3.0$/$4.6$ times decoding time.  

\bbb{Memory/Time overhead \textit{vs.} packet loss rate (Figure \ref{fig:sketch:setting2}):}
Experimental results show that the memory/time overhead of \sketchname{} is independent of the number of lost packets.
We let the largest 10K flows pass through the link, among which the largest $100$ flows are victim flows.
As the packet loss rate of victim flows increases, the memory/time overhead of \sketchname{} and FlowRadar remains unchanged, while that of LossRadar increases linearly.
Compared to FlowRadar/LossRadar, \sketchname{} saves up to $276.1$/$6411.2$ times memory and up to $64.5$/$1585.6$ times decoding time.

\bbb{Memory/Time overhead \textit{vs.} number of flows (Figure \ref{fig:sketch:setting3}):}
Experimental results show that the memory/time overhead of \sketchname{} is independent of the number of flows.
We let a certain number of flows pass through the link, among which the largest $100$ flows are victim flows.
The packet loss rate of victim flows is set to 1\%.
As the number of flows increases, the memory/time overhead of \sketchname{} and LossRadar remains unchanged, while that of FlowRadar increases linearly.
Compared to FlowRadar/LossRadar, \sketchname{} saves up to $1535.0$/$128.8$ times memory and up to $821.3$/$23.7$ times decoding time. 

\presec
\subsection{Evaluation on Testbed}
\postsec
\label{eval:testbed}
\bbb{Testbed setup:}
We have fully implemented a \systemname{} prototype on a testbed with a Fat-tree topology composed of 10 Tofino switches and 8 servers, with 1400 lines of P4 \cite{bosshart2014p4} code and 2400 lines of C/C++ code.
Each server has 48 2.1GHz CPU cores, 256 GB RAM, and a 40Gb Mellanox Connectx-3 Pro NIC.
Switches and servers are interconnected with 40Gb links.
We deploy the \systemname{} data plane on all four ToR/edge switches.
An additional server linked with a certain edge switch works as the central controller.
For implementation details of the \systemname{} data plane and control plane, please refer to Appendix \ref{sec:protoimplementaion}.

\begin{figure}[t!]
\setlength{\subfigcapskip}{-0.5cm}
\setlength{\abovecaptionskip}{-0.1cm}
\setlength{\belowcaptionskip}{-0.7cm}
\centering
    \subfigure[Memory overhead.]{
        \begin{minipage}[b]{0.21\textwidth}
            \includegraphics[width=\textwidth]{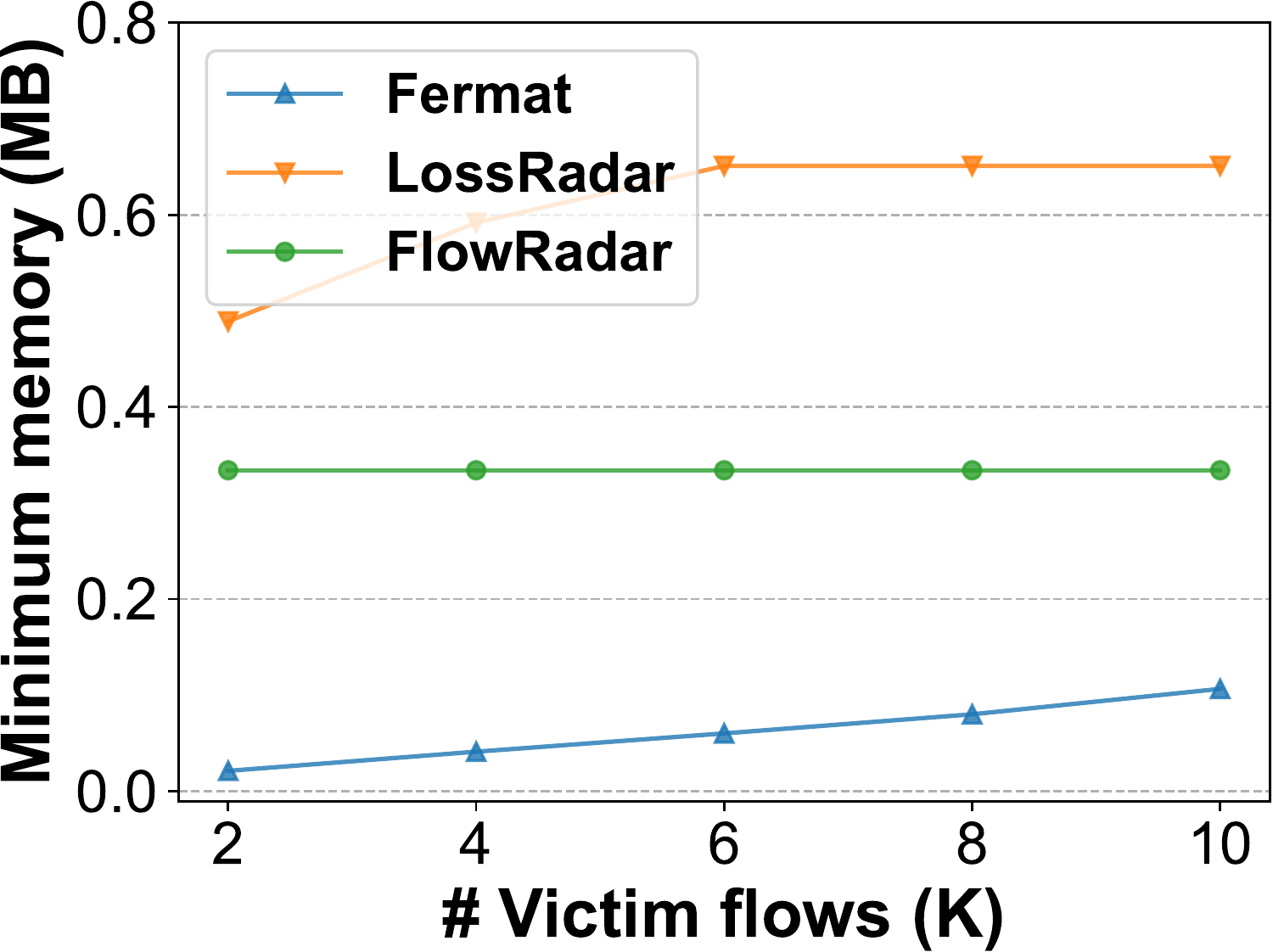}
            \label{fig:sketch:setting3:mem}
        \end{minipage}
    }
    \subfigure[Time overhead.]{
        \begin{minipage}[b]{0.21\textwidth}
            \includegraphics[width=\textwidth]{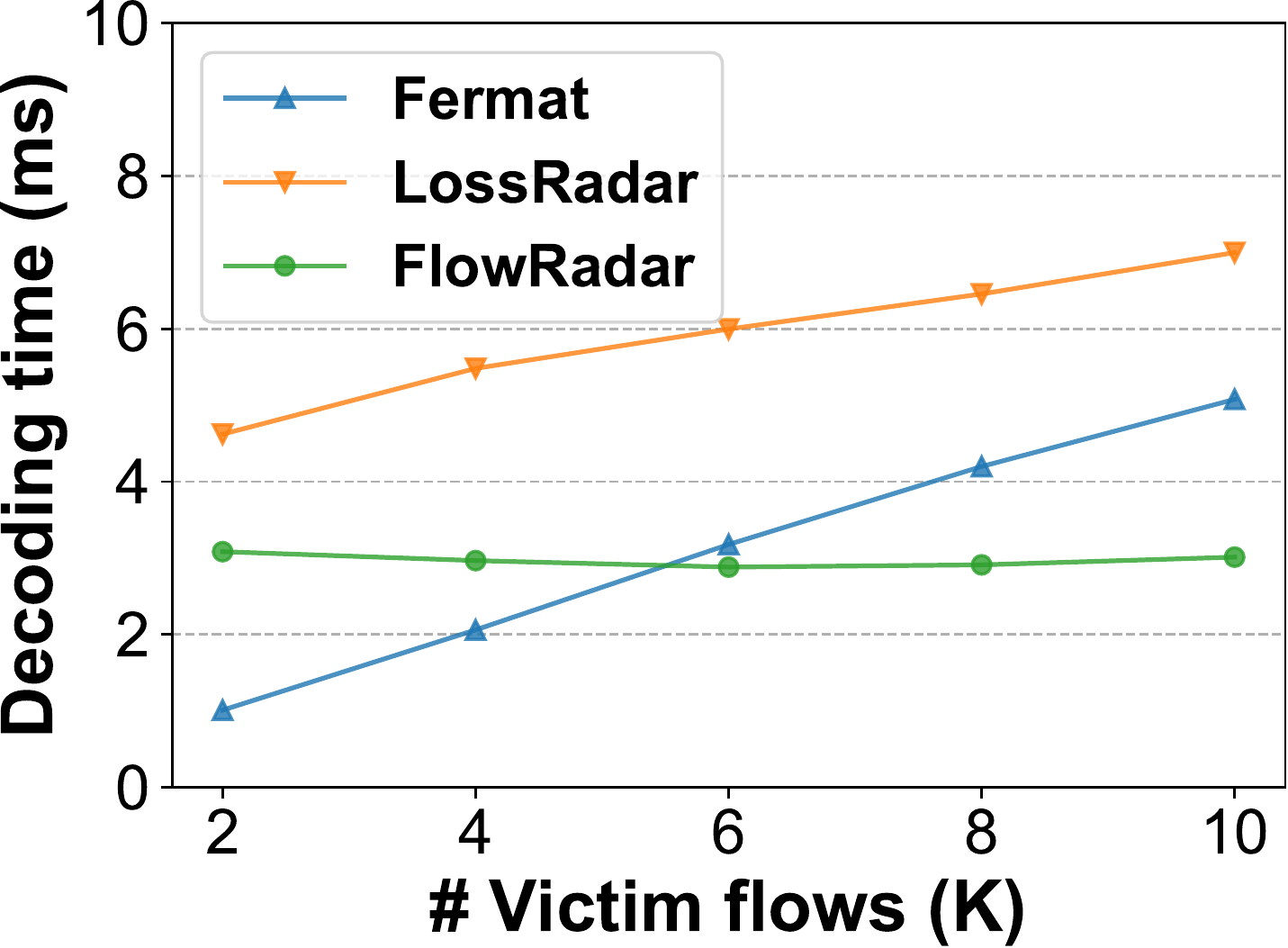}
            \label{fig:sketch:setting3:time}
        \end{minipage}
    }
\caption{Memory/Time overhead \textit{vs.} \# victim flows.}
\label{fig:sketch:setting1}
\end{figure}

\begin{figure}[t!]
\setlength{\subfigcapskip}{-0.5cm}
\setlength{\abovecaptionskip}{-0.1cm}
\setlength{\belowcaptionskip}{-0.7cm}
\centering
    \subfigure[Memory overhead.]{
        \begin{minipage}[b]{0.21\textwidth}
            \includegraphics[width=\textwidth]{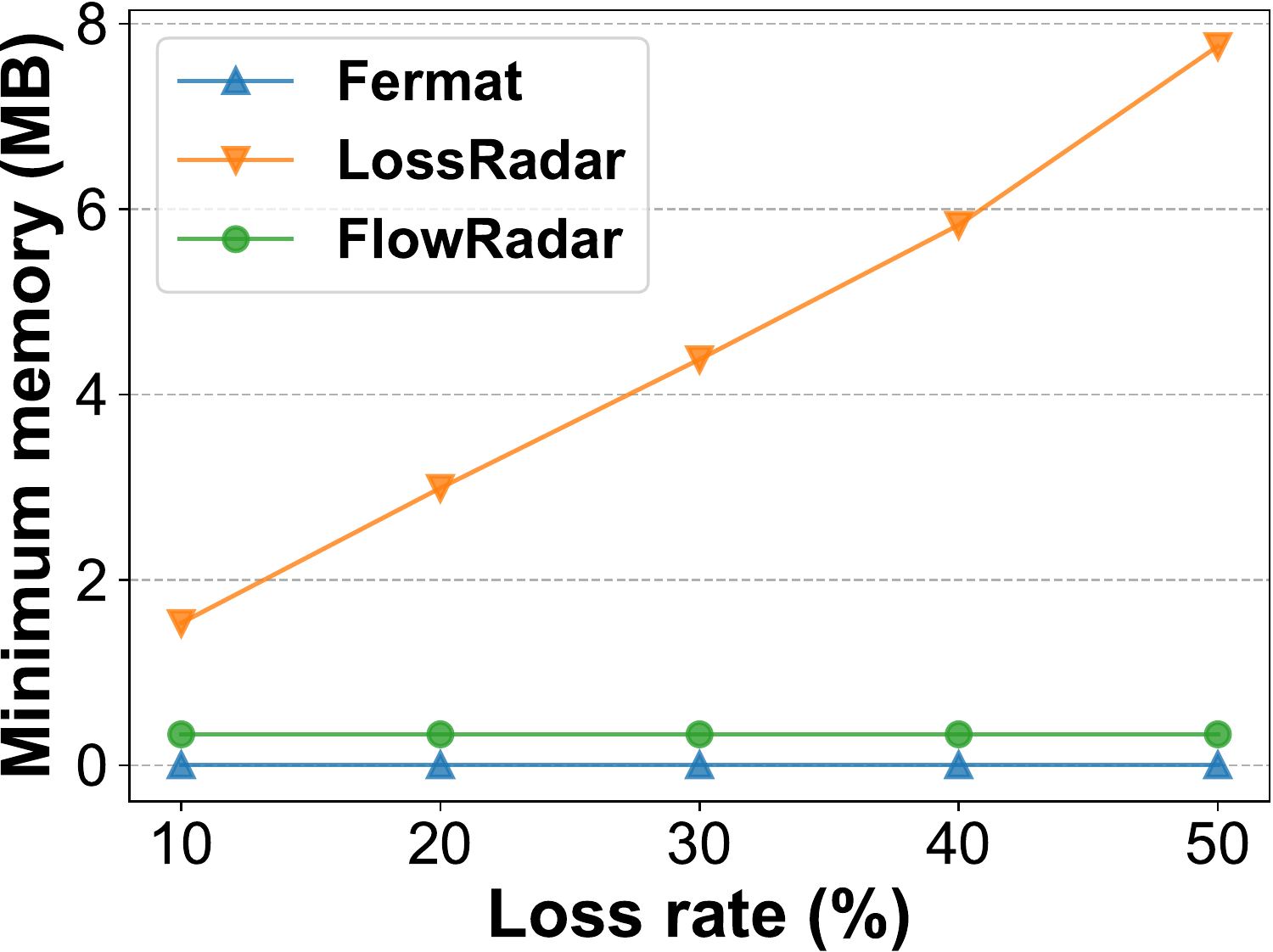}
            \label{fig:sketch:setting1:mem}
        \end{minipage}
    }
    \subfigure[Time overhead.]{
        \begin{minipage}[b]{0.21\textwidth}
            \includegraphics[width=\textwidth]{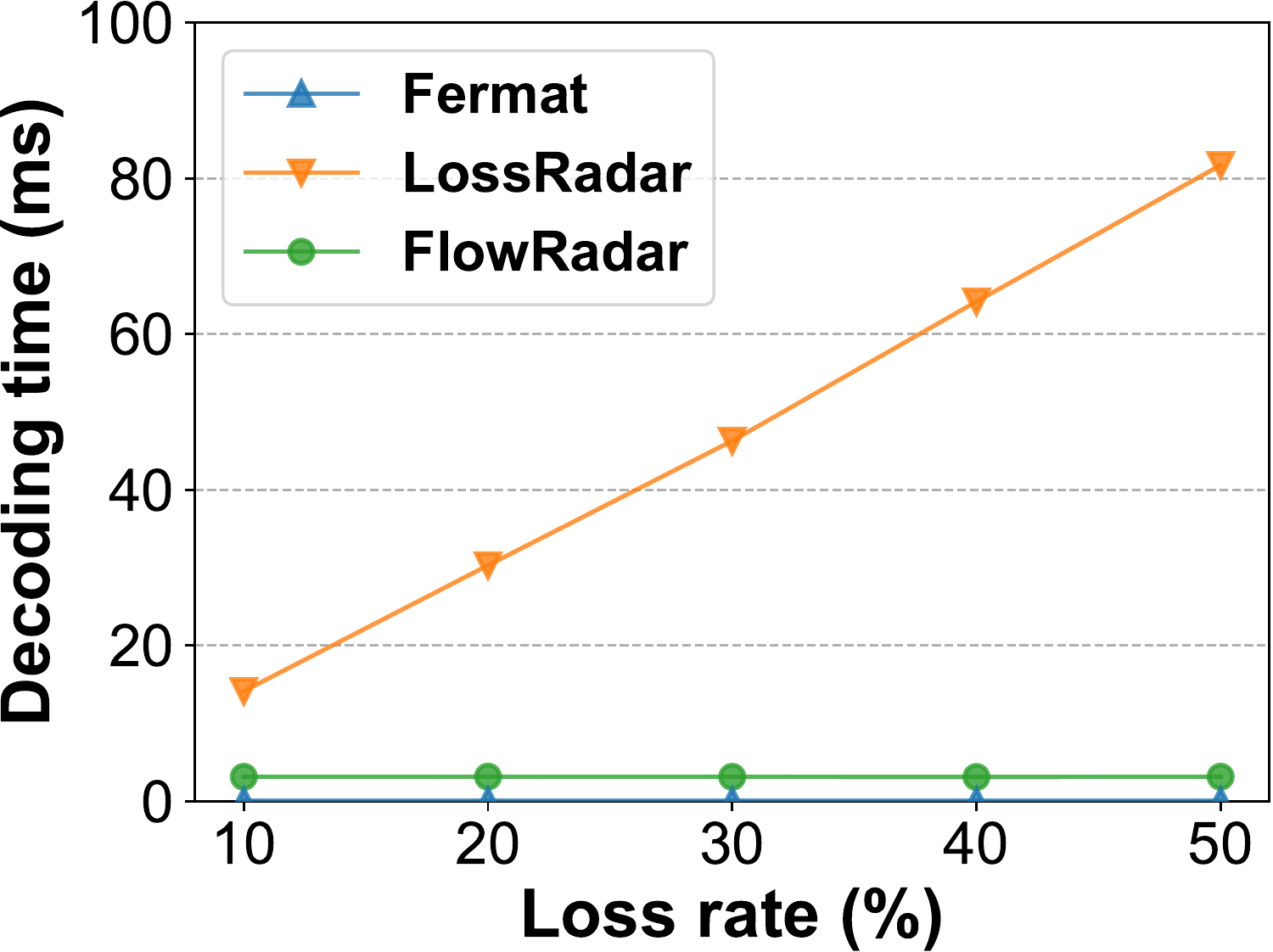}
            \label{fig:sketch:setting1:time}
        \end{minipage}
    }
\caption{Memory/Time overhead \textit{vs.} packet loss rate.}
\label{fig:sketch:setting2}
\end{figure}

\bbb{Workloads:}
We generate workloads consisting of UDP flows according to four widely used distribution: DCTCP \cite{dctcp}, HADOOP \cite{roy2015inside}, VL2 \cite{greenberg2009vl2} and CACHE \cite{atikoglu2012workload}.
We regard the distribution as known input to \systemname{}.
%
\textit{We use the 104-bit 5-tuple as the flow ID.} 
For each flow, We choose its source and destination IP address uniformly, and therefore each server sends and receives almost the same number of flows.
The packet sender and receiver are integrated into a program written in DPDK \cite{dpdk}. 
To manually control packet losses, we let switches proactively drop packets whose ECN fields are set to 1.
In this way, we can flexibly specify any flow as a victim flow and control its packet loss rate.
%
%
To avoid packet losses due to congestion, we set the size of every packet to 64 bytes regardless of its original size, so as to significantly reduce the traffic load in the network and eliminate congestion.
Such operation does not change the number of packets of each flow, and thus has no impact on the behavior of \systemname{}.

\bbb{Parameter settings:}
We set the epoch length to 50ms by default\footnote{For some workloads that cannot run out in 50ms, we extend the epoch length appropriately.}.
For the flow classifier, we set it to consist of an 8-bit counter array and a 16-bit counter array.
We set the number of 8-bit counters $w_1$ to $32768$ and the number of 16-bit counters $w_2$ to $16384$.
For the upstream flow encoder and downstream flow encoder, we set them to consist of 3 bucket arrays for the highest memory efficiency.
We set the number of buckets per array of the upstream flow encoder $m_{uf}$ to 4096, and that of the downstream flow encoder $m_{df}$ to 3072.
For the healthy network state, we fix the minimum memory reserved for HL encoders to a 3-array \sketchname{} with 512 buckets per array.
For the ill network state, we fix the upstream HH, HL, LL encoders to a 3-array \sketchname{} with 1024, 2560, and 512 buckets per array, respectively.
Please refer to Table \ref{tab:overhead} in Appendix \ref{sec:dataimpl} for resource usage on Tofino switches.

\begin{figure}[t!]
\setlength{\subfigcapskip}{-0.5cm}
\setlength{\abovecaptionskip}{-0.1cm}
\setlength{\belowcaptionskip}{-0.7cm}
\centering
    \subfigure[Memory overhead.]{
        \begin{minipage}[b]{0.21\textwidth}
            \includegraphics[width=\textwidth]{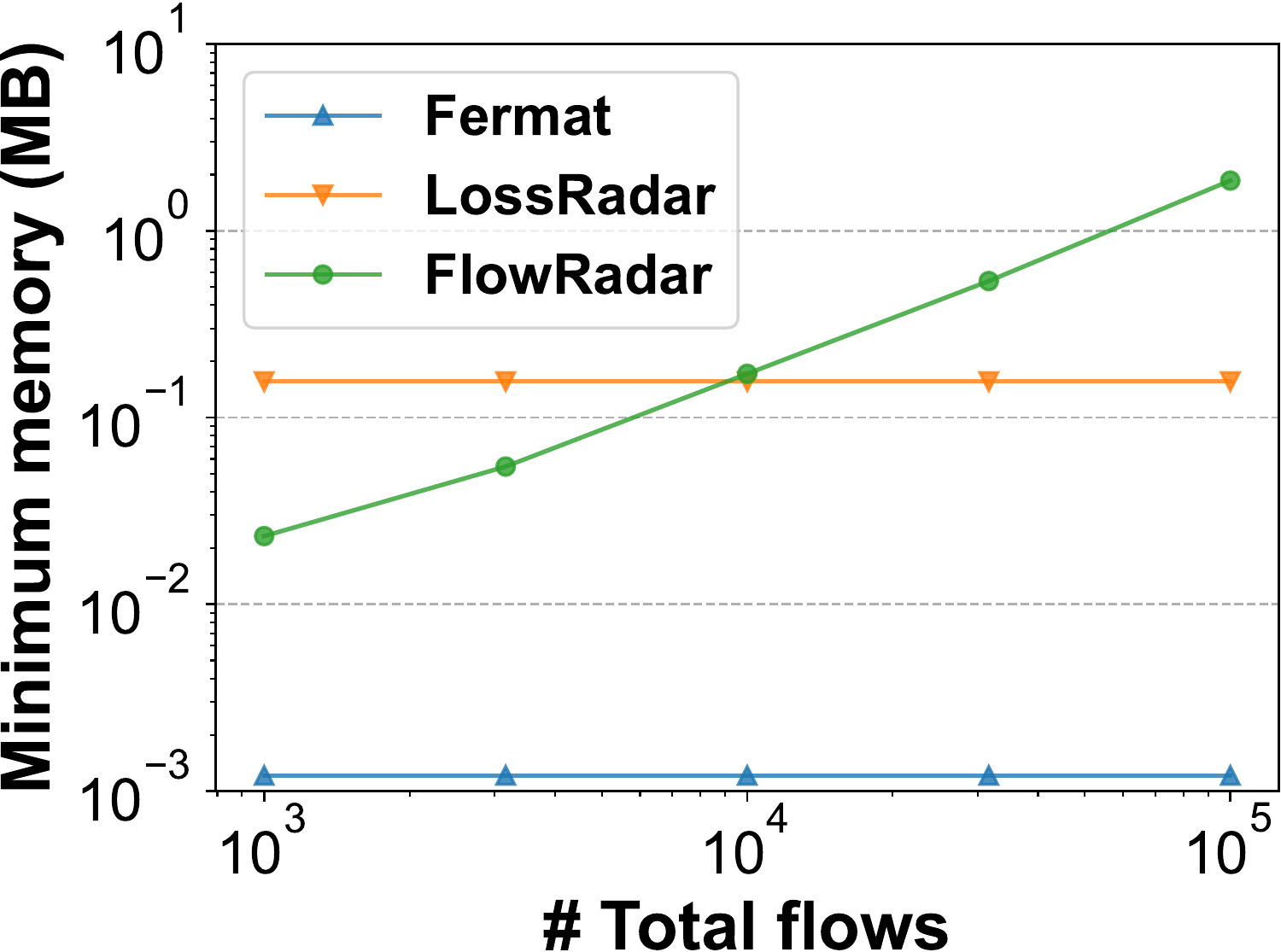}
            \label{fig:sketch:setting2:mem}
        \end{minipage}
    }
    \subfigure[Time overhead.]{
        \begin{minipage}[b]{0.21\textwidth}
            \includegraphics[width=\textwidth]{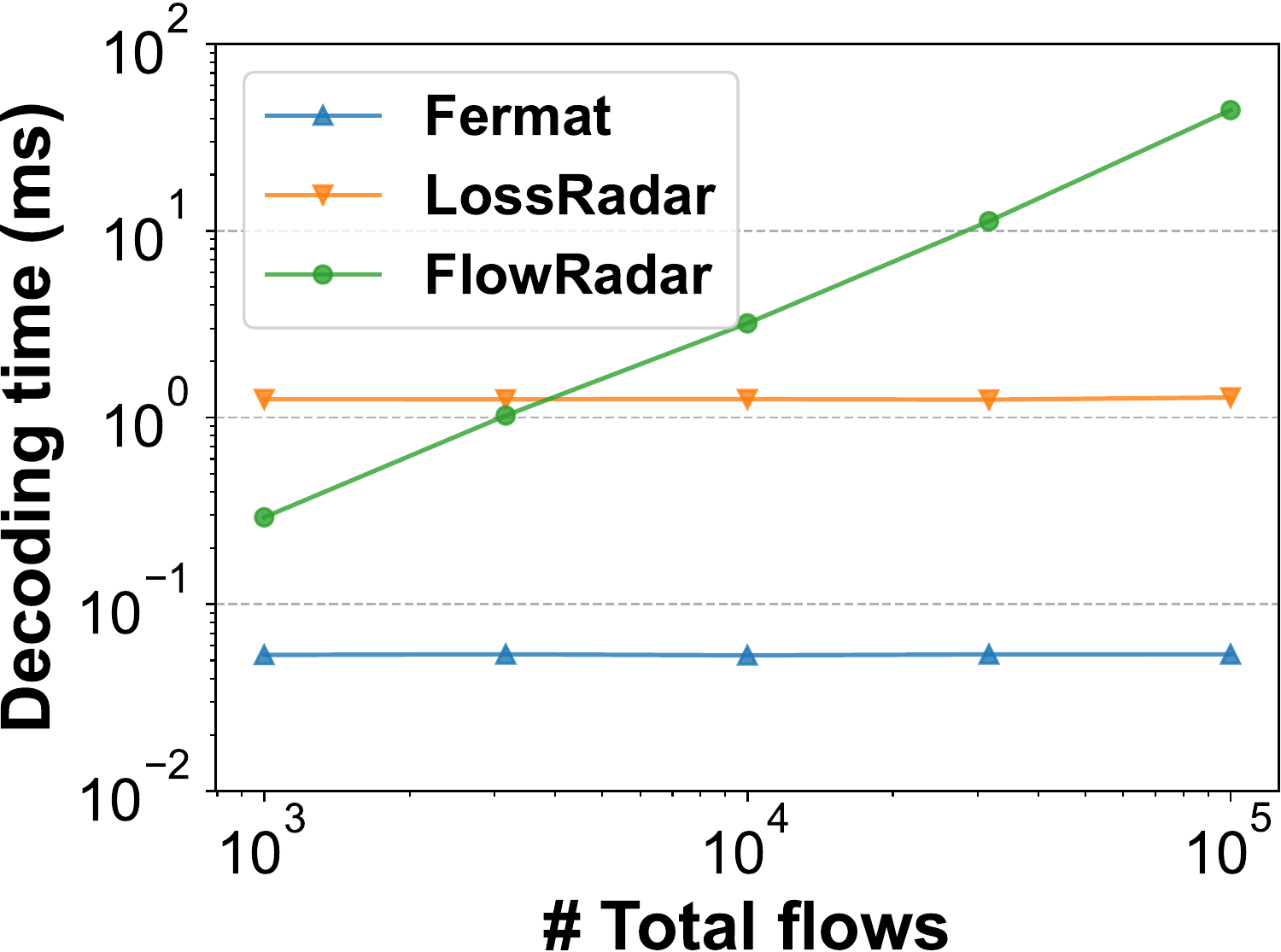}
            \label{fig:sketch:setting2:time}
        \end{minipage}
    }
\caption{Memory/Time overhead \textit{vs.} \# flows.}
\label{fig:sketch:setting3}
\end{figure}

\begin{figure*}[ht!]
\setlength{\subfigcapskip}{-0.1cm}
\setlength{\abovecaptionskip}{-0.1cm}
\setlength{\belowcaptionskip}{-0.4cm}
    \centering
    \subfigure[Memory division.]{
    \includegraphics[width=0.22\textwidth]{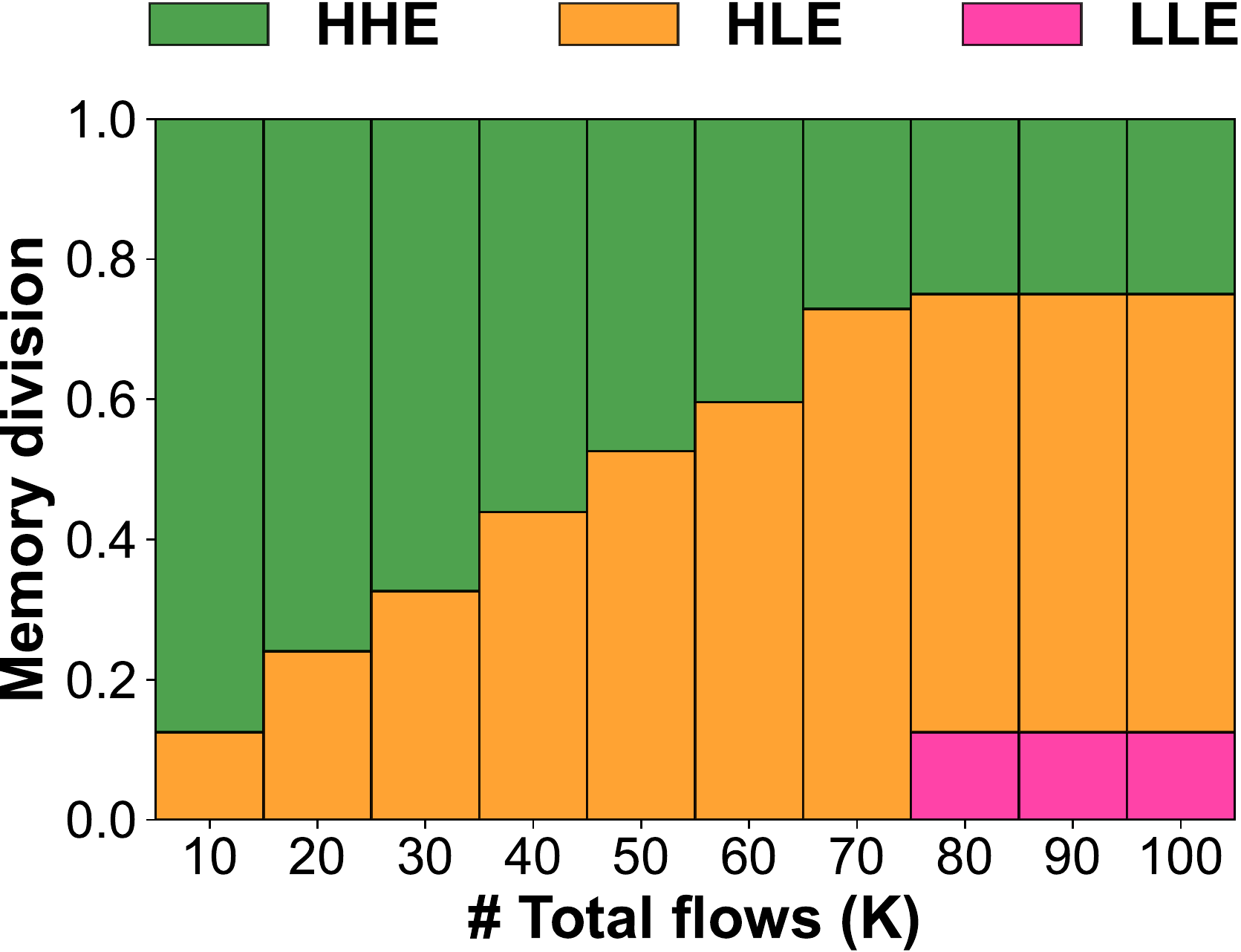}
    \label{fig:testbed:flownum:mem}
    }
    \subfigure[Number of decoded flows.]{
    \includegraphics[width=0.21\textwidth]{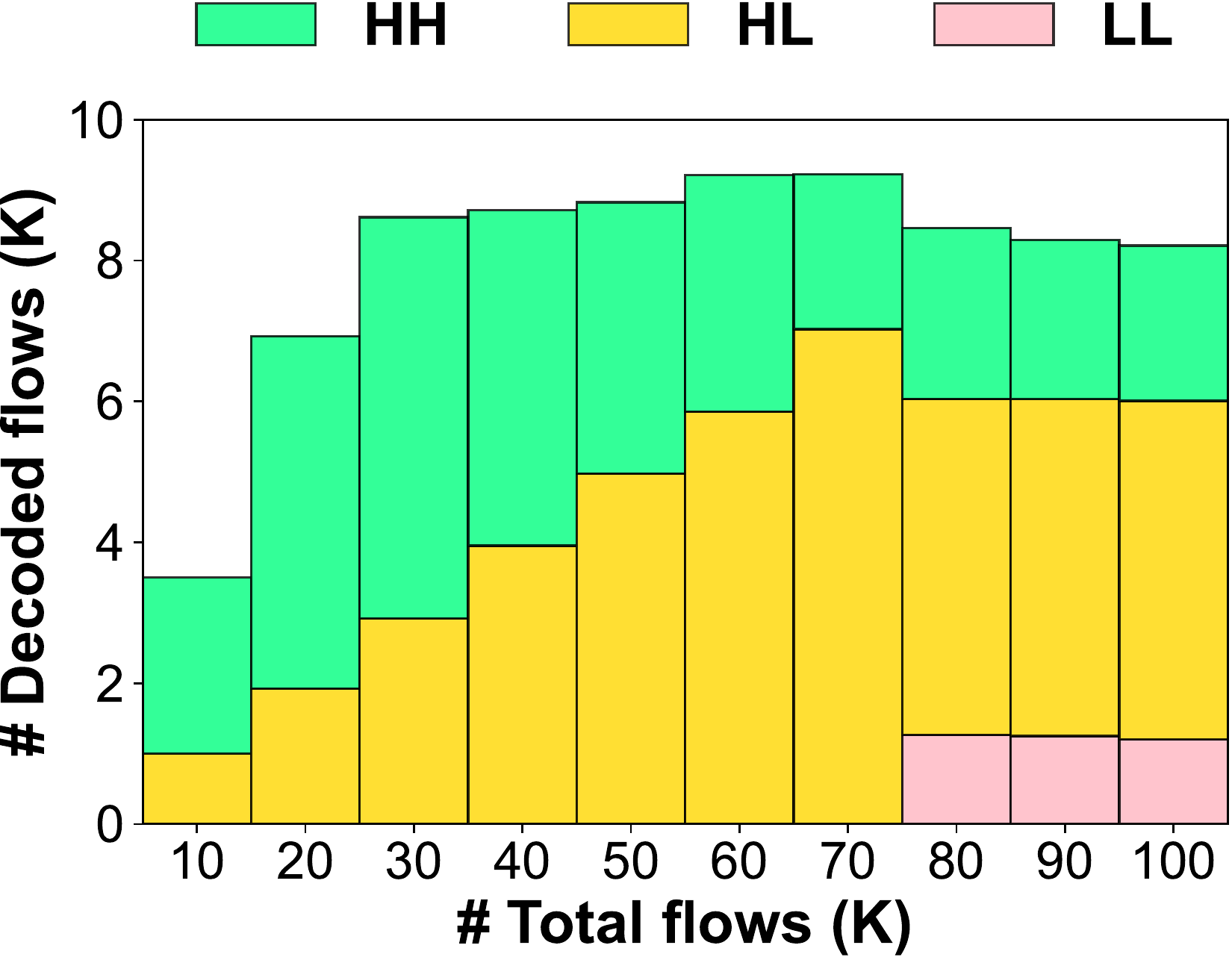}
    \label{fig:testbed:flownum:num}
    }
    \subfigure[Threshold.]{ \includegraphics[width=0.22\textwidth]{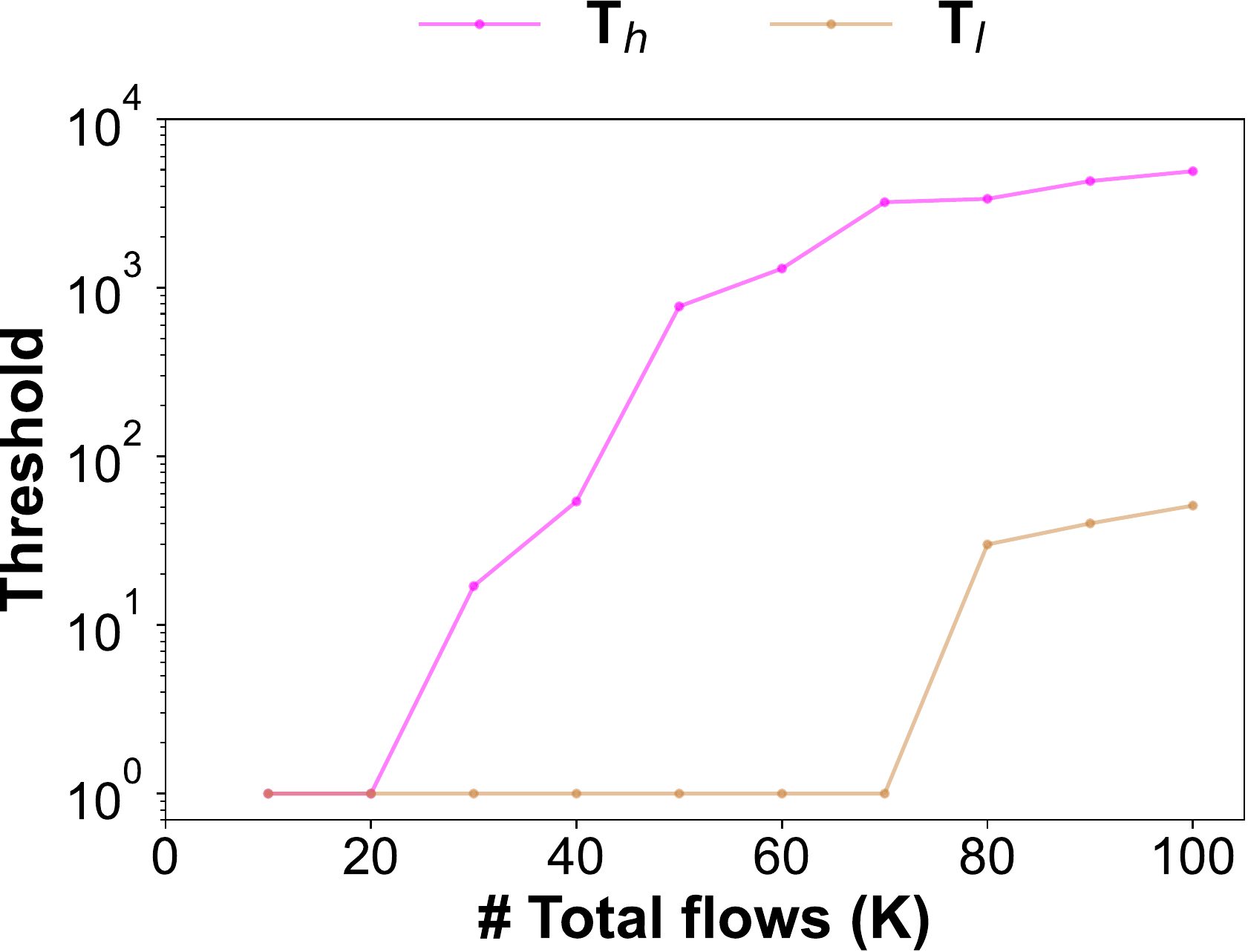}
    \label{fig:testbed:flownum:thresh}
    }
    \subfigure[Sample rate.]{ \includegraphics[width=0.22\textwidth]{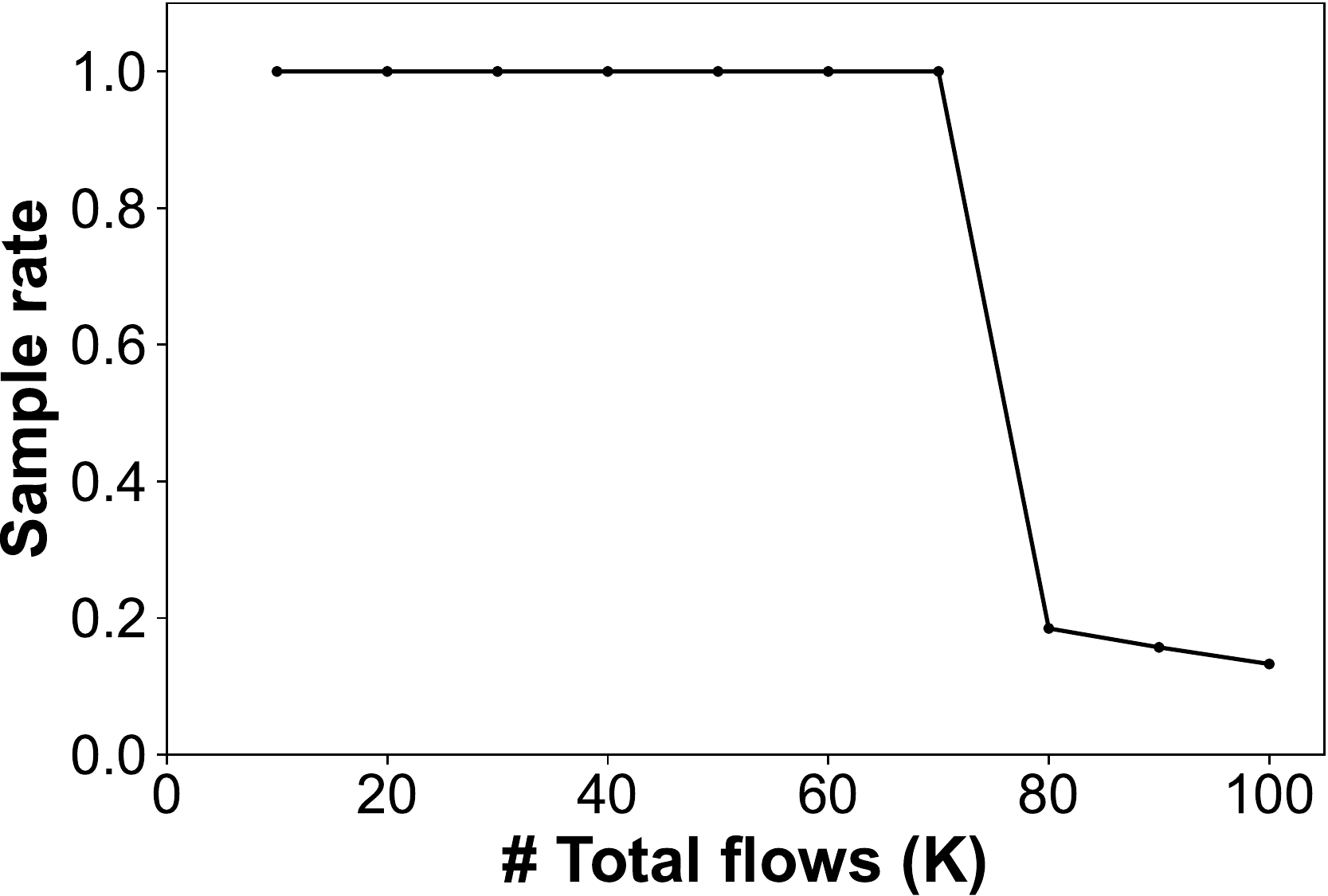}
    \label{fig:testbed:flownum:sample}
    }
    \caption{Measurement attention \textit{vs.} number of flows. \footnotesize{Figure \ref{fig:testbed:flownum:mem} depicts the memory division of HH encoder (HHE), HL encoder (HLE), and LL encoder (LLE) inside the upstream flow encoder. Figure \ref{fig:testbed:flownum:num} depicts the number of HH candidates of an edge switch, the number of HLs in the network, and the number of sampled LLs in the network.}}
\label{fig:testbed:flownum}
\end{figure*}

\begin{figure*}[ht!]
\setlength{\subfigcapskip}{-0.1cm}
\setlength{\abovecaptionskip}{-0.1cm}
\setlength{\belowcaptionskip}{-0.4cm}
    \centering
    \subfigure[Memory division.]{
    \includegraphics[width=0.22\textwidth]{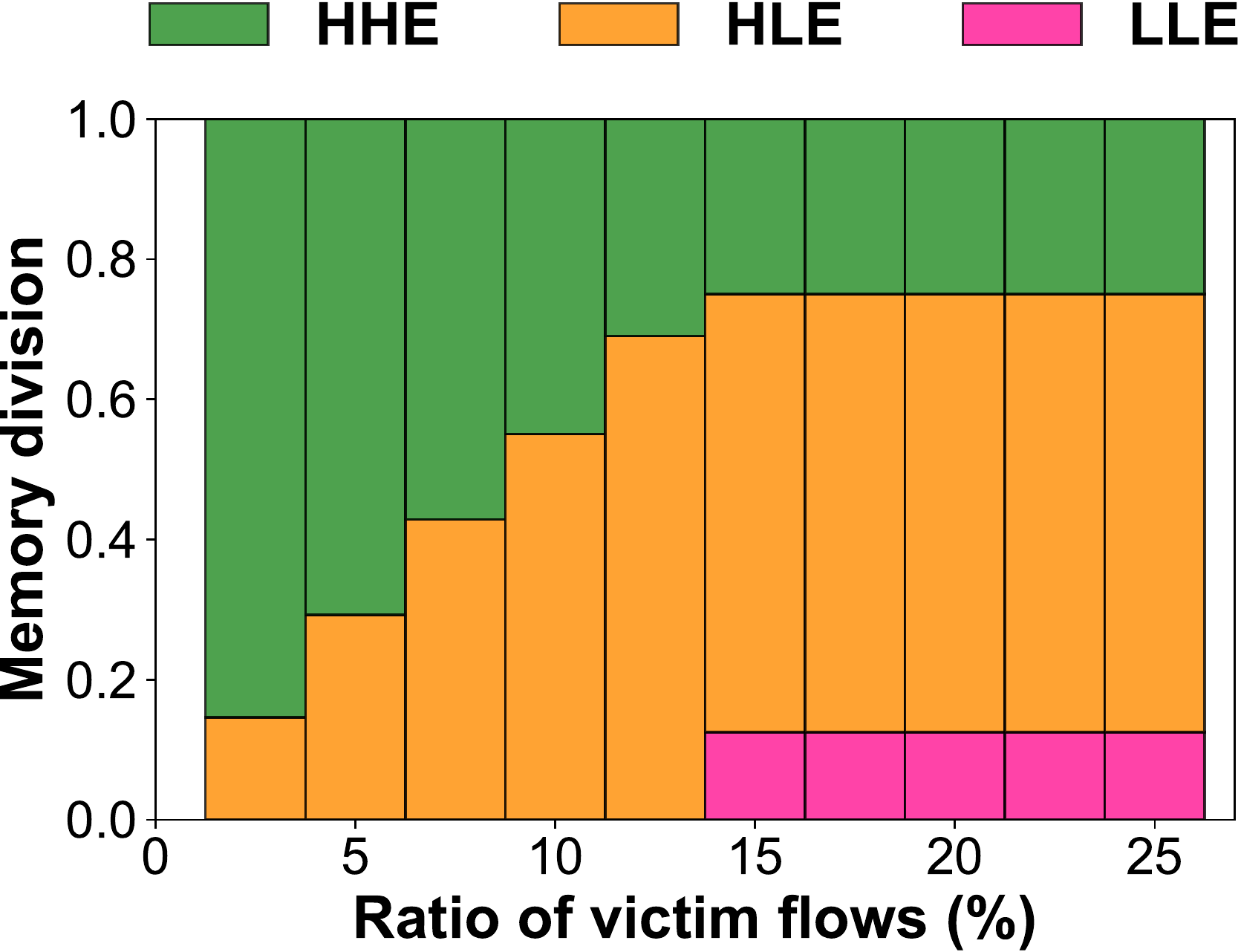}
    \label{fig:testbed:lossrate:mem}
    }
    \subfigure[Number of decoded flows.]{
    \includegraphics[width=0.21\textwidth]{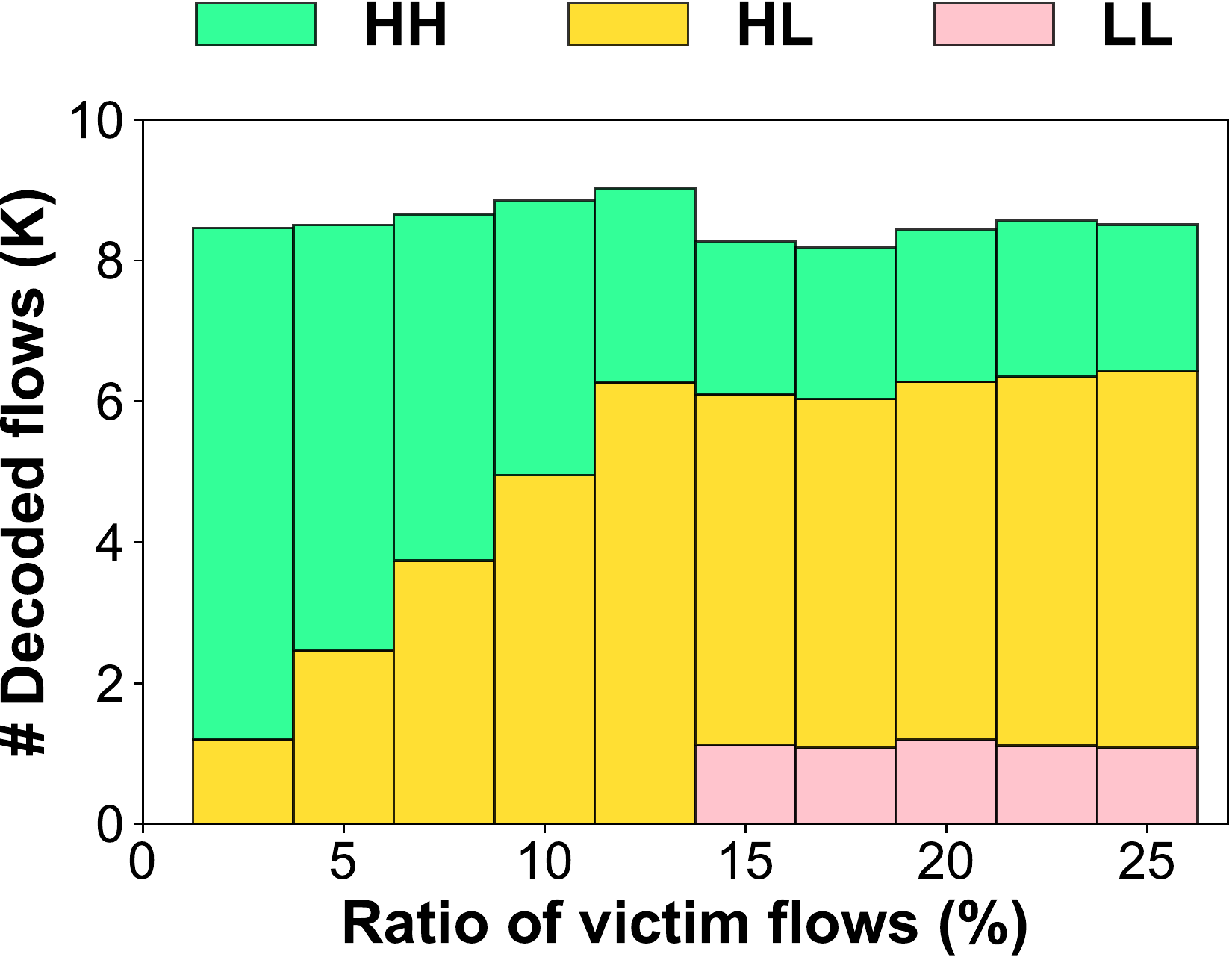}
    \label{fig:testbed:lossrate:num}
    }
    \subfigure[Threshold.]{ \includegraphics[width=0.22\textwidth]{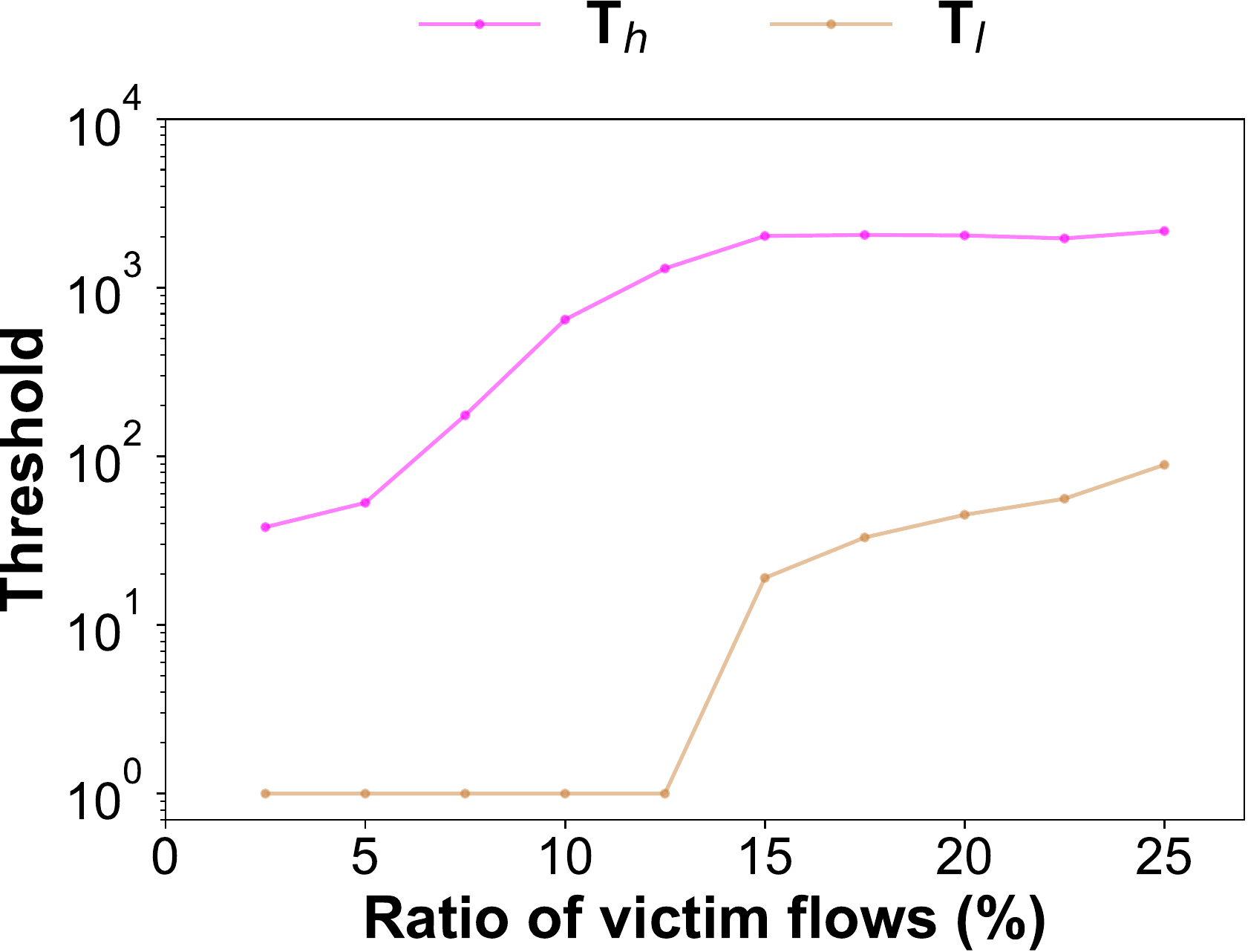}
    \label{fig:testbed:lossrate:thresh}
    }
    \subfigure[Sample rate.]{ \includegraphics[width=0.22\textwidth]{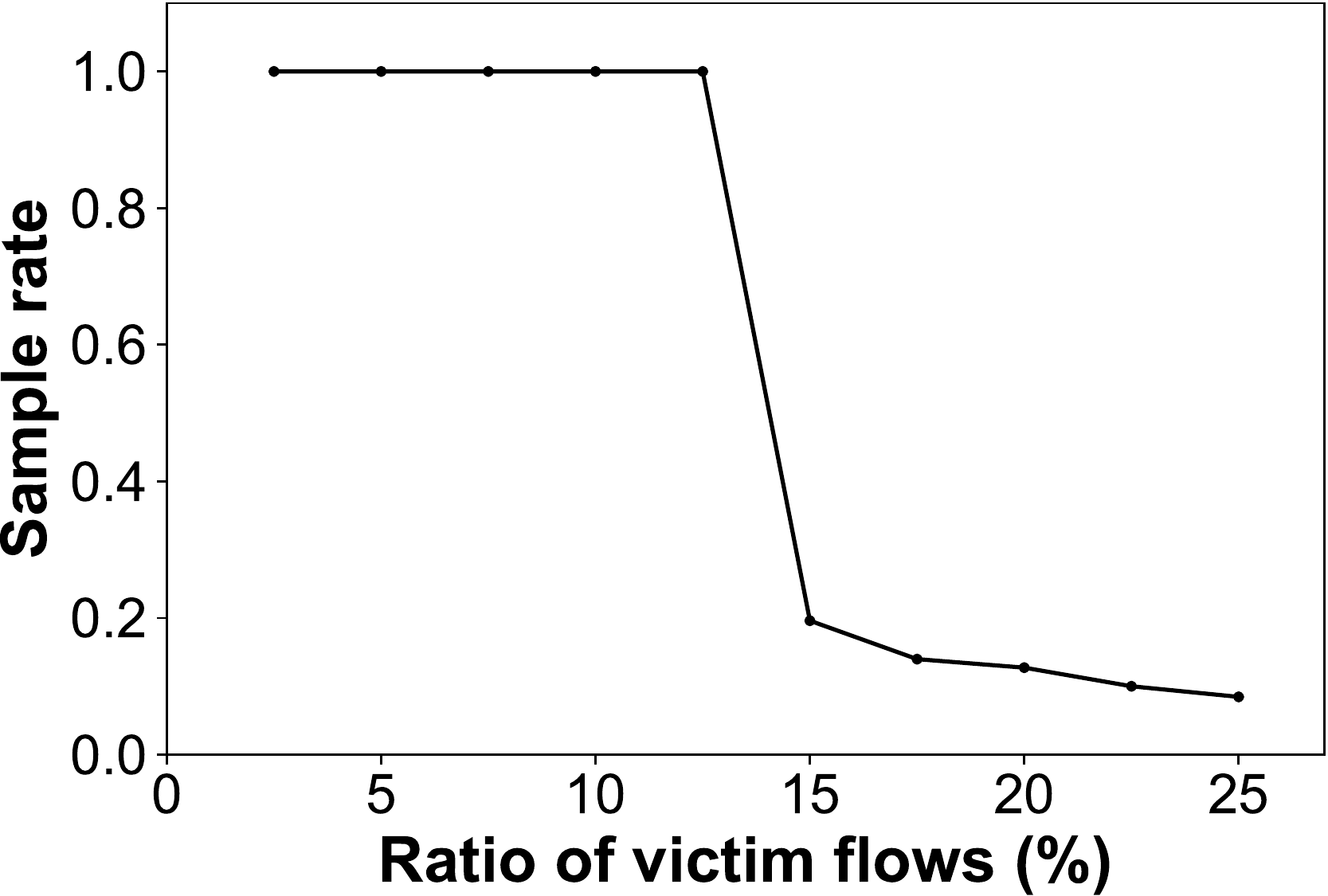}
    \label{fig:testbed:lossrate:sample}
    }
\caption{Measurement attention \textit{vs.} ratio of victim flows.}
\label{fig:testbed:lossrate}
\end{figure*}

First, on DCTCP workload, we evaluate whether \systemname{} can automatically shift measurement attention for different network states\footnote{For each data point of Figure \ref{fig:testbed:flownum}-\ref{fig:testbed:lossrate}, we collect it after \systemname{} successfully shifts measurement attention and the configuration of the \systemname{} data plane is stable.}.
For experimental results on the other three workloads, please refer to Appendix \ref{sec:workload}.

\bbb{Measurement attention \textit{vs.} number of flows (Figure~\ref{fig:testbed:flownum}):}
Experimental results show that \systemname{} can automatically shift measurement attention to packet loss detection while maintaining high memory utilization, as the number of flows increases and the number of victim flows increases.
%
We vary the number of flows in the network from 10K to 100K, and fix the ratio of victim flows to $10\%$.
At first, the network state is healthy.
As the number of flows increases from $10$K to $20$K, \systemname{} can record all flows and victim flows, and therefore sets both $T_h$ and $T_l$ to $1$.
As the number of flows increases from $30$K to $70$K, \systemname{} records all victim flows by allocating more and more memory to HL encoders.
However, \systemname{} cannot record all flows, and thus increases $T_h$ to decrease the number of HH candidates.
As the number of flows increases from $80$K to $100$K, \systemname{} cannot record all victim flows, and thus the network state transitions to the ill network state.
\systemname{} allocates fixed memory to LL encoders, increases $T_l$, and decreases the sample rate, so as to control the number of HLs and sampled LLs.
Meanwhile, \systemname{} keeps increasing $T_h$ to control the number of HH candidates.
Throughout the experiment, \systemname{} maintains high memory utilization.
The sum of decoded flows (Figure \ref{fig:testbed:flownum:num}) always exceeds 8K unless \systemname{} can record all flows and victim flows, representing a load factor larger than $65\%$ given that the upstream flow encoder has 12288 buckets.
It is acceptable considering that the target load factor of \systemname{} is 70\% and maximum load factor is $\frac{1}{1.23} = 81.3\%$.

\bbb{Measurement attention \textit{vs.} ratio of victim flows (Figure~\ref{fig:testbed:lossrate}):}
Experimental results show that \systemname{} can automatically shift measurement attention to packet loss detection while maintaining high memory utilization, as the ratio of victim flows increases and the number of victim flows increases.
We fix the number of flows to $50$K, and vary the ratio of victim flows from $2.5\%$ to $25\%$.
At first, the network state is healthy.
%
As the ratio of victim flows increases from $2.5\%$ to $12.5\%$, \systemname{} records all victim flows by allocating more and more memory to HL encoders, and increases $T_h$ to decrease the number of HH candidates.
As the ratio of victim flows increases from $15\%$ to $25\%$, \systemname{} cannot record all victim flows, and thus the network state transitions to the ill network state.
\systemname{} allocates fixed memory to LL encoders, increases $T_l$, and decreases the sample rate, so as to control the number of HLs and sampled LLs.
Meanwhile, because the memory of upstream HH encoder and the number of flows remain unchanged, $T_h$ also remains unchanged.
%
Throughout the experiment, \systemname{} maintains high memory utilization.
The sum of decoded flows (Figure \ref{fig:testbed:lossrate:num}) always exceeds 8K, representing a load factor larger than $65\%$.
It is acceptable considering that the target load factor of \systemname{} is 70\% and maximum load factor is 81.3\%.

Second, on DCTCP workload, we evaluate how fast can \systemname{} shift measurement attention over a large time window, in which the network state changes 8 times.

\begin{figure}[t]
\setlength{\subfigcapskip}{-0.5cm}
\setlength{\abovecaptionskip}{0.05cm}
\setlength{\belowcaptionskip}{-0.55cm}
    \centering  
\includegraphics[width=0.85\linewidth]{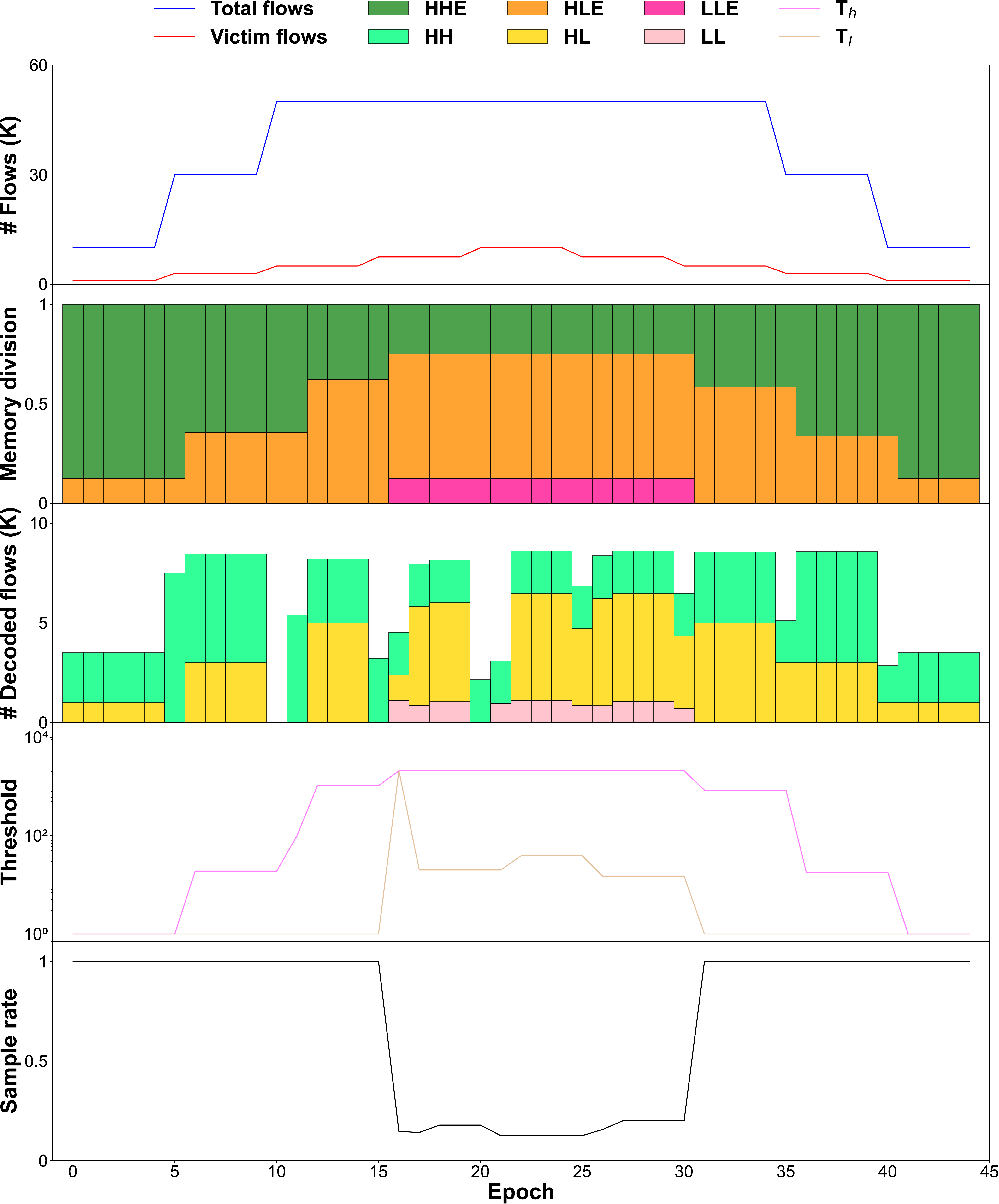}
    \caption{Measurement attention \textit{vs.} epoch.}
    \label{fig:testbed:shift-process}
\end{figure}

\bbb{Measurement attention \textit{vs.} epoch (Figure~\ref{fig:testbed:shift-process}):}
Experimental results show that \systemname{} can shift measurement attention within at most 3 epochs.
Figure~\ref{fig:testbed:shift-process} plots the shift of measurement attention in a large time window consisting of 45 consecutive epochs.
We change the network state (either the number of flows or the victim flow ratio) every 5 epochs, and the detailed settings are shown in the top sub-figure.
Overall, the network state first degrades from the healthy network state to the ill network state, and then improves back to the healthy network state.
For the eight changes, \systemname{} shifts measurement attention within 1 (6->7), 2 (11->13), 3 (16->19), 2 (21->23), 2 (26->28), 1 (31->32), 1 (36->37), and 1 (41->42) epochs, respectively. 


To evaluate how fast can \systemname{} monitor the network, we evaluate various factors that could affect the setting of epoch length: 1) the time and bandwidth required to collect sketches from edge switches, 2) the time required to respond to different network states, and 3) the time required to reconfigure the \systemname{} data plane.
We detail the corresponding experimental results in Appendix \ref{app:timeoverhead}.
In summary, we find that \systemname{} can monitor the network every 50ms with only one CPU core while consuming only 0.8\% bandwidth of a 40Gb NIC.
Thus, we believe \systemname{} can easily scale to monitor a much larger network with a shorter epoch length, requiring only one server as the central controller.

\presec
\section{Related Work} \postsec
\label{sec:relatedworks}

First, we discuss prior art for packet loss tasks and packet accumulation tasks.
Then, we discuss prior art for resource management on switches.
We further discuss other prior art relevant to network measurement in Appendix \ref{app:related}.

\bbb{Prior art for packet loss tasks:}
They can be classified into two kinds of solutions.
The first kind is algorithm-oriented solutions, including LossRadar \cite{lossradar2016} based on Invertible Bloom filter (IBF) \cite{eppstein2011s}.
LossRadar can pinpoint the location of every lost packet and infer the root causes of packet losses by deploying IBF to monitor every link in the network.
The second kind is system-oriented solutions, including Netseer \cite{netseer2020}, PacketScope \cite{packetscope2020} and Dapper \cite{dapper} that are based on programmable switches.
%
Among them, Netseer utilizes the programmable data plane to detect both intra-switch and inter-switch packet losses, associate packet losses with direct causes, and batch lost packets to further reduce bandwidth overhead.
%
%
Both the above kinds of solutions are designed for only obtaining the exact difference set of flows/packets.
Therefore, they fail to meet versatility requirement as they can hardly be extended to packet accumulation tasks that require approximate flow sizes of all flows or simply large flows.

\bbb{Prior art for packet accumulation tasks:}
They can be classified into two kinds of solutions.
The first kind is sketches designed for specific packet accumulation task, including HashPipe \cite{sivaraman2017heavy}, R-HHH \cite{ben2017constant}, and more \cite{ben2016heavy,ben2018efficient,yang2019heavykeeper,schweller2004reversible}.
Among them, HashPipe designs a multi-stage data structure and kicks out small flows through comparison.
The second kind is sketches that support many packet accumulation tasks, including UnivMon \cite{univmon2016}, ElasticSketch \cite{elastic2018}, CocoSketch \cite{zhang2021cocosketch}, and more \cite{nitrosketch2019,sketchlearn2018,huang2017sketchvisor,beaucoup2020,cmsketch,cusketch,countsketch}.
%
Among them, CocoSketch proposes a key technique, namely stochastic variance minimization technique, to provide unbiased estimation for arbitrary partial key.
Both the above kinds of solutions choose to embrace hash collisions and provide approximate flow sizes for higher memory efficiency.
Therefore, they fail to meet versatility requirement as they can hardly be extended to packet loss tasks that require exact difference set of flows/packets.

\bbb{Prior art for both kinds of tasks:}
These solutions record the IDs and sizes of all flows in a zero-error manner.
Typical solutions include FlowRadar \cite{flowradar2016}, OmniMon \cite{omnimon2020}, Counter Braids \cite{lu2008counter}, Marple \cite{marple}, and more \cite{huang2021toward}.
Among them, FlowRadar encodes the IDs and sizes of all flows into a variant of IBLT \cite{goodrich2011invertibleIBLT} in switches, and then executes well-designed decoding schemes to retrieve exact flow IDs and sizes.
Marple designs a query language for a wide range of network measurement tasks, which relies on the programmable key-value store in switch hardware.
Marple requires an additional backing store to handle evicted flows.
These solutions fail to meet efficiency requirement as they record the exact IDs and sizes of all flows, incurring memory/bandwidth overhead linear with the number of flows.
Besides, \textit{INT-based solutions} that carry desired statistics in packet headers can potentially support both tasks given packet-level visibility.
Typical solutions include INT \cite{kim2015band}, PINT \cite{ben2020pint}, NetSight \cite{handigol2014know}, and more \cite{zhao2021lightguardian, sheng2021deltaint,yang2021sketchint,am-pm,sonchack2018scaling}.
However, INT-based solutions suffer from granularity-cost trade-off, and thus fail to meet either versatility requirement or efficiency requirement.

\bbb{Prior art for resource management:}
Due to the limited resources in hardware such as programmable switches, many solutions focus on resource management in measurement.
%
%
Some solutions \cite{p4all, opensketch2013, agarwal2022heterosketch} target at compile-time resource management.
Among them, HeteroSketch \cite{agarwal2022heterosketch} optimizes network-wide measurement by automatically optimizing the placement of sketches on heterogeneous devices.
These solutions differ from \systemname{} as \systemname{} executes memory reallocation at run-time.
Other solutions focus on run-time resource management \cite{misa2022dynamic, zheng2022flymon, sonata2018, marple,dream,moshref2015scream}.
Among them, FlyMon \cite{zheng2022flymon} achieves run-time reconfiguration of measurement tasks and resources.
However, these solutions does not focus on the resource management between packet loss tasks and packet accumulation tasks, which is our main focus.

\vspace{-0.15cm}

\section{Conclusion}

In this paper, we present \systemname{}, which can automatically shift measurement attention as network state changes at run-time.
To achieve this, \systemname{} designs \sketchname{} to supports both packet loss tasks and packet accumulation tasks simultaneously.
We have fully implemented a \systemname{} prototype on a testbed consisting of 10 Tofino switches and 8 end-hosts. 
Experimental results on our testbed verify that 1) \systemname{} can achieve high accuracy in packet loss detection and six packet accumulation tasks; 2) \systemname{} can monitor the network every 50ms and shift measurement attention within at most 3 epochs as network state changes.
%

%
\section*{acknowledgement}
    We would like to thank the anonymous reviewers for their valuable suggestions. This work is supported by National Key R\&D Program of China (No. 2022YFB2901504), and National Natural Science Foundation of China (NSFC) (No. U20A20179).

\clearpage
    {
        \bibliographystyle{unsrt}
	    \bibliography{NSDI2022_9/reference2.bib}
    }
	
    \vfill\eject

\clearpage
\appendix
    \presec
\section{The \sketchname{} Algorithm}
\postsec
\label{sec:appendix:sketch}

\subsection{Pseudo-Code}
\label{sec:pseudo}
\begin{algorithm}[h!]
\renewcommand\baselinestretch{0.9}\selectfont
    \caption{Encoding/Insertion operation of \sketchname{}}
 \label{alg:insert}
 \KwIn{Flow ID $f$}
 \For {$i \in [1, d]$}
 {
     $j=h_i(f)$;\\
     $\mathcal{B}^{ID}_i[j]=(\mathcal{B}_i^{ID}[j]+f)$ mod $p$;\\
     $\mathcal{B}_i^c[j]++$;
 }
\end{algorithm}

\begin{algorithm}[h!]
\renewcommand\baselinestretch{0.9}\selectfont
    \caption{Decoding operation of \sketchname{}}
\label{alg:decode}
\SetKwProg{Fn}{Function}{:}{}
\SetKwFunction{FIsPure}{IsPure}
\SetKwFunction{FDelete}{Delete}
\SetKwFunction{FDecode}{Decode}
\Fn{\FIsPure{$i, j$}}
{
    $f=(\mathcal{B}_i^{ID}[j]\times(\mathcal{B}^c_i[j])^{(p - 2)})$ mod $p$;\\
    \KwRet $j==h_{i}(f)$;\\
}
\SetKwProg{Fn}{Function}{:}{}
\Fn{\FDelete{$\mathcal{B}_{i'}[j'], \mathcal{B}_i[j]$}}
{
    $\mathcal{B}_{i'}^{c}[j'] = \mathcal{B}_{i'}^{c}[j'] - \mathcal{B}_i^{c}[j]$\;
    $\mathcal{B}_{i'}^{ID}[j'] = (\mathcal{B}_{i'}^{ID}[j']-\mathcal{B}_i^{ID}[j])$ mod $p$\;
}
\SetKwProg{Fn}{Function}{:}{}
\Fn{\FDecode{}}{
    $Queue$ is an empty queue\;
    $Flowset$ is an empty map\;
    \For{$i\in[1, d], j\in[1, w]$}{
        \If{$\mathcal{B}_i^{c}[j]!=0$}{
            $Queue$.\rm push$(\mathcal{B}_i[j])$\;
        }
    }
    \While{\rm!$Queue.$\rm empty()}{
    $\mathcal{B}_{i}[j]=Queue$.front();\\
    $Queue$.pop()\;
    \If{\FIsPure{$i, j$}}{
        $f'=(\mathcal{B}_i^{ID}[j]\times(\mathcal{B}^c_i[j])^{(p - 2)})$ mod $p$\;
        $Flowset[f']=Flowset[f']+\mathcal{B}^c_{i}[j]$\;
        \For{$i'\in [1,d]$}{
            \FDelete($\mathcal{B}_{i'}[h_{i'}(f')],\mathcal{B}_{i}[j]$);\\
            \If{$\mathcal{B}_{i'}^{c}[h_{i'}(f')]!=0$}{
                $Queue$.push($\mathcal{B}_{i'}[h_{i'}(f')]$);
            }
        }
        }
    }
    \Return{Flowset}
}
\end{algorithm}

\subsection{Discussion of FermatSketch}
\label{sec:dicussion}
\bbb{Eliminating false positives during decoding:}
Due to hash collisions, the rehashing verification will inevitably misjudge some impure buckets as pure buckets with false positive rate $\frac{1}{m}$.
Such misjudgement will lead to extraction of flows that are not inserted into \sketchname{}, and finally could hinder the decoding.
From another point of view, extracting a flow from such a misjudged impure bucket, \ie, false positive, equals to inserting a wrong flow with a negative size into \sketchname{}.
The decoding operation can automatically eliminate these false positives: in the decoding procedure, these inserted wrong flows could also be extracted and deleted from \sketchname{}, and then the impact caused by the false positives disappears.

\bbb{Space complexity:} 
Suppose \sketchname{} is large enough, and the pure bucket verification has negligible false positive rate.
The decoding operation is almost the same as that of IBLT \cite{goodrich2011invertibleIBLT}, which is exactly the procedure used to find the 2-core of a random hypergraph \cite{dietzfelbinger2010tight,molloy2004pure}.
Therefore, the memory overhead of \sketchname{} is proportional to the number of inserted flows $M$, \ie, $\Theta(M)$. 
\sketchname{} also shares similar properties with IBLT: the number of hash functions, \ie, the number of the bucket arrays $d$, is recommended to set to $3$ for the highest memory efficiency, that on average $1.23$ buckets can record a flow and its size.
%


%
%

\bbb{Time complexity of decoding operation:}
Suppose \sketchname{} is large enough and the false positive rate in pure bucket verification is negligible.
In step\noindent\circled{1}, we traverse \sketchname{} and push all non-zero buckets into the decoding queue.
The number of these buckets is at most $md$, and thus the time complexity of step\noindent\circled{1} is $O(md)$.
In the rest steps, we process all the buckets pushed into the queue, which consists of two parts: 1) the $md$ buckets pushed into in step\noindent\circled{1}, and 2) the mapped buckets except the popped pure bucket of each extracted flow.
Considering that the number of extracted flows is bounded by the number of buckets of \sketchname{}, \ie, $md$, the number of buckets of the second part is $O(m d^2)$.
Therefore, the time complexity of the rest steps is $O(m d^2)$.
Adding up the time complexities of all steps, the time complexity of decoding operation is $O(m d^2)$.

\subsection{Proof of Theorem \ref{math:sketch}}
\label{app:math}
\begin{theorem}
\label{math:sketch}
Let \sketchname{} consists of $d$ bucket arrays, each of which consists of $m$ buckets.
Let $M$ be the number of flows inserted into that \sketchname{}.
Suppose $m d > c_d M+\epsilon$ and $M= \Omega (d^{4d}log^d(md))$.
the decoding of \sketchname{} fails with probability $O(\frac{1}{M^{d-2}})$, where both $\epsilon$ and $c_d$ are small constants,
$$c_d=\left(sup\left\{\alpha \Big| \alpha \in (0,1), \forall x\in(0,1), 1-e^{-d\alpha x^{d-1} } \right\}\right)^{-1}$$
For example, $c_3=1.23, c_4=1.30, c_5=1.43.$
\end{theorem}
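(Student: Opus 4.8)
The plan is to reduce the decoding of \sketchname{} to the leaf-removal (peeling) process on a random $d$-uniform hypergraph and then import the tight threshold and failure-rate estimates for $2$-core emptiness. Associate to each of the $M$ inserted flows a hyperedge and to each of the $md$ buckets a vertex; since flow $f$ is sent by $h_i(\cdot)$ into exactly one bucket of array $\mathcal{B}_i$, its hyperedge selects one vertex per array, so the incidence structure is a random $d$-partite, $d$-uniform hypergraph with $d$ vertex classes of size $m$. A \emph{pure bucket} is precisely a degree-$1$ vertex, and \emph{single flow extraction} is precisely the removal of such a vertex together with its incident hyperedge. Hence the decoding workflow is exactly peeling, and it succeeds---clearing all non-zero buckets---if and only if the hypergraph has an empty $2$-core. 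I would first establish this equivalence, temporarily assuming pure-bucket verification is exact, and defer the false-positive correction to the end.

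Next I would pin down the threshold. Writing $\alpha = M/(md)$ for the edge-to-vertex ratio, the hypothesis $md > c_d M + \epsilon$ is equivalent to $\alpha < 1/c_d$ up to the lower-order $\epsilon$. The standard density-evolution / branching-process analysis of peeling on a sparse random hypergraph, using a Poisson approximation of the bucket degrees (mean $d\alpha = M/m$) together with the min-degree-$2$ survival rule, yields the fixed-point map
\[
\phi_\alpha(x) \;=\; 1 - e^{-d\alpha x^{d-1}}, \qquad x \in [0,1].
\]
Iterating $\phi_\alpha$ from $x_0 = 1$ converges to $0$---i.e.\ peeling clears the whole hypergraph---iff $\phi_\alpha$ has no fixed point in $(0,1)$, that is iff $\phi_\alpha(x) < x$ for all $x \in (0,1)$. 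The supremum of $\alpha$ for which this holds is exactly the reciprocal of the constant $c_d$ in the statement, and solving $\phi_\alpha(x)=x$, $\phi_\alpha'(x)=1$ numerically (tangency) recovers $c_3=1.23$, $c_4=1.30$, $c_5=1.43$. This identifies the regime $\alpha < 1/c_d$ as sub-critical.

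To extract the explicit rate I would run a first-moment argument over potential $2$-cores, split by the number $s$ of hyperedges they contain. A nonempty $2$-core contains a nonempty sub-hypergraph in which every incident vertex has degree at least $2$; bounding $\Pr[\text{core}\neq\emptyset]$ by the expected number of such configurations and summing over $s$, the expected count of a configuration with $s$ edges on $v$ vertices scales (using $M=\Theta(m)$) like $m^{\,v-(d-1)s}$ subject to $2v \le ds$. Since $v-(d-1)s \le -s(d-2)/2$, the exponent is maximized at the smallest configuration $s=2$, $v=d$: two flows whose $d$ hashes collide in \emph{every} array, forming an irreducible double edge that can never be peeled. Its expected number is $\binom{M}{2}m^{-d}=\Theta(M^{2-d})=\Theta(1/M^{d-2})$, while every configuration with $s\ge 3$ carries an exponent smaller by at least $(d-2)/2$ and is lower order. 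It remains to control the macroscopic scale $s=\Theta(M)$: below the threshold just established the giant-$2$-core contribution is exponentially suppressed, and the hypothesis $M=\Omega(d^{4d}\log^d(md))$ supplies exactly the concentration needed for the Poisson/differential-equation approximation to be valid and for this bulk term to be $o(1/M^{d-2})$. Summing all scales gives $\Pr[\text{decoding fails}] = O(1/M^{d-2})$.

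\textbf{Main obstacle.} The delicate part is controlling the first-moment sum uniformly across \emph{all} scales of $s$ at once: the small-$s$ combinatorics yield the advertised $1/M^{d-2}$ term easily, but excluding a giant core requires both the sub-critical gap $\alpha<1/c_d$ and sharp large-deviation bounds, and it is precisely here that the strong lower bound $M=\Omega(d^{4d}\log^d(md))$ is consumed. I would lean on the tight-threshold machinery of Molloy \cite{molloy2004pure} and Dietzfelbinger \etal \cite{dietzfelbinger2010tight}, and the IBLT peeling analysis of \cite{goodrich2011invertibleIBLT}, for these estimates. A secondary point specific to \sketchname{}, absent from plain IBLT, is that rehashing verification may declare false positives at rate $1/m$; I would close the argument by invoking the self-elimination property of \S\ref{sec:dicussion}---a spurious extraction inserts a phantom flow of negative size that is itself later peeled---to argue that such events are self-correcting and contribute at strictly lower order, so they do not inflate the leading $O(1/M^{d-2})$ rate.
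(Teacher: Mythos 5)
Your reduction to peeling and the $2$-core, the fixed-point map $1-e^{-d\alpha x^{d-1}}$, and the small-structure first-moment computation (two flows colliding in all $d$ arrays, giving $\binom{M}{2}m^{-d}=\Theta(M^{2-d})$) all match the machinery the paper relies on --- though the paper simply cites Molloy, Dietzfelbinger \etal and the IBLT analysis as a black box for the entire ``no verification error'' case, including the $O(1/M^{d-2})$ rate, rather than re-deriving it. The genuine gap is in where you spend the hypothesis $M=\Omega(d^{4d}\log^d(md))$ and, correspondingly, in your treatment of false positives. You allocate that hypothesis to concentration and large-deviation control of the giant core, and then dismiss rehashing-verification false positives in one sentence by appealing to the self-elimination property, claiming they contribute ``at strictly lower order.'' That dismissal is unsupported and is exactly backwards relative to what the statement needs: in the subcritical regime $\alpha<1/c_d$ the classical $2$-core results require no such logarithmic lower bound on $M$ at all; the hypothesis exists solely to tame the verification errors, which are the one feature distinguishing \sketchname{} from a plain IBLT.

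The paper's proof is almost entirely about this point. Pure bucket verification is invoked $O(Md)$ times with false-positive rate $O(1/m)$; a Chernoff bound shows that with probability $1-O(1/M^{d-2})$ the number of false positives is $F=O(d^3\log(md))$. Each false positive performs a wrong single-flow deletion touching $d$ buckets, poisoning at most $Fd$ buckets, and the IBLT poisoned-bucket analysis bounds the probability that decoding fails because of these by $O\left(\left(\frac{Fd}{M}\right)^d\right)$; the hypothesis $M=\Omega(d^{4d}\log^d(md))$ --- whose very form you can read off $(Fd)^d$ --- is consumed precisely to keep this term within $O(1/M^{d-2})$. Self-elimination (a phantom flow being peeled away later) is the mechanism behind poisoned-bucket recovery, but it is not automatic: it fails with exactly the probability above, and bounding that failure is the quantitative step your proposal is missing. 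Without it you have no argument that verification errors do not inflate the $O(1/M^{d-2})$ rate, and no role left for the stated hypothesis; conversely, if false positives really were harmless at strictly lower order for free, that hypothesis would be superfluous.
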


\begin{proof}

This is an analysis based on the theory of the 2-core in random hypergraph \cite{molloy2004pure,dietzfelbinger2010tight} and IBLT. Compared with 2-core or IBLT, we only introduce a kind of additional error, which is the false positives when we use pure bucket verification to verify the pure buckets. The IBLT assumes there is no error when verifying buckets because IBLT uses additional hashkeySum field that can be long enough. 
The results of 2-core and IBLT show that the failure probability without wrong verification is $O(\frac{1}{M^{d-2}})$.
Here we aim at proving that the consequences of our false positives are negligible when $M$ is not too small.

In the decoding procedure, the pure bucket verification runs at most $O(Md)$ times, and the false positive rate is $O(\frac{1}{m})$ with only rehashing verification.
By Chernoff bound, when $M=O(md)$ and $\delta=O(\frac{1}{M^{d-2}})$, the number of false positives will not exceed $F=O(d^3 log(md))$ in most cases (\ie, $1-O(\delta)$).
A false positive will incur a wrong flow ID with a wrong single flow deletion that influences $d$ buckets.
There is at most $Fd$ buckets can be influenced, called poisoned buckets. 
The existing study \cite{goodrich2011invertibleIBLT} of poisoned bucket shows that a small number of poisoned buckets will be automatically recovered, and the probability of failure due to poisoned bucket is $O(\left(\frac{Fd}{M}\right)^d)=O(\frac{d^{4d} log^d(md)}{M^{(d-1)}})$.
When it satisfies that $M=\Omega(d^{4d} log^d(md))$, the overall failure probability is $\delta=O(\frac{1}{M^{d-2}})$.
In practice, $M=\Omega(d^{4d} log^d(md))$ is easy to meet because $M$ is large and $d$ is a small constant.
Here, we only use rehashing verification for pure bucket verification. 
The theorem can also be easily extended if we further use fingerprint verification.
\end{proof}
\subsection{Fingerprint Verification}
\label{app:fv}
To reduce the false positive rate of pure bucket verification, we can perform an extra verification method, namely \textit{fingerprint verification}, by extending the IDsum field in each bucket by $w$ bits and using the extra $w$ bits as a fingerprint.
For each incoming packet with flow ID $f$, a new hash function $h_{fp}(\cdot)$ gives it a $w$-bit fingerprint $h_{fp}(f)$ for checking whether a bucket is pure.
For encoding operation, instead of inserting flow ID $f$, we insert an extended ID concatenated by flow ID $f$ and fingerprint $h_{fp}(f)$, and the extended IDsum field stores the result of the sum of the extended IDs modulo prime $p$.
Note that $p$ must be a prime larger than any available extended ID.
For decoding operation, obviously, we can still perform rehashing verification with the extended ID.
Our fingerprint verification works as follows.
Suppose a bucket is pure.
%
%
First, we reuse the the extended ID of the single flow calculated in rehashing verification.
Then, we divide the extended ID to get the flow ID and its fingerprint.
If the divided fingerprint equals to the fingerprint of the divided flow ID, we consider the bucket passes fingerprint verification.
Only buckets pass both rehashing and fingerprint verification will be considered as pure.
The false positive rate of only fingerprint verification is obviously $\frac{1}{2^w}$.
Considering that rehashing verification and fingerprint verification are independent, the false positive rate of pure bucket verification could be reduced to $\frac{1}{2^w m }$ with $w$-bit fingerprint.

We conduct experiments to demonstrate the effect of 8-bit fingerprint on improving the decoding success rate.
As shown in Figure \ref{fig:sketch:fp:bucket}, when the number of flows is 1K, with the same number of buckets, 8-bit fingerprint can improve the decoding success rate by at most 6.73\%.
However, when the number of flows is 10K, the improvement falls to at most 2.26\%.
This is because as the number of buckets increases, $m$ increases, and the false positive rate of pure bucket verification quickly drops, and thus further reducing the false positive rate with fingerprint yields less improvement on the decoding success rate.
As shown in Figure \ref{fig:sketch:fp:mem}, under the same memory usage, 8-bit fingerprint actually reduces the decoding success rate.
This is because fingerprint consumes additional memory, while this memory could be used as buckets to reduce the probability of 2-core of the random hypergraph and improve the decoding success rate. 
Figure~\ref{fig:sketch:fingerprint}\subref{fig:sketch:fp:bucket}-\subref{fig:sketch:fp:mem} also demonstrate that the memory overhead of \sketchname{} is proportional to the number of inserted flows.
In summary, for simplicity and accuracy, we recommend implementing \sketchname{} without fingerprints in most cases.
Only if there is some memory that can hardly be utilized due to hardware constraints unless used as fingerprints, we recommend implementing \sketchname{} with fingerprints.

\begin{figure}[t!]
\setlength{\abovecaptionskip}{-0.15cm}
\setlength{\belowcaptionskip}{-0.0cm}
\centering
    \subfigure[Same number of buckets per flow.]{
        \begin{minipage}[b]{0.225\textwidth}
            \includegraphics[width=\textwidth]{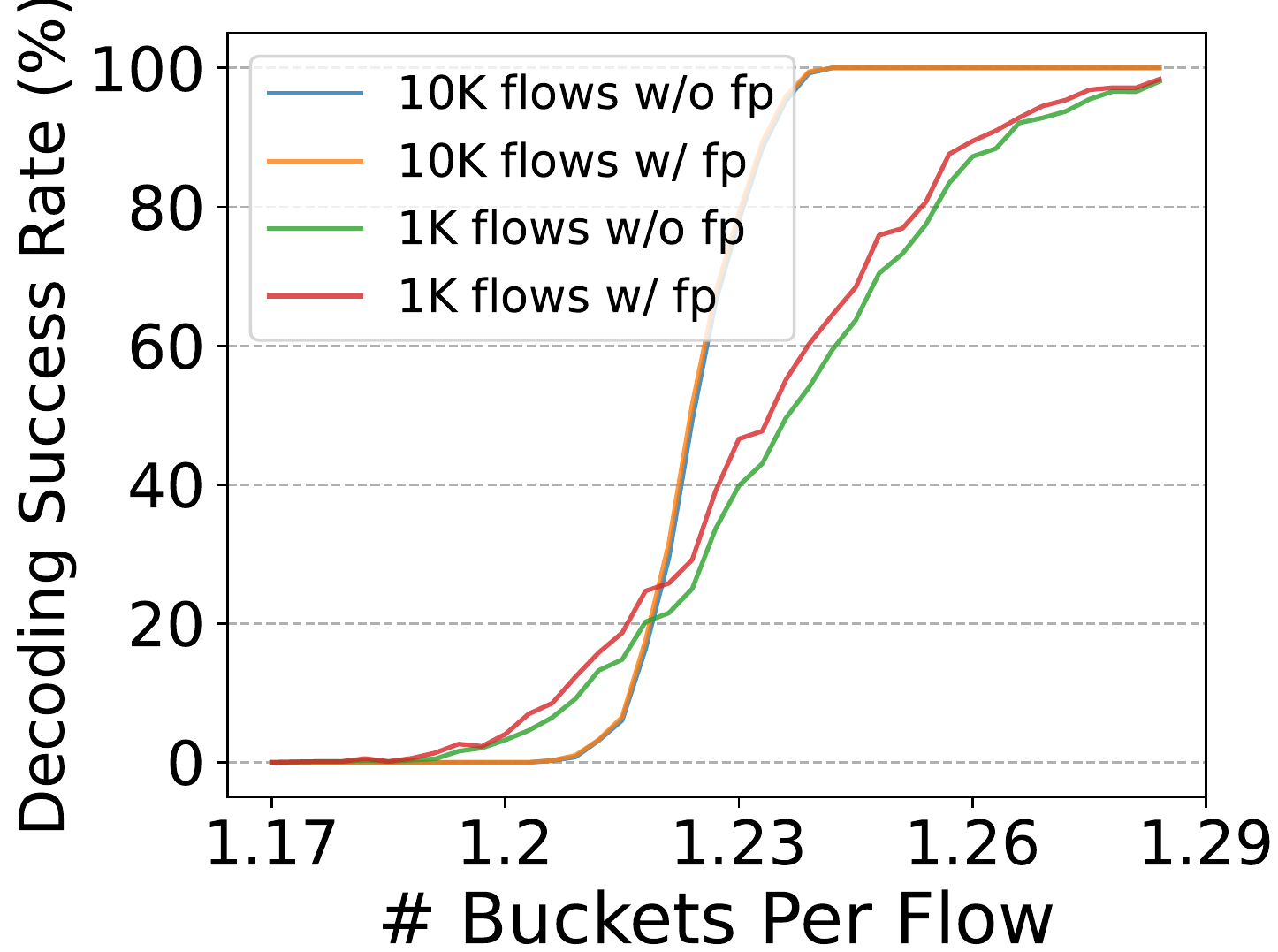}
            \label{fig:sketch:fp:bucket}
            \vspace{-0.2in}
        \end{minipage}
    }
    \subfigure[Same memory per flow.]{
        \begin{minipage}[b]{0.225\textwidth}
            \includegraphics[width=\textwidth]{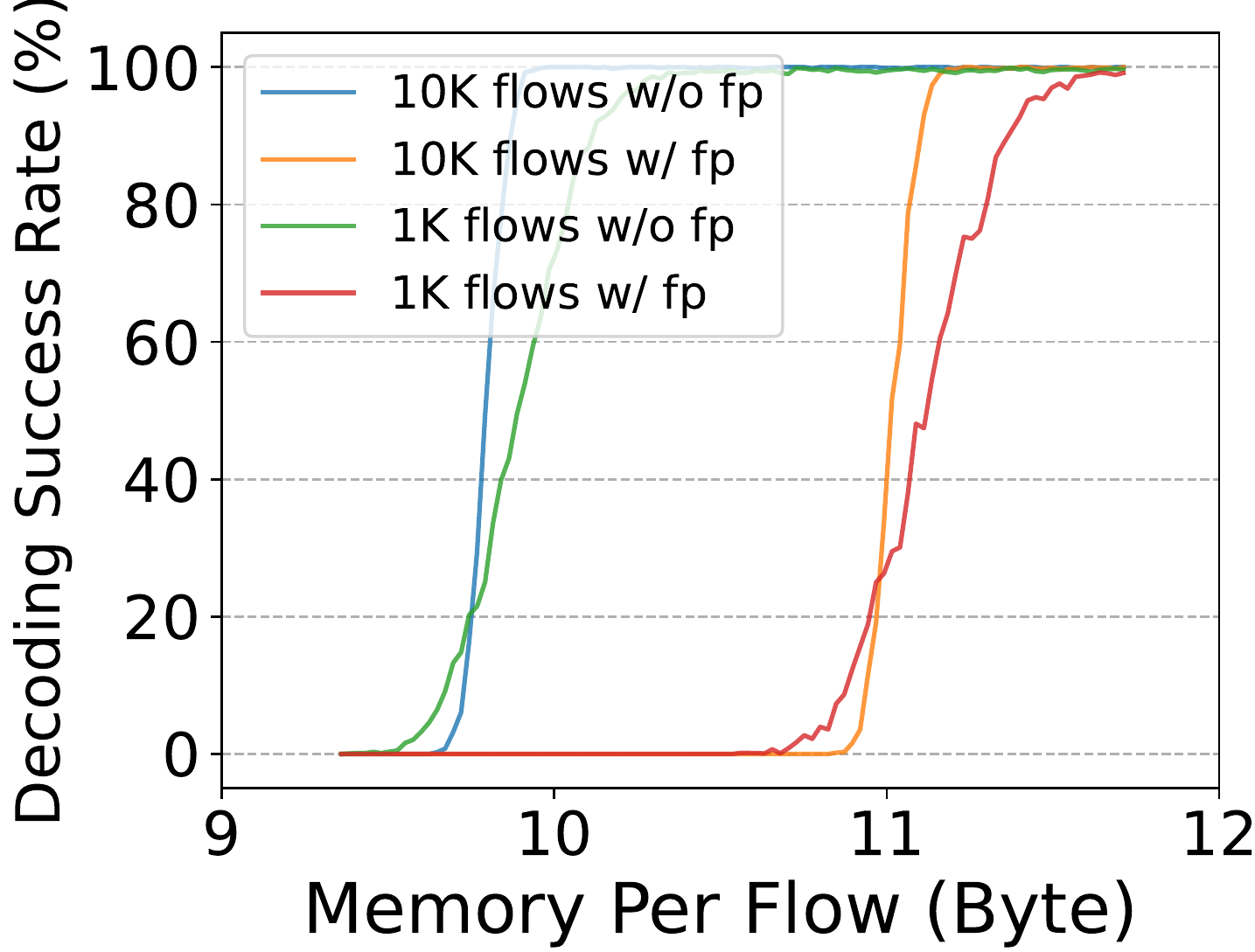}
            \label{fig:sketch:fp:mem}
            \vspace{-0.2in}
        \end{minipage}
    }
\caption{Experiments on 8-bit fingerprints. We use the anonymized IP traces collected in 2018 from CAIDA \cite{caida} as dataset.
We use the 32-bit source IP address as the flow ID, and choose the first 10K flows for experiments. Here, \textbf{fp} represents 8-bit fingerprint.}
\label{fig:sketch:fingerprint}
\end{figure}

\section{Collection from Data Plane}
\label{appendix:collection}

\bbb{Timeline split:}
For each edge switch, we maintain a 1-bit timestamp in its ingress, which is periodically flipped by the switch control plane, so as to split the timeline into consecutive fixed-length epochs with interleaved 0/1 timestamp value.
Further, we copy an additional group of sketches in the switch data plane for rotation.
Each group of sketches corresponds to a distinct timestamp value (0/1), and monitors the epochs with that timestamp value.
Specifically, at each edge switch, every packet entering the network first obtains the current timestamp value, and then is inserted into the flow classifier and upstream flow encoder corresponding to the obtained timestamp value.
When the packet exits the network, it is also inserted into the downstream flow encoder corresponding to the timestamp value it obtained when it entered the network.
To maintain the timestamp value during the packet transmission, we can use one unused bit in the original packet header as discussed above ($\S$~\ref{sec:dfe}). 

\bbb{Clock synchronization:}
Through maintaining a 1-bit timestamp and copying a group of sketches, we successfully split the timeline and insert packets of different epochs to their corresponding groups of sketches, laying a solid foundation for subsequent collection.
However, if the clocks of the control planes of edge switches are out of synchronization to some extent, we still can not find opportunities to collect the sketches without colliding with packet insertion.
Considering such an extreme situation.
There are three edge switches in a given network, and the transmission time between any two edge switches is the same.
The time offset between the control planes of two of the edge switches is exactly the size of the epoch, \ie, at any time, the flipping timestamps of the two edge switches are different (0<->1).
There are continuous packets entering the network at the above two edge switches, and exiting the network at the third switch. 
As a result, both groups of sketches of the third switch are continuously inserted, and thus can never be collected.
To address this issue, the central controller synchronizes the clocks of the control planes of all edge switches with itself, trying to keep only a group of sketches being inserted at any time, so as to make opportunities to collect the other group.
Then, we further discuss the appropriate time for the central controller to collect sketches.

\bbb{Appropriate time for collection:}
The central controller also maintains a 1-bit timestamp, trying to collect the group of sketches monitoring the previous epoch after it ends.
Before collection, the central controller should ensure that all the packets in the previous epoch have been inserted into sketches or lost in the network, so as to guarantee the correctness of measurement. 
%
%
First, we analyze an ideal situation, that the clock synchronization is zero-error.
For ingress sketches, \ie, the flow classifier and the upstream flow encoder, as soon as the locally maintained 1-bit timestamp flips, the central controller can collect the group of ingress sketches monitoring the previous epoch from each edge switch, because all the packets in that epoch have already been inserted into ingress sketches.
For egress sketches, \ie, the downstream flow encoder, every time the locally maintained timestamp flips, the central controller must first wait an additional period of time, so as to ensure that all the packets in the previous epoch have either been lost in the network, or passed through the network and been inserted into egress sketches.
Then, the central controller can collect the group of egress sketches monitoring the previous epoch from each edge switch.
Obviously, the additional period of time should be longer than the maximum time for packet transmission in the network.
Considering that the buffer sizes of DCN switches are at 10MB-level \cite{zeng2019congestion}, with 100Gb link speed, the queuing delay in a single switch is at most 1ms in most cases.
Therefore, for typical data center networks that usually have at most five hops, setting the additional time to 10ms can cope with most cases.
However, in practice, the clock synchronization can never be zero-error.
Therefore, before collecting both ingress and egress sketches, the central controller needs to wait for another additional period of time, which should be longer than the precision of synchronization, so as to guarantee the correctness of measurement.
In addition, the central controller should end the collection some time before its 1-bit timestamp flips again, which should also be longer than the precision of synchronization, in case the packets in the next epoch are inserted into the group of sketches being collected.

\presec
\section{Evaluation on Packet Accumulation Tasks}
\postsec
\label{eval:2}

\bbb{Metrics:}
\begin{itemize}[leftmargin=*,parsep=0pt,itemsep=0pt,topsep=0pt,partopsep=0pt]
    \item \textit{Average Relative Error (ARE):} $\frac{1}{|\Omega|}\sum_{f_i \in \Omega}\frac{|v_i-\hat{n}_i|}{v_i}$, where $\Omega$ is the set including all flows, $v_i$ is the true size of flow $f_i$, and $\hat{v}_i$ is the estimated size of flow $f_i$. 
    
    \item \textit{$F_1$ Score:} $\frac{2\cdot PR\cdot RR}{PR+RR}$, where $PR$ (Precision Rate) refers to the ratio of the number of the correctly reported instances to the number of all reported instances, and $RR$ (Recall Rate) refers to the ratio of the number of the correctly reported instances to the number of all correct instances. 
    \item \textit{Relative Error (RE):} $\frac{|True-Est|}{True}$, where $True$ and $Est$ are the true and estimated statistics, respectively.
    \item \textit{Weighted Mean Relative Error (WMRE) \cite{huang2017sketchvisor}:} $\frac{\sum_{i=1}^{z}|n_i-\hat{n_i}|}{\sum_{i=1}^{z} \left(\frac{n_i+\hat{n_i}}{2}\right)}$, where $z$ is the maximum flow size, $n_i$ and $\hat{n_i}$ are the true and estimated numbers of the flows of size $i$, respectively.
    
\end{itemize}

\bbb{Dataset:}
We also use the IP traces from CAIDA \cite{caida} as our dataset, and use the 32-bit source IP address as the flow ID.
We use four traces for evaluation, each of which monitors the traffic in five seconds.
Each trace contains 63K flows and 2.3M packets on average.
We report the average accuracy that each algorithm achieves on each CAIDA trace.

\bbb{Setup:}
We compare the combination of TowerSketch and \sketchname{} (Tower+Fermat) with 9 algorithms: CM \cite{cmsketch}, CU\cite{cusketch}, CountHeap \cite{countsketch}, UnivMon \cite{univmon2016}, ElasticSketch \cite{elastic2018}, FCM-sketch \cite{song2020fcm}, HashPipe \cite{sivaraman2017heavy}, CocoSketch \cite{zhang2021cocosketch}, and MRAC \cite{kumar2004data}.
We do not compare with FlowRadar because FlowRadar can hardly perform successful decoding with the memory sizes we used for evaluation (200KB-600KB).
For heavy-hitter detection and heavy-change detection, we set their thresholds $\Delta_h$ and $\Delta_c$ to about 0.02\% and 0.01\% of the total packets, \ie, 500 and 250, respectively.
We configure Tower+Fermat and its competitors as follows.
Overall, the configurations of these competitors are recommended in literature. 
\begin{itemize}[leftmargin=*,parsep=0pt,itemsep=0pt,topsep=2pt,partopsep=2pt]

\item\textit{Tower+Fermat:} 
For Tower, we set it to consist of an $8$-bit counter array and a $16$-bit counter array.
For Fermat, We set its count field and ID field to 32bits, and allocate $2500$ buckets to it for 99.9\% decoding success rate.

we set the threshold $T_h$ for identifying heavy-hitter candidates to the heavy-change threshold $\Delta_c$, \ie, 250, for detecting most heavy-hitters and heavy-changes.

\begin{figure}[h!]
\setlength{\subfigcapskip}{-0.6cm}
\setlength{\abovecaptionskip}{-0.15cm}
\setlength{\belowcaptionskip}{-0.05cm}
\centering
 \hspace{0.05cm}
    \subfigure[Heavy-hitter]{
        \begin{minipage}[b]{0.2\textwidth}
            \includegraphics[width=\textwidth]{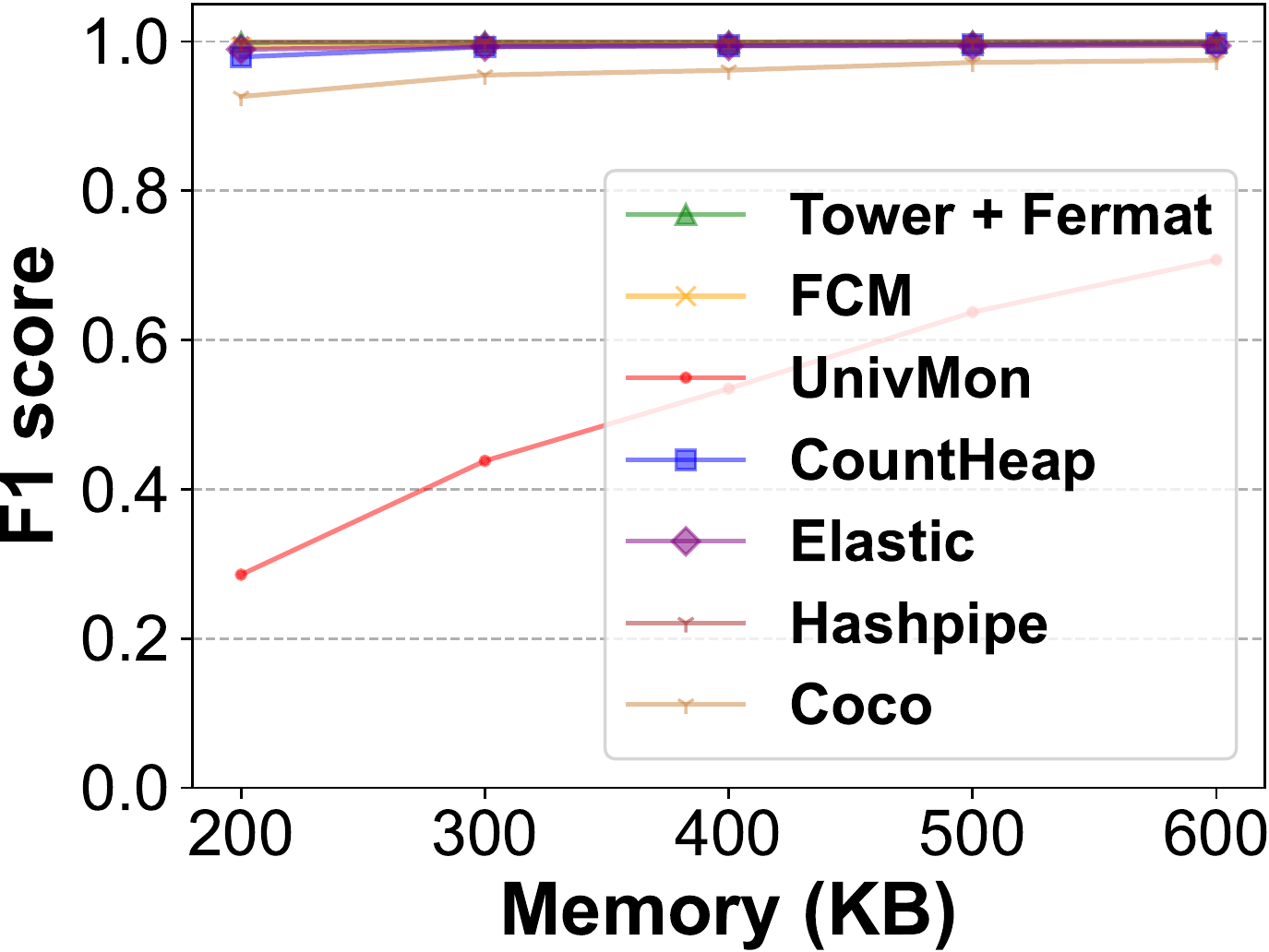}
            \label{fig:accuracy:HH F1 score}
        \end{minipage}
    }
    \subfigure[Flow size]{
        \begin{minipage}[b]{0.19\textwidth}
            \includegraphics[width=\textwidth]{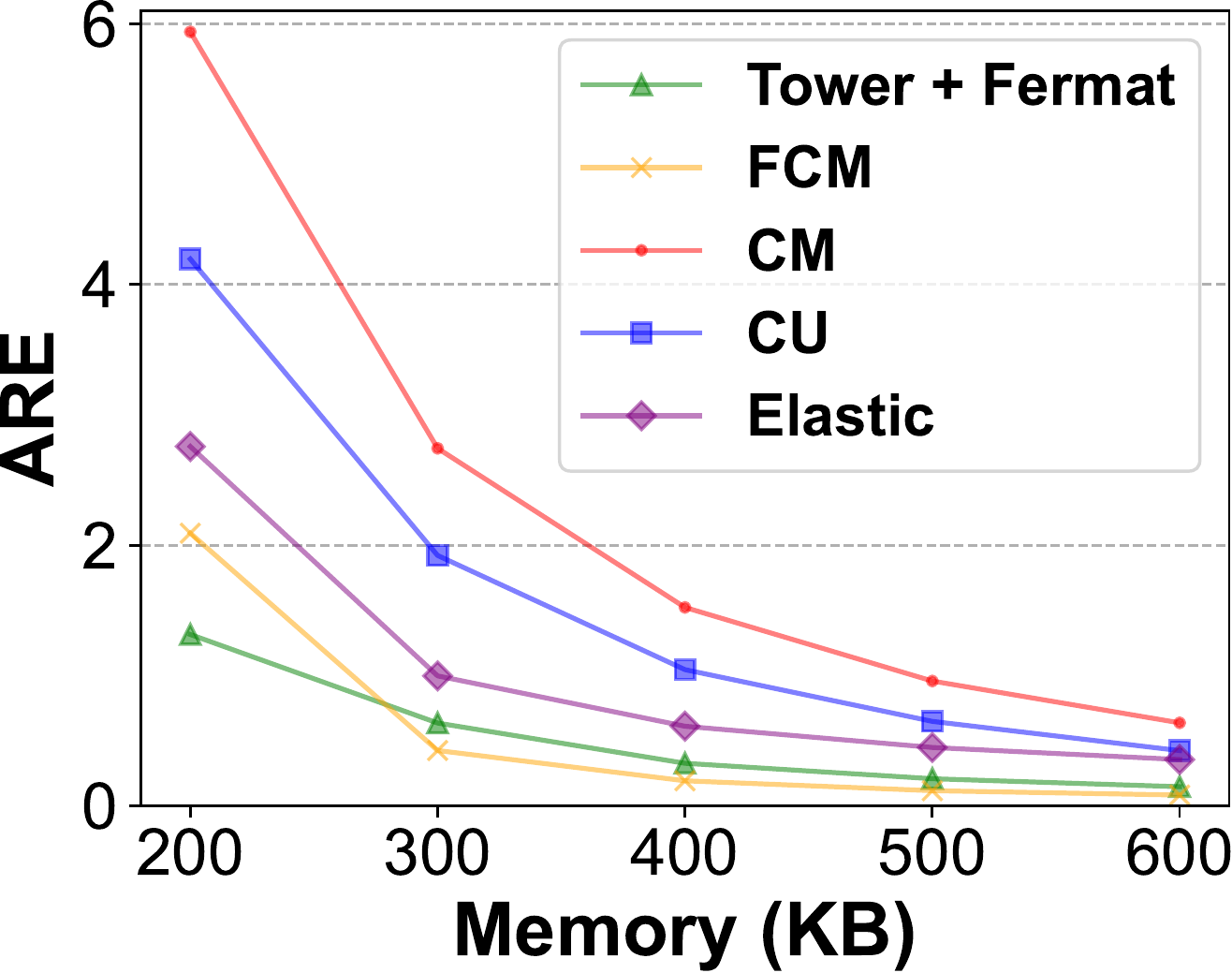}
            \label{fig:accuracy:ARE}
        \end{minipage}
    }
    
    \vspace{-0.4cm}
   
    \subfigure[Heavy-change.]{
        \begin{minipage}[b]{0.19\textwidth}
            \includegraphics[width=\textwidth]{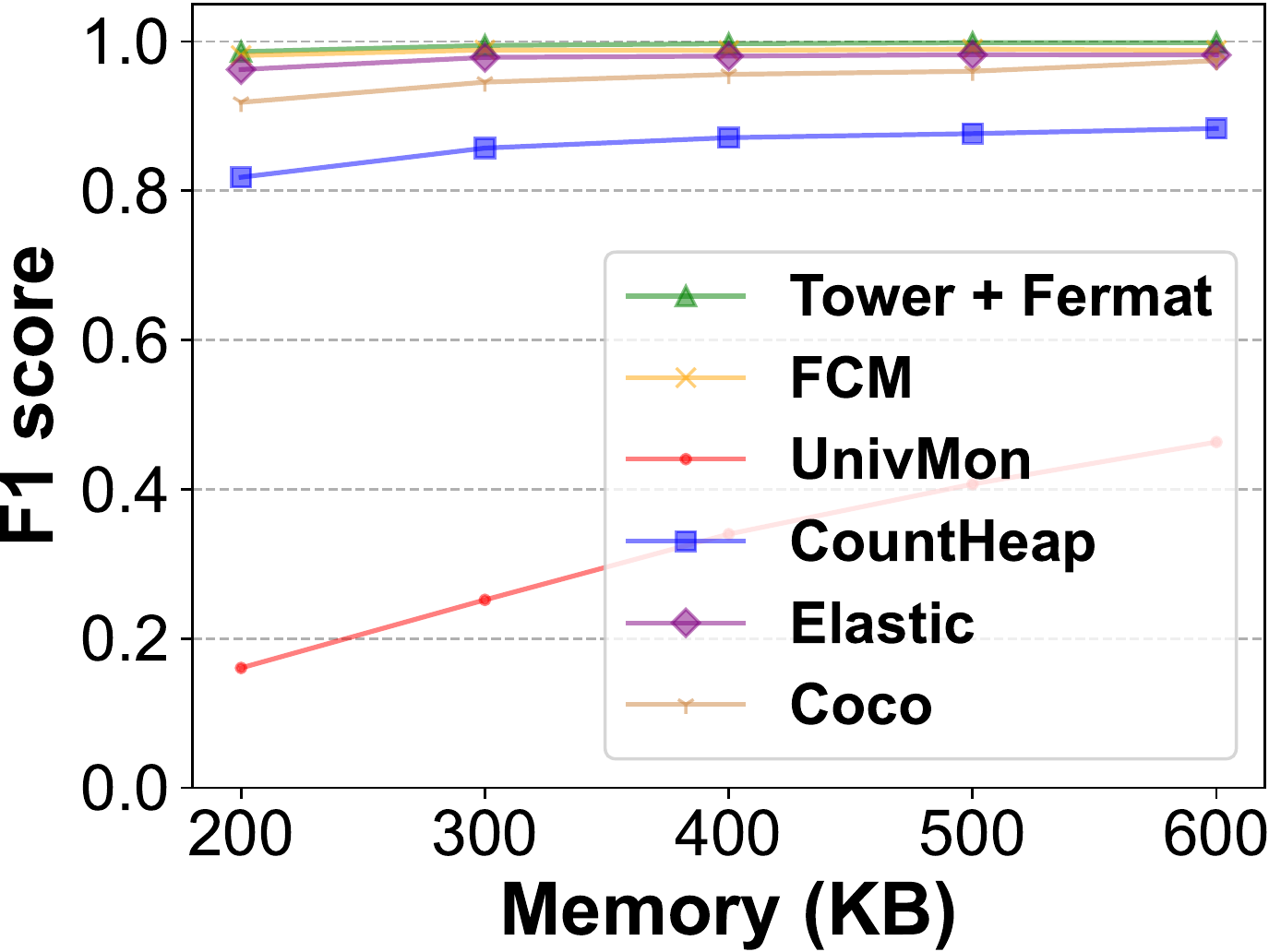}
            \label{fig:accuracy:HC F1 score}
        \end{minipage}
    }
    \subfigure[Flow size distribution.]{
        \begin{minipage}[b]{0.2\textwidth}
            \includegraphics[width=\textwidth]{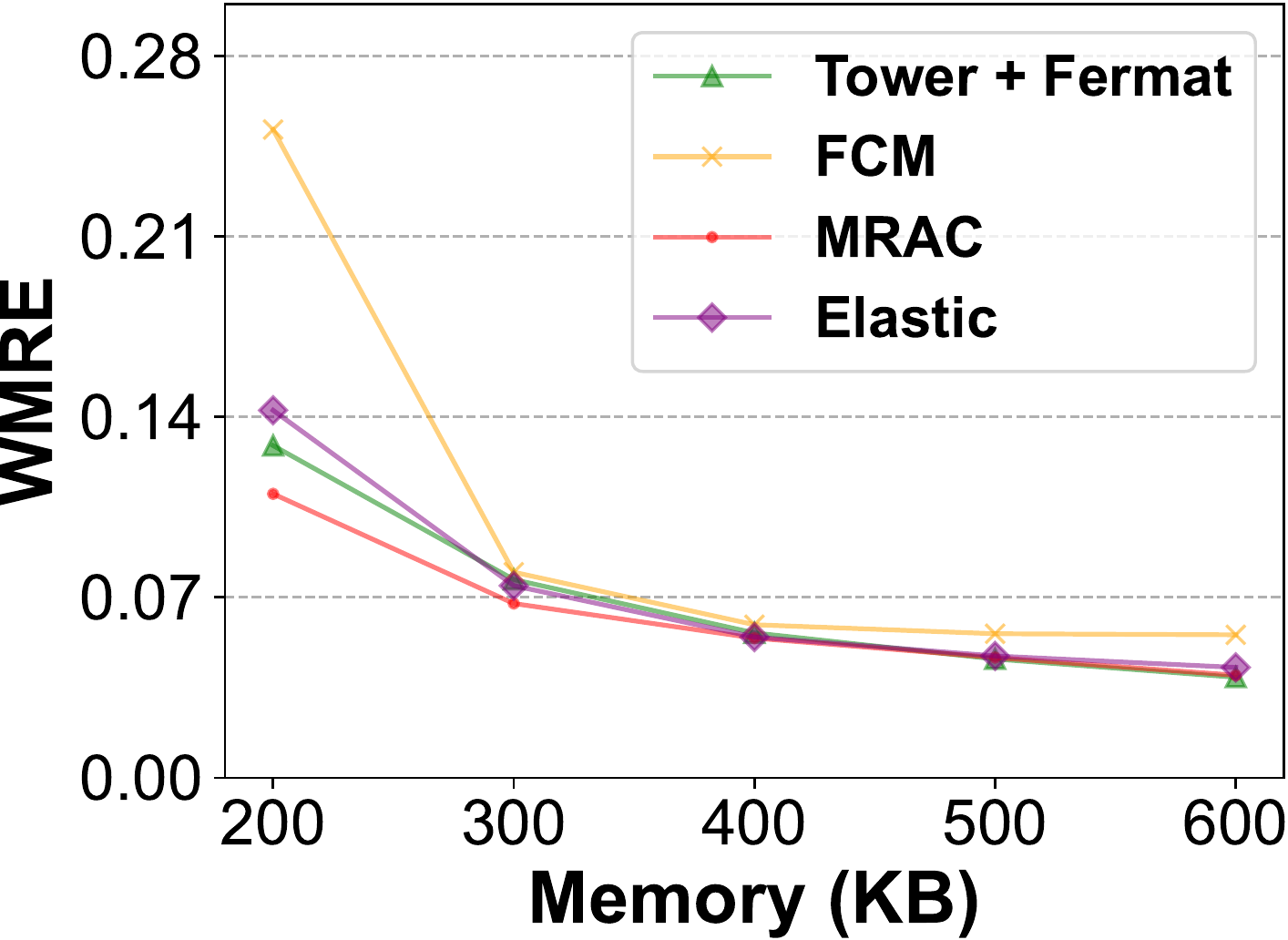}
            \label{fig:accuracy:FSD WMRE}
        \end{minipage}
    }
    
    \vspace{-0.4cm}
    
    \subfigure[Entropy.]{
        \begin{minipage}[b]{0.2\textwidth}
            \includegraphics[width=\textwidth]{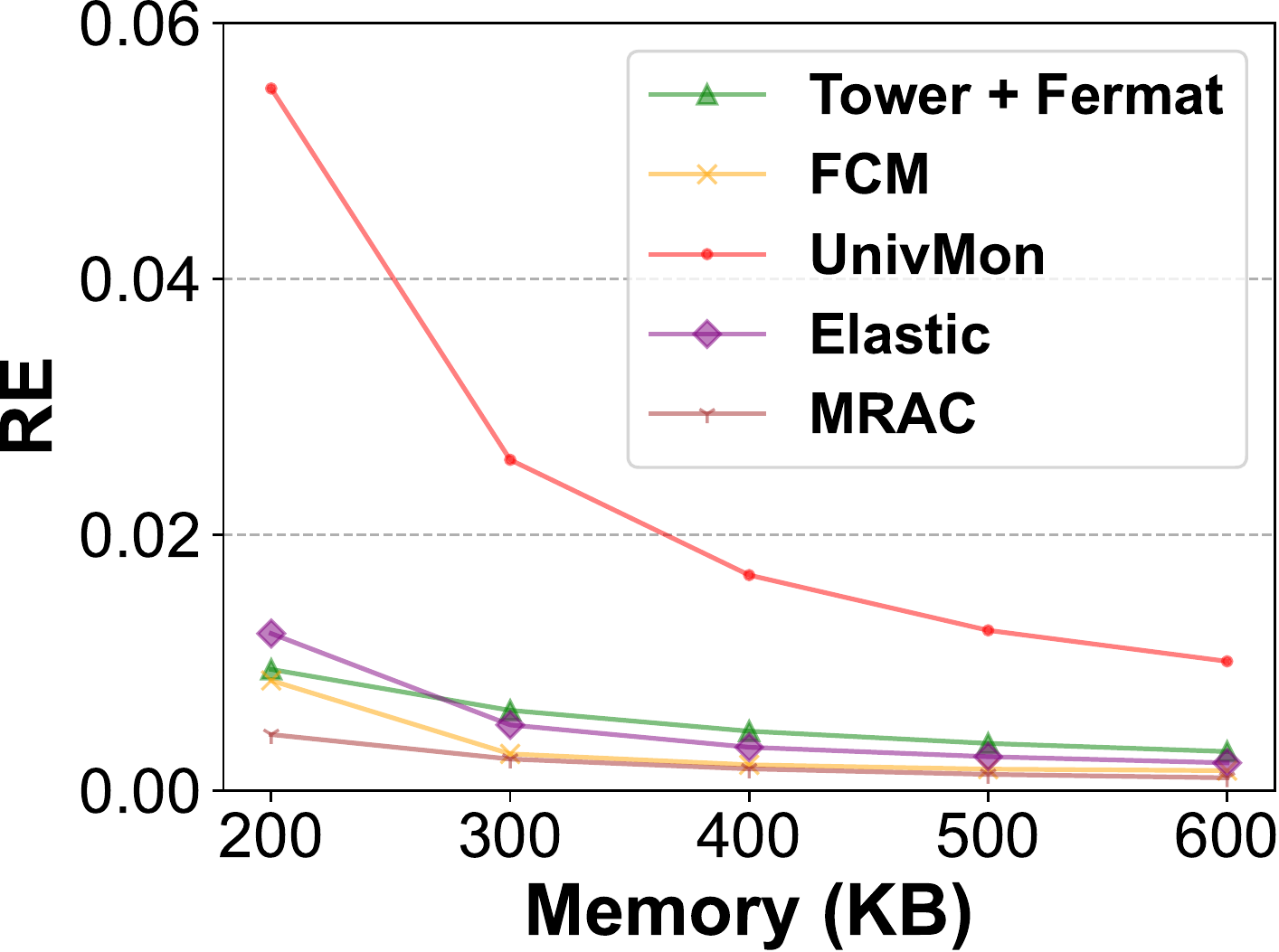}
            \label{fig:accuracy:ENTR RE}
        \end{minipage}
    }
    \subfigure[Cardinality.]{
        \begin{minipage}[b]{0.2\textwidth}
            \includegraphics[width=\textwidth]{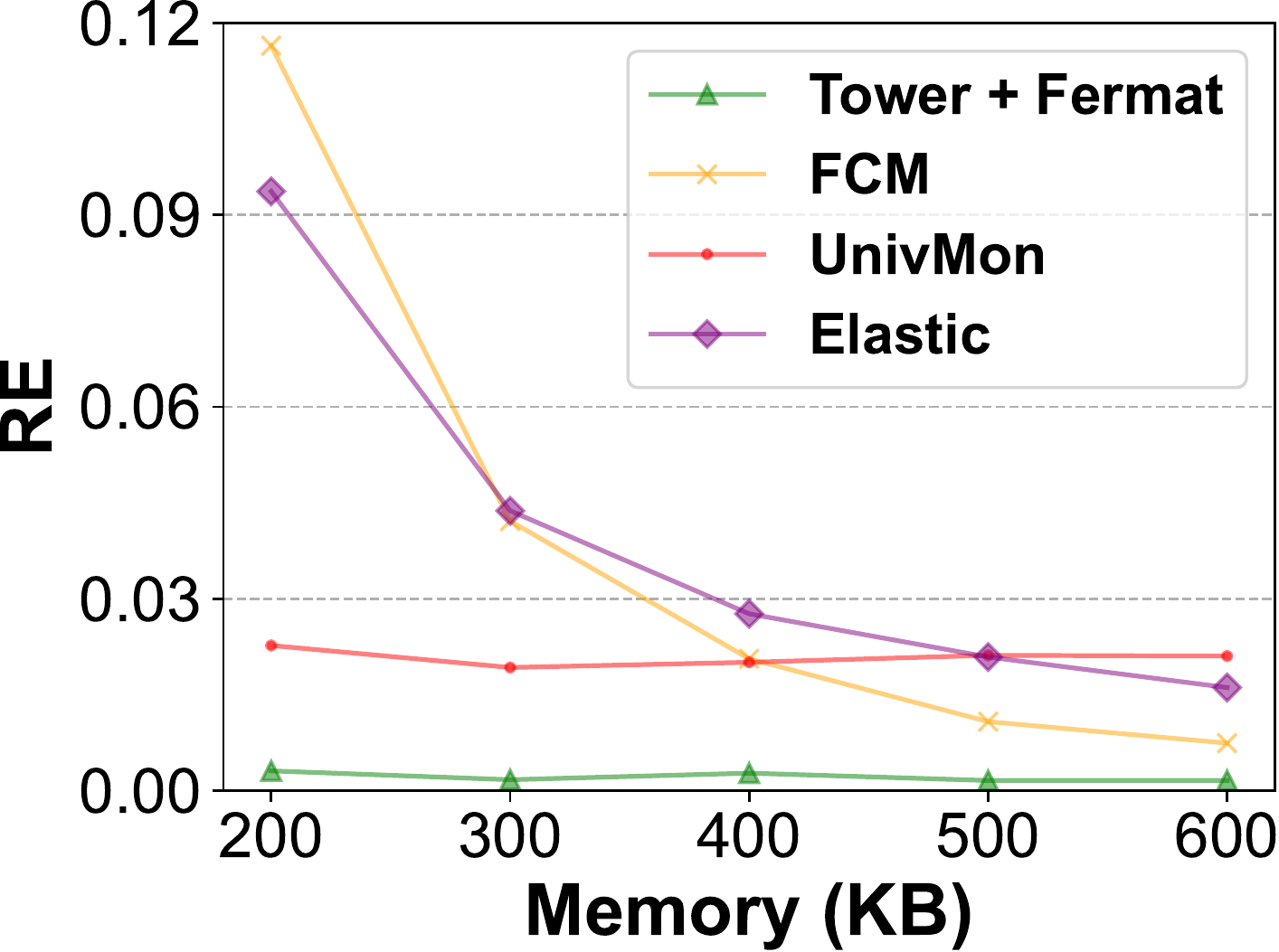}
            \label{fig:accuracy:CARD RE}
        \end{minipage}
    }
\caption{Accuracy comparison for six tasks.}
\label{fig:accuracy}
\end{figure}

\item\textit{CM/CU/CountHeap:}
We use 3 hash functions as recommended in \cite{goyal2012sketch}.
We set the counter size to 32bits.
For CountHeap, we additionally set its heap capacity to 4096 for heavy-hitter detection.

\item\textit{UnivMon:}
We use 14 levels and each level can record 1000 heavy hitters. 

\item\textit{Elastic:} 
We use the hardware version of Elastic.
For the heavy part, we use 4 stages and each stage has 3072 buckets. 
For the light part, we use a one-layer CM with 8-bit counters.
\item\textit{FCM:}
We use the top-$k$ version of FCM. 
It is almost the same as Elastic except the light part is substituted by a 16-ary FCM whose depth is set to $2$.
\item\textit{Hashpipe:}
We set the number of stages to $6$. 
\item\textit{Coco:}
We use the hardware version of Coco that only uses one hash function.

\end{itemize}

\bbb{Heavy-hitter detection (Figure \ref{fig:accuracy:HH F1 score}):}
Experimental results show that Tower+Fermat achieves comparable accuracy with HashPipe, and higher accuracy than other algorithms.
When using only 200KB memory, the F1 score of Tower+Fermat is $99.8\%$, while that of Elastic and FCM is lower than $99\%$.

\bbb{Flow size estimation (Figure~\ref{fig:accuracy:ARE}):}
Experimental results show that Tower+Fermat achieves comparable accuracy with FCM, and higher accuracy than other algorithms.
%
When using only 200KB memory, the ARE of Tower+Fermat is $4.51$ times, $3.19$ times, $2.09$ times, and $1.59$ times smaller than that of CM, CU, Elastic, and FCM, respectively.

\bbb{Heavy-change detection (Figure \ref{fig:accuracy:HC F1 score}):}
Experimental results show that the Tower+Fermat achieves higher accuracy than other algorithms.
Tower+Fermat achieves $99.6\%$ F1 score when using only $400$KB memory, while that of the other algorithms is below $99.0\%$.

\bbb{Flow size distribution estimation (Figure \ref{fig:accuracy:FSD WMRE}):}
Experimental results show that Tower+Fermat achieves higher accuracy than Elastic and FCM, and comparable accuracy with MRAC.
When using 600KB memory, the WMRE of Tower+Fermat is $0.039$, $1.09$ and $1.42$ times smaller than that of Elastic and FCM, respectively.

\bbb{Entropy estimation (Figure \ref{fig:accuracy:ENTR RE}):}
Experimental results show that Tower+Fermat achieves higher accuracy than UnivMon, and comparable accuracy with Elastic and FCM.
When using 600KB memory, the ARE of Tower+Fermat is $0.003$, $3.3$ times smaller than that of UnivMon.

\bbb{Cardinality estimation (Figure \ref{fig:accuracy:CARD RE}):}
Experimental results show that the Tower+Fermat achieves higher accuracy than other algorithms.
When using 600KB memory, the RE of Tower+Fermat is $0.0016$, $13.125$ times, $10.08$ times, and $4.57$ times smaller than that of UnivMon, Elastic, and FCM, respectively.


    \presec
\section{Prototype Implementation}
\postsec
\label{sec:protoimplementaion}
In this section, we present the important details of \systemname{} prototype. We lay out important implementation details of the \systemname{} data plane and control plane in sequence.

\begin{figure}[t!]
    \setlength{\abovecaptionskip}{0.05cm}
    \setlength{\belowcaptionskip}{0cm}
    \prefig
    \centering  
    \includegraphics[width=1\linewidth]{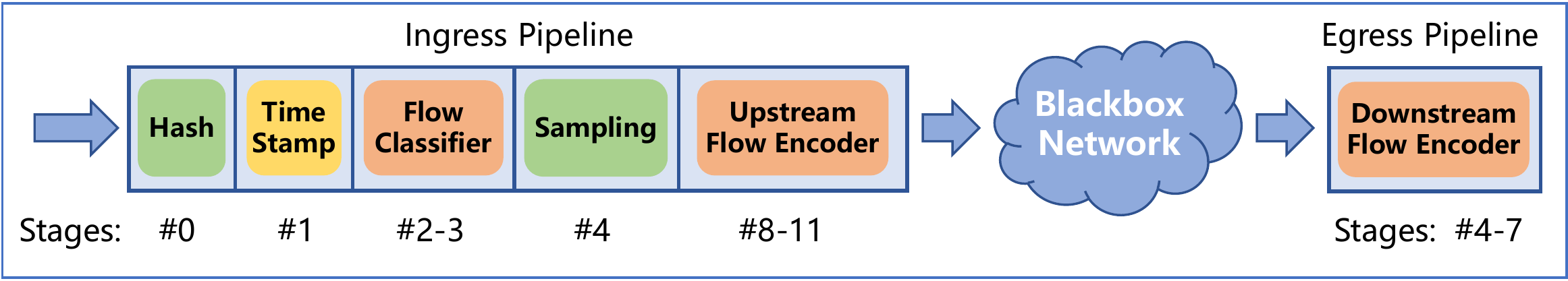}

    \caption{Implementation logic of \systemname{}.}
    \label{algopic:dataimpl}
    \postfig
\end{figure}

\presec
\subsection{Data Plane Implementation}
\postsec

\label{sec:dataimpl}
We have fully implemented the \systemname{} data plane on the switch data planes of four edge Tofino switches in P4 \cite{bosshart2014p4}.
In this section, we detail the implementation logic of data plane along the workflow (Figure \ref{algopic:dataimpl}).

\bbb{Hash:}
First, a packet with flow ID $f$ enters the network at an edge switch. 
With its flow ID (5-tuple) as input, the packet is hashed to multiple indexes through pairwise-independent hash functions generated from different CRC polynomials, which are deployed at stage 0 in ingress.
These hash indexes are either used as base indexes for locating the mapped counters/buckets in the subsequent insertions, or used for sampling LL candidates, or used as fingerprints for improving decoding success rate of \sketchname{}.
Note that due to the limitation of Tofino switches, each hash index is uniformly distributed on [$0$, $2^t-1$], where $t$ is an arbitrary positive integer.
%

\bbb{1-bit flipping timestamp:}
Second, the packet reads the current 1-bit flipping timestamp and from a match-action table, which is deployed at stage 1 in ingress.
The 1-bit timestamp is used to indicate the corresponding group of sketches for the subsequent insertions.
%

\bbb{Flow classifier:}
Third, the packet is inserted into the flow classifier, which is deployed at stage 2-3 in ingress.
The flow classifier is a TowerSketch consisting of an 8-bit counter array and a 16-bit counter array.
The 8-bit and 16-bit counter arrays consist of $w_1$ 8-bit and $w_2$ 16-bit counters, respectively.
Each counter array is built on a register and accessed by a stateful arithmetic logic unit (SALU).
To save SALU resources, we simulate the two flow classifiers by doubling the number of counters of the 8-bit and 16-bit counter arrays instead of building additional registers.
The left/right $w_1$ 8-bit and $w_2$ 16-bit counters form the flow classifier corresponding to timestamp $0$/$1$, respectively.
We use the base indexes calculated by hash functions as the relative positions of the mapped counters in the flow classifier, and add offsets corresponding to the 1-bit timestamp to the base indexes, so as to locate the mapped counters.
Specifically, when the timestamp is $0$, the offset is 0; when the timestamp is $1$, the offset is $w_1$ for 8-bit counter array or $w_2$ for 16-bit counter array.
During insertion, the SALU adopts saturated addition operation for each mapped counter, which can increment the counter to its maximum value but never overflow it, and reports the value recorded in the counter, so as to simulate the behavior of TowerSketch.
After insertion, we take the minimum value among the reported values as the size of flow $f$, and then input the flow size to a match-action Table that uses range matching on the flow size, so as to obtain the hierarchy of flow $f$.

\bbb{Sampling:}
Fourth, if flow $f$ is classified as a LL candidate, the packet reads a value from a match-action table, which is deployed at stage 4 in ingress.
We then compare the read value with a 16-bit value, which is calculated by a hash function with the 5-tuple of the flow and a random seed as input.
If the read value is equal to or larger than the 16-bit value, flow $f$ is classified as a sampled LL candidate.
Otherwise, the flow is classified as a non-sampled LL candidate.
Obviously, to simulate a sample rate $R$, the value should be set to $\lceil 65536 \times R \rceil$.

\begin{figure}[t!]
    \setlength{\abovecaptionskip}{0.05cm}
    \setlength{\belowcaptionskip}{0cm}
    \prefig
    \centering  
    \includegraphics[width=0.96\linewidth]{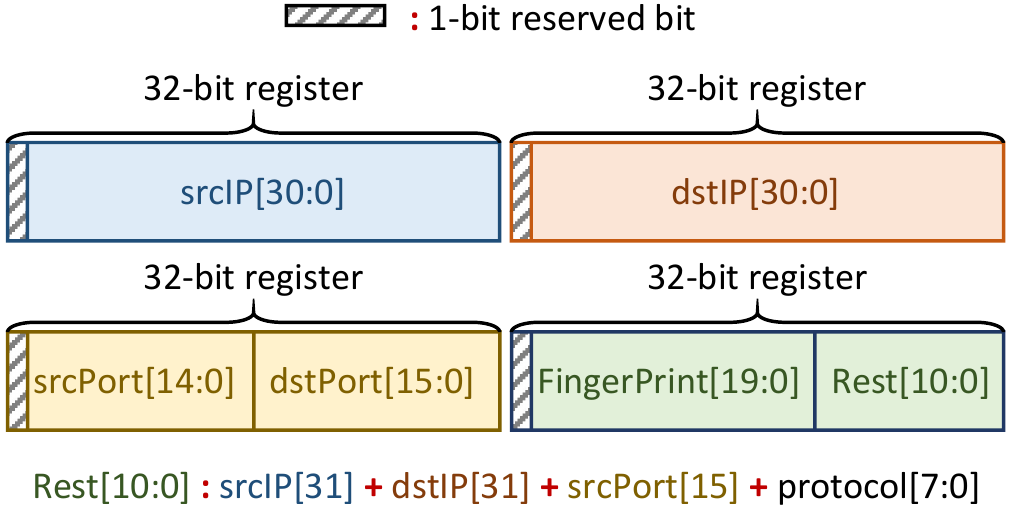}
    \caption{Division of the 5-tuple.}
    \label{ID:division}
    \postfig
\end{figure}

\begin{algorithm}[t!]
\renewcommand\baselinestretch{0.9}\selectfont
    \SetAlgoNoLine
    \caption{Simulated modular addition.}
    \label{algo:code:moduloadd}
    \KwIn{An ID fragment $f$, a counter $reg$ for encoding the ID fragment, and a prime $p$.}
    
    $inv \gets p - f$ {\Comment{Get the additive inverse of $f$ in $Z_p$}}
    
    
    \eIf{$reg + f < p $} 
    {$reg \gets reg + f$}
    {$reg \gets reg - inv$}
    
\end{algorithm}

\bbb{\sketchname{}:}
Before detailing the implementation of upstream and downstream flow encoders, we present the implementation of \sketchname{} that they are based on.
To encode the 104-bit flow ID (5-tuple) of each packet, an ideal bucket in \sketchname{} should contain a 105-bit IDsum field and a 32-bit count field.
%
%
However, because each SALU can access up to a pair of 32-bit counters, the IDsum field cannot be directly built in Tofino.  
%
%
To address this issue, we divide the IDsum field into multiple counters.
Rather than encoding complete flow IDs, each counter only encodes specific ID fragments. 
Considering that a 32-bit counter can support at most 32-bit primes, and thus can encode at most 31-bit ID fragment, we need four 32-bit counters to simulate the IDsum field.
Specifically, the division of the IDsum field is shown in Figure \ref{ID:division}.
The first three 32-bit counters encode the lower 31-bits of the source IP address, the destination IP address, and the concatenation of the source port and destination port, respectively.
The last one 32-bit counter encodes the rest 11-bit ID fragment (1-bit source IP address + 1-bit destination IP address + 1-bit source port + 8-bit protocol), and the other unused 20 bits are used to encode a fingerprint to improve decoding success rate.
In summary, each bucket of \sketchname{} consists of five 32-bit counters: four counters to encode the IDsum field and the fingerprint field, and a counter to encode the count field.
Considering that there is no dependency between the five counters in any bucket of \sketchname{}, a bucket array of \sketchname{} can be built with five 32-bit counter arrays, each of which is built on a register and accessed by a SALU.
During insertion, for any of the four counter arrays encoding the IDsum field and fingerprint field, the SALU needs to insert the specific ID fragment into its counter through modular addition.
As shown in Algorithm \ref{algo:code:moduloadd}, the SALU simulates the modular addition with logic consisting of a conditional judgement and two branches.
Such logic is naturally supported by SALUs.
For the other counter array encoding the count field, the SALU simply increments its counter by one.
In this way, the SALUs simulate the behavior of \sketchname{}.
By duplicating these five registers and SALUs $d$ times, we can easily build a $d$-array \sketchname{}.
Note that we use registers consisting of 32-bit counters, but not registers consisting of pairs of 32-bit counters that can further save SALU resources, to simulate the buckets of \sketchname{}.
This is because the logic used to simulate the modular addition requires two 32-bit metadata ($f$ and $inv$) as input, which is just the maximum number that a SALU can support.
However, encoding two ID fragments with a SALU requires four 32-bit metadata as input, which is beyond the capabilities of SALU.

\bbb{Upstream flow encoder:}
Fifth, unless flow $f$ is a non-sampled LL candidate, the packet is inserted into the upstream flow encoder, which is deployed at stage 8-11 in ingress.
The upstream flow encoder consists of three bucket arrays for the highest memory efficiency.
Each bucket array is built as described above, and consists of $m_{uf}$ buckets.
The left $m_{ll}$ buckets, the right $m_{hh}$ buckets, and the middle $m_{hl}$ buckets in each array form the upstream LL encoder, HH encoder, and HL encoder, respectively.
Similarly, we simulate the two upstream flow encoders by doubling the number of buckets in each array.
Based on the hierarchy of flow $f$, we can easily determine the encoder that the packet should be inserted into.
We denote the number of buckets of a bucket array of that encoder by $m^{\prime}$.
Different from the flow classifier, the base indexes calculate by hash functions cannot be directly used to locate the relative positions of the mapped buckets in the encoder.
This is because a base index is uniformly distributed on [$0$,$2^t-1$], while $m^{\prime}$, which could be any of $m_{ll}$, $m_{hl}$, and $m_{hh}$, may not be powers of two, as they are required to vary for supporting dynamic memory allocation. 
To address this issue, we decide to use the results of base indexes modulo $m^{\prime}$ as the relative positions of the mapped buckets.
To simulate modulo operation in data plane, we input the hierarchy of flow $f$ and a base index $h_b$ to a match-action table that uses exact matching on flow hierarchy and range matching on index.
The table first determines $m^{\prime}$ based on the input flow hierarchy, then outputs the largest number that is divisible by $m^{\prime}$ and less than $h_b$, and finally subtracts that number from $h_b$.
Obviously, the result equals to $h_b$ modulo $m^{\prime}$.
In this way, we locate the relative positions of the mapped buckets at the cost of TCAM resources, and can finally locate the mapped buckets by adding offsets corresponding to the 1-bit timestamp and the flow hierarchy to the relative positions.
%
%
Considering that the width of base index is fixed at run-time, if its width is too long compared to the width of $m^{\prime}$, the match-action table will need a lot of entries to support range matching, and thus consumes lots of TCAM resources; if its width is just a bit longer than the width of $m^{\prime}$, the uniformity of the calculated relative positions will be quite poor, leading to reduction of the decoding success rate of \sketchname{}.
To address this issue, before we input the base index to the match-action table, we bitwise-AND the base index with a mask to guarantee that the value range of the index is between $4m^{\prime}$ and $8m^{\prime}$, so as to make great trade-off between the uniformity of relative positions and the consumption of TCAM resources.
Note that due to the inherent features of TCAM, when TCAM is used for range matching, different value range would require different number of TCAM entries for supporting modulo operation.

\bbb{Downstream flow encoder:}
Sixth, unless flow $f$ is a non-sampled LL candidate, the packet is inserted into the upstream flow encoder, which is deployed at stage 4-7 in egress.
The implementation of downstream flow encoder is almost the same as that of upstream flow encoder, except it omits the heavy-hitter encoder.
Note that the flow hierarchy and 1-bit timestamp are obtained from the edge switch where the packet enters the network, and carried by recording them in three bits of the ToS field of the IPv4 protocol.

\bbb{Resources Usage:}
As shown in Table \ref{tab:overhead}, under the parameter settings in Section \ref{eval:testbed}, the \systemname{} data plane consumes SALUs most, achieving 66.7\%. 
This is because the flow classifier, the upstream flow encoder, and the downstream flow encoder all need SALUs to access memory. 
For resources other than SALUs, \systemname{} consumes no more than 25\%.
Overall, the resource usage of \systemname{} is moderate.
Although \systemname{} indeed consumes a lot of SALUs, the consumption of SALUs will not increase when we further enlarge the above sketches.
With the advent of Tofino 2 switches and even Tofino 3 switches, we believe the resource usage will be much more acceptable on these more advanced programmable switches.

\begin{table} [h!]
\setlength{\abovecaptionskip}{-0.0cm}
\setlength{\belowcaptionskip}{-0.0cm}
\centering
{\centering
\small
\begin{tabular}{|l|r| r|c|}
\hline
\textbf{Resource} & \textbf{Usage} & \textbf{Percentage}\\ \hline
Exact Match Input xbar & 353 & 22.98\% \\ 
Ternary Match Input xbar & 33 & 4.17\% \\ 
VLIW Instructions & 43 & 11.20\% \\ 
Map RAM & 102 & 17.71\% \\
SRAM & 130 & 13.54\% \\ 
TCAM & 8 & 2.78\% \\  
Hash Bits & 809 & 16.21\% \\ 
Stateful ALU & 32 & 66.67\% \\ \hline 
\end{tabular}
}
\caption{Resources used by \systemname{} in Tofino.
}
\label{tab:overhead}
\end{table}

\presec
\subsection{Control Plane Implementation}
\label{sec:cpi}
\bbb{Central controller:}
The central controller integrates three modules into a DPDK \cite{dpdk} program: 1) a packet receiver module responsible for collecting sketches; 2) an analyzer module for decoding sketches, monitoring real-time network state, and generating reconfiguration packets for reconfiguring the \systemname{} data plane; 3) a packet sender module responsible for sending reconfiguration packets to the control plane of each edge switch.

\bbb{Switch control plane:}
The control plane of each edge switch runs a C++ program to load the P4 program to the Tofino ASIC.
Every time the switch control plane receives a reconfiguration packet, it first extracts the packet to obtain the reconfiguration.
Then, based on the reconfiguration, it generates corresponding table entries and update them to the corresponding match-action tables in the data plane to reconfigure the switch data plane.
The time consumption in this step is shown in Figure \ref{fig:testbed:cdf} in Appendix \ref{app:timeoverhead}.
To avoid the updated entries to interfere with the monitoring of the current epoch, those corresponding match-action tables further use exact matching on the 1-bit timestamp.
Those newly updated entries match the 1-bit timestamp in the next epoch, so as to function in the next epoch.

\bbb{Epoch length:}
On our testbed, we set the epoch length to 50ms by default, and the additional time for all traffic passing through the network is set to 10ms (described in Appendix \ref{appendix:collection}).

\bbb{Clock synchronization:} 
On our testbed, we use the well-known software time synchronization protocol NTP \cite{mills1991internet} to synchronize the clocks between the control planes of edge switches and the central controller.
Every 10s, every edge switch synchronizes its clock with the central controller.
The precision of synchronization is around 0.3ms$\sim$0.5ms, and thus NTP can already satisfy the precision requirement for epochs of 50ms.
We can also use more advanced solutions \cite{geng2018exploiting, kannan2019precise} for us-level or even ns-level precision.



\bbb{Data plane collection:}
To collect sketches from data planes of edge switches, a naive solution is to directly use the C++ control plane APIs provided by the Tofino SDK \cite{SDK}.
Currently, the most efficient strategy for this solution is to first use bulk DMA transfer to read data plane counter arrays into control plane buffer, and then read the counter arrays from control plane buffer \cite{namkung2021telemetry}.
However, on our testbed, such strategy takes about 338ms to collect only the upstream flow encoder, which seriously limits the setting of epoch length, and thus degrades the accuracy and timeliness of measurement.
To address this issue, we fully exploit the capabilities and features of programmable data plane, including SALUs, mirror, and recirculate ports.
Specifically, the switch control plane just needs to send several tailored packets to data plane for collecting sketches.
The tailored packet is forwarded to the recirculate port, so as to access the counters of each sketches in turn.
Every time a tailored packet accesses a counter, leveraging the SALU, it reads the value and inserts the value into its payload.
Every time a tailored packet reaches the maximum transmission unit (MTU, \eg, 1514 Bytes), the switch data plane forwards it to the central controller, and mirrors a new truncated packet (\eg, 64 Bytes) to read the remaining counters.
In this way, collecting the upstream flow encoder from the switch data plane only takes 0.44ms, which is 775 times faster than the straightforward solution.
%
%
%
%
To ensure that the tailored packets will not be lost during the transmission, we reserve idle ports in their forwarding paths.
Overall, the central controller takes 11.33ms to collect sketches from the data plane each edge switch, which consists five parts: 1) every time the timestamp flips, the central controller first sleeps 1ms to eliminate the impact caused by the error in clock synchronization, ensuring that all the edge switches have started the current epoch; 2) the central controller takes 2.68ms to collect the flow classifier; 3) the central controller takes 0.44ms to collect the upstream flow encoder; 4) the central controller sleeps 6.88ms to ensure that all the packets in the previous epoch have passed through or lost in the network; 5) the central controller takes 0.33ms to collect the downstream flow encoder.

\section{Evaluation on Different Workloads}
\label{sec:workload}

In this section, we show that on workloads other than DCTCP, how \systemname{} shifts measurement attention with the change of the number of flows or ratio of victim flows.
For the measurement attention under different number of flows, we vary the number of flows in the network from $10$K to $100$K, and fix the ratio of victim flows to $10\%$. 
For the measurement attention under different ratios of victim flows, we vary the ratio of victim flows from $2.5\%$ to $25\%$, and fix the number of flows to $50$K.

\subsection{CACHE Workload}

\bbb{Measurement attention \textit{vs.} number of flows (Figure~\ref{fig:cache:flownum}):}
As the number of flows increases from $10$K to $20$K, \systemname{} can record all flows and victim flows, and therefore sets both $T_h$ and $T_l$ to $1$.
As the number of flows increases from $30$K to $70$K, \systemname{} allocates more memory to HL encoders and raises $T_h$ higher than $1$.
As the number of flows increases from $80$K to $100$K, the healthy network state transitions to the ill network state.
\systemname{} allocates memory to LL encoders, increases $T_l$ and decreases the sample rate, so as to control the number of HLs and sampled LLs.
Meanwhile, \systemname{} raises $T_h$ to control the number of HH candidates.
The relatively low load factor when the number of flows is between $80$K and $100$K is because of the high skewness of CACHE workload: 
lower thresholds will lead to a huge increase of the number of recorded flows, thus causing decoding failure.
In fact, when the number of flows is between $80$K and $100$K, the $T_h$ is set to $3$, and the $T_l$ is set to $2$.
\systemname{} has tried its best to select thresholds to maximizes the load factor.

\bbb{Measurement attention \textit{vs.} ratio of victim flows (Figure~\ref{fig:cache:lossrate}):}
As the ratio of victim flows increases from $2.5\%$ to $12.5\%$, \systemname{} records all victim flows by allocating more and more memory to HL encoders.
$T_h$ is not adjusted because of the high skewness of CACHE workload: setting $T_h$ to $2$ already makes a fairly small portion of flows as HH candidates, and lower $T_h$ leads to decoding failure.
As the ratio of victim flows increases from $15\%$ to $25\%$, the healthy network state transitions to the ill network state.
\systemname{} allocates memory to LL encoders, increases $T_l$ to $2$ and decreases the sample rate so as to  control the number of HLs and sampled LLs.
Meanwhile, because the memory of upstream heavy-hitter encoder and the number of flows remain unchanged, $T_h$ also remains unchanged.
%
The reason why \systemname{} suffers low load factor when the ratio of victim flows is between $15\%$ to $25\%$ is also due to high skewness of CACHE workload.
Both $T_h$ and $T_l$ are set to $2$, and decrease of thresholds will lead to decoding failure.
\systemname{} has tried its best to select thresholds to maximize the load factor.

\begin{figure*}[ht!]
    \centering
    \subfigure[Memory division.]{
    \includegraphics[width=0.23\textwidth]{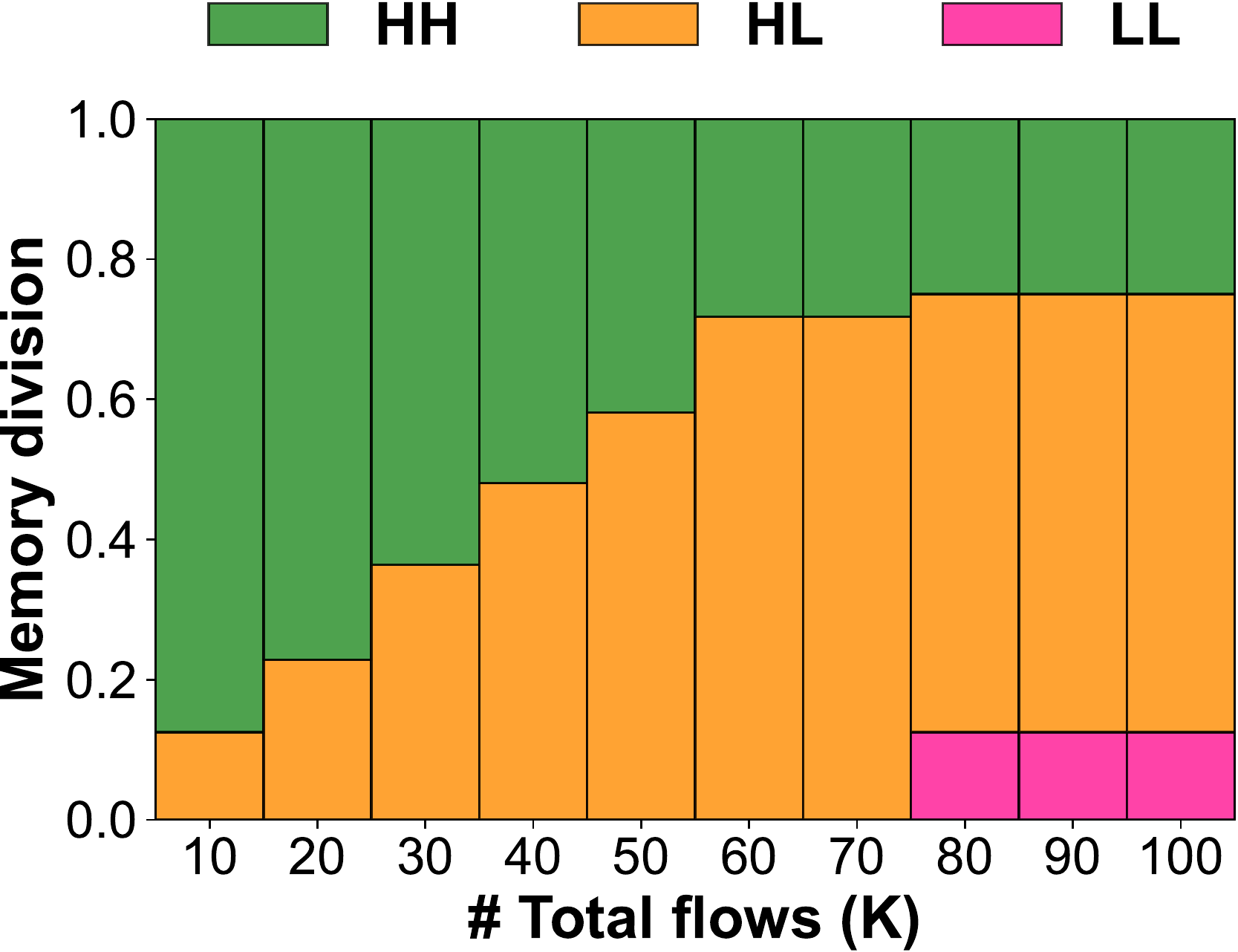}
    \label{fig:cache:flownum:mem}
    }
    \subfigure[Number of decoded flows.]{
    \includegraphics[width=0.23\textwidth]{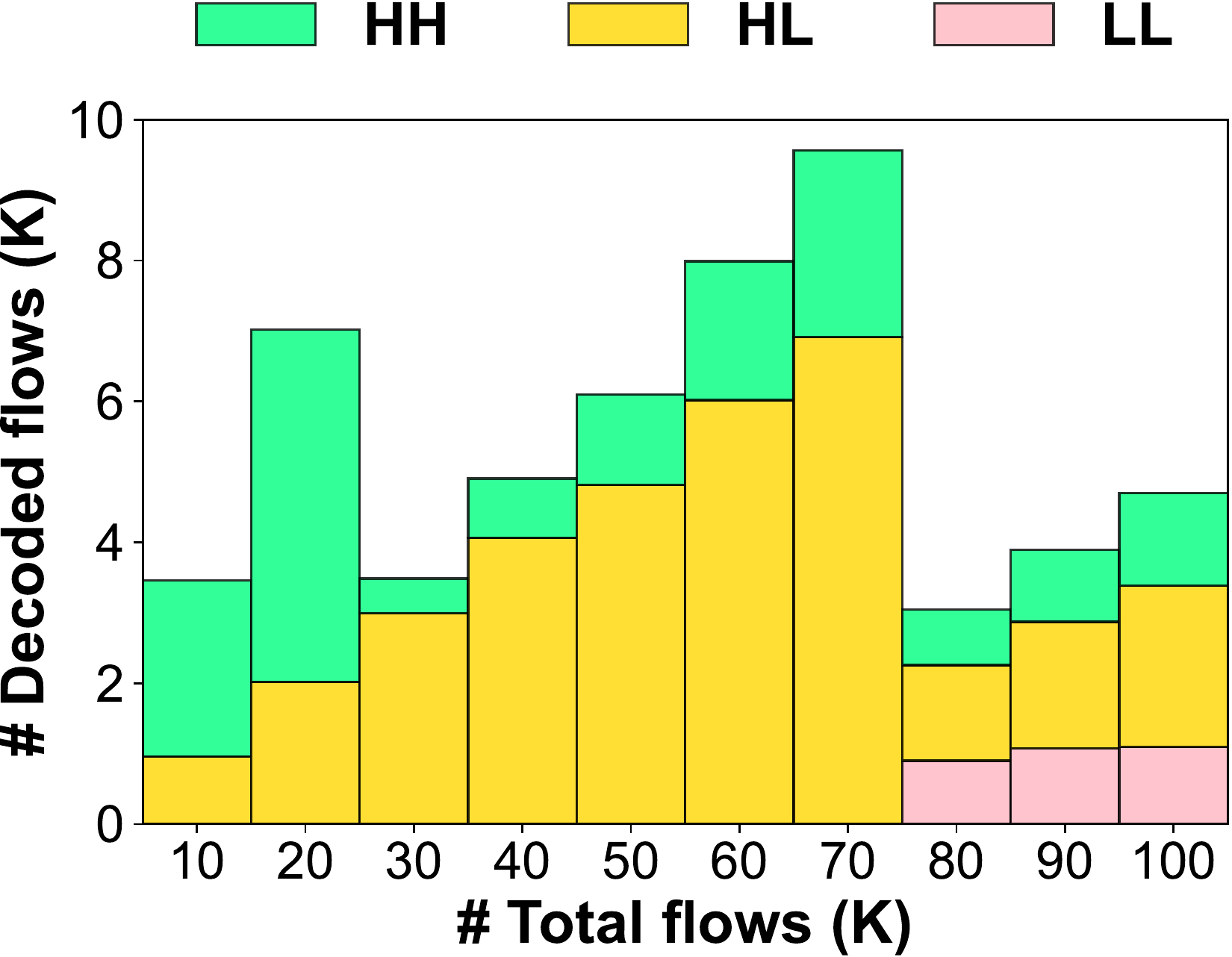}
    \label{fig:cache:flownum:num}
    }
    \subfigure[Threshold.]{ \includegraphics[width=0.23\textwidth]{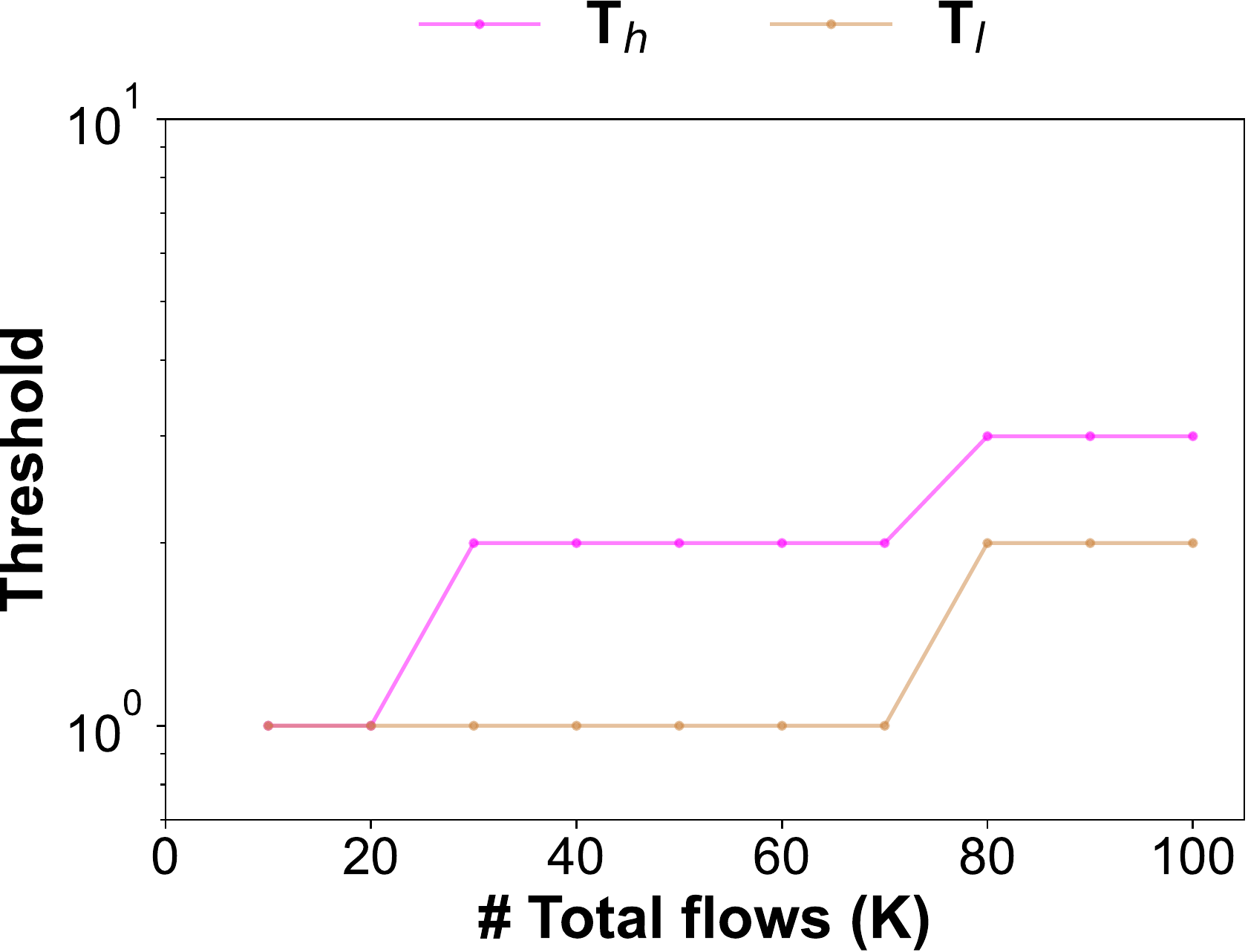}
    \label{fig:cache:flownum:thresh}
    }
    \subfigure[Sample rate.]{ \includegraphics[width=0.23\textwidth]{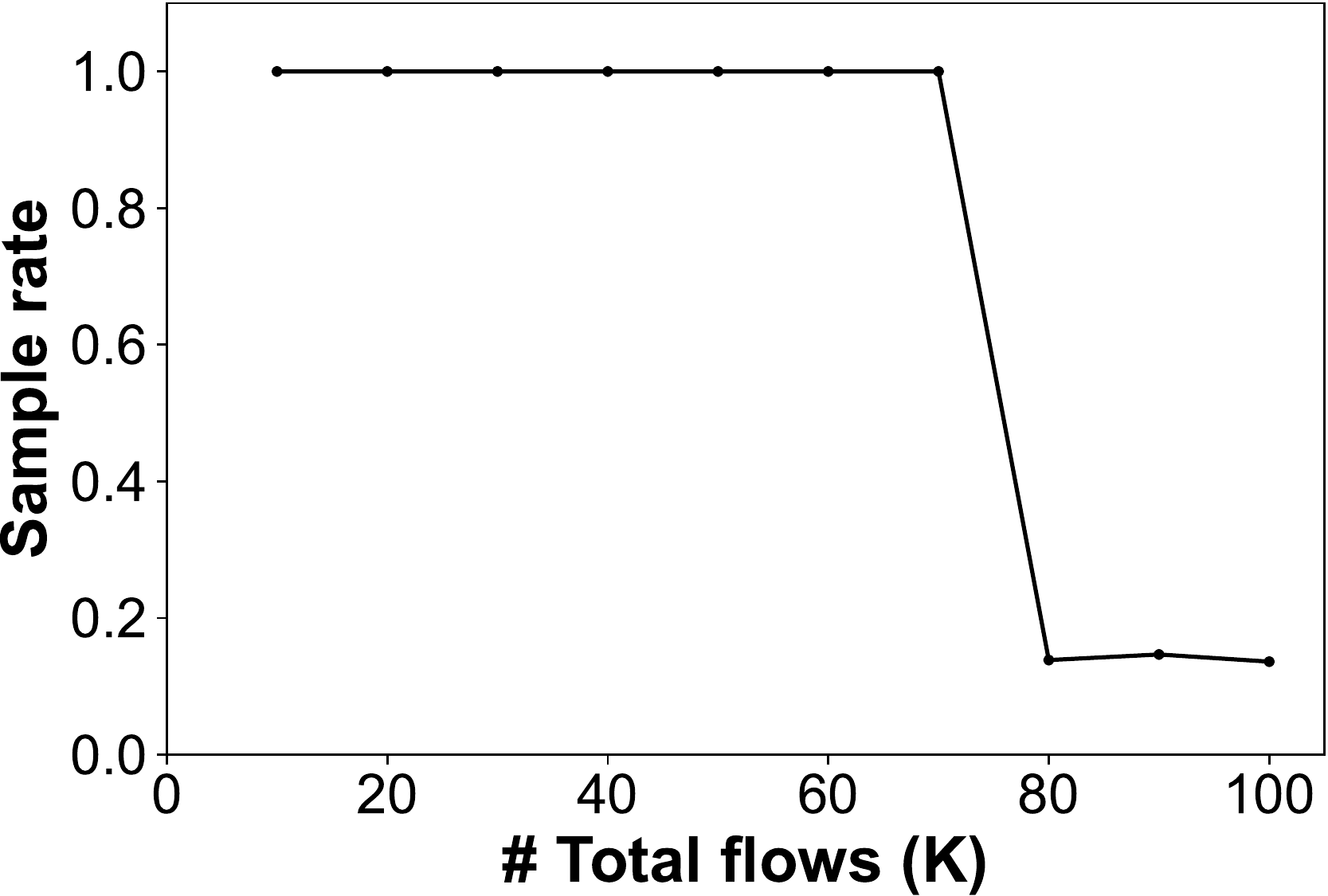}
    \label{fig:cache:flownum:sample}
    }
\caption{Measurement attention \textit{vs.} number of flows on CACHE workload.}
\label{fig:cache:flownum}
\end{figure*}

\begin{figure*}[ht!]
    \centering
    \subfigure[Memory division.]{
    \includegraphics[width=0.23\textwidth]{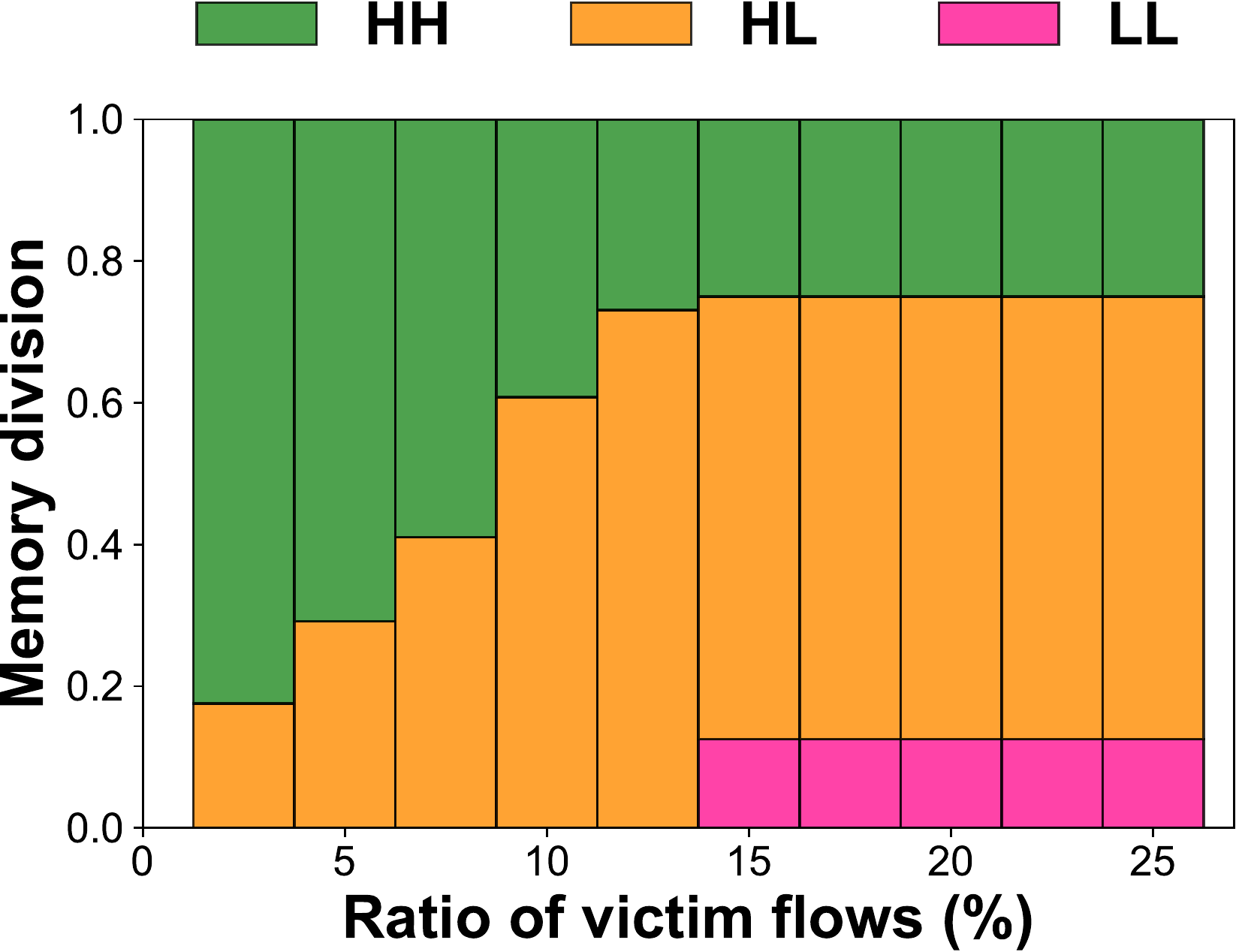}
    \label{fig:cache:lossrate:mem}
    }
    \subfigure[Number of decoded flows.]{
    \includegraphics[width=0.23\textwidth]{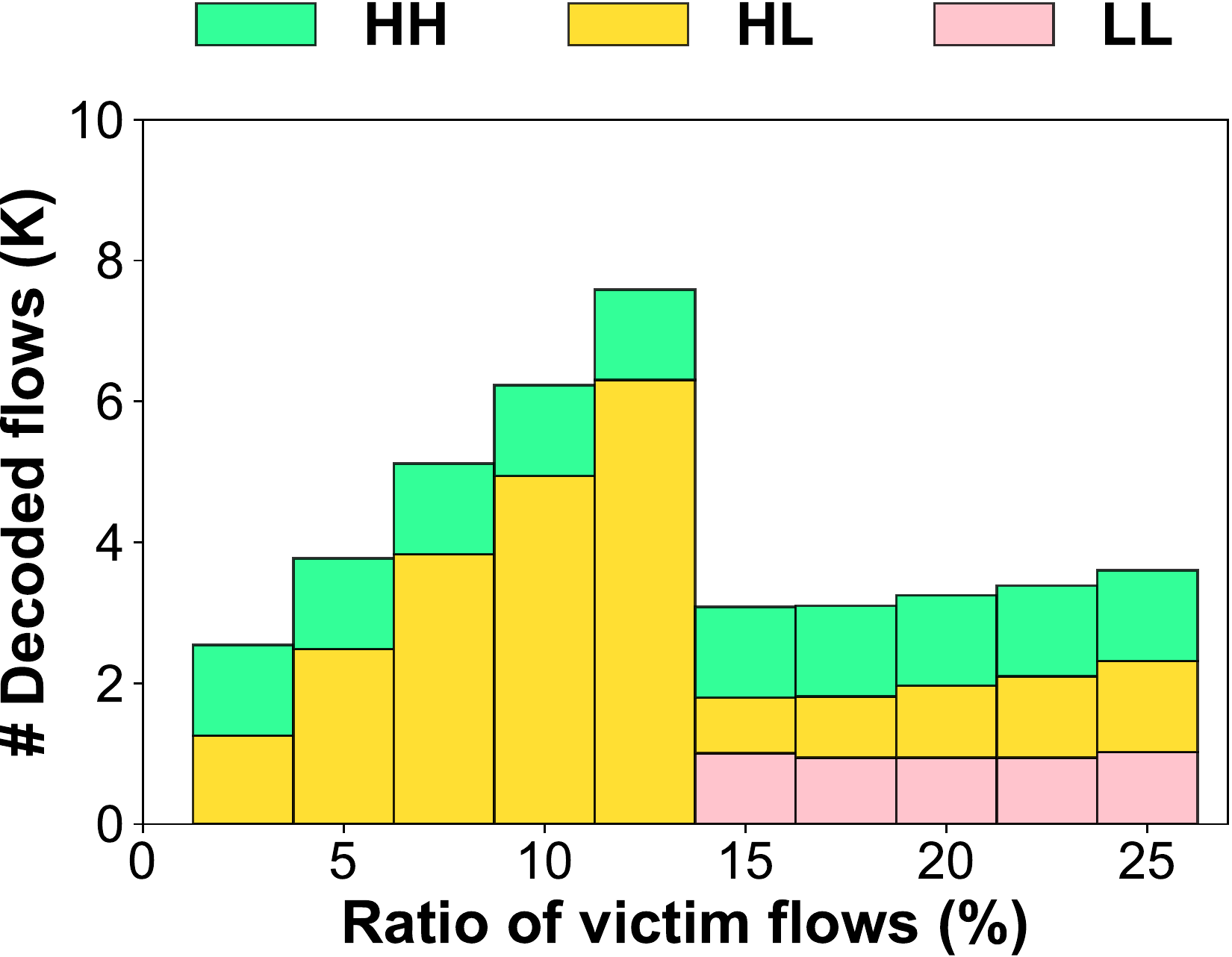}
    \label{fig:cache:lossrate:num}
    }
    \subfigure[Threshold.]{ \includegraphics[width=0.23\textwidth]{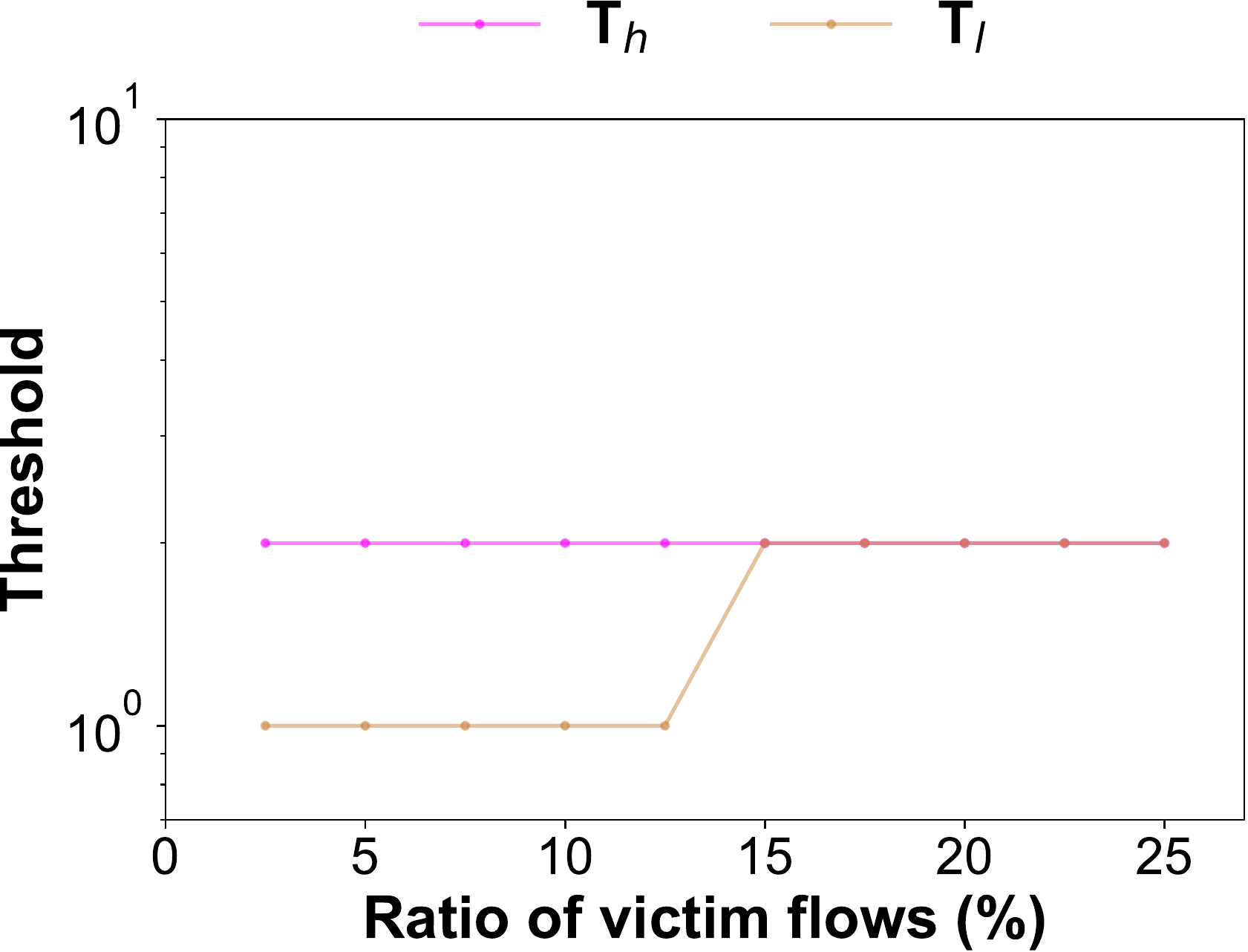}
    \label{fig:cache:lossrate:thresh}
    }
    \subfigure[Sample rate.]{ \includegraphics[width=0.23\textwidth]{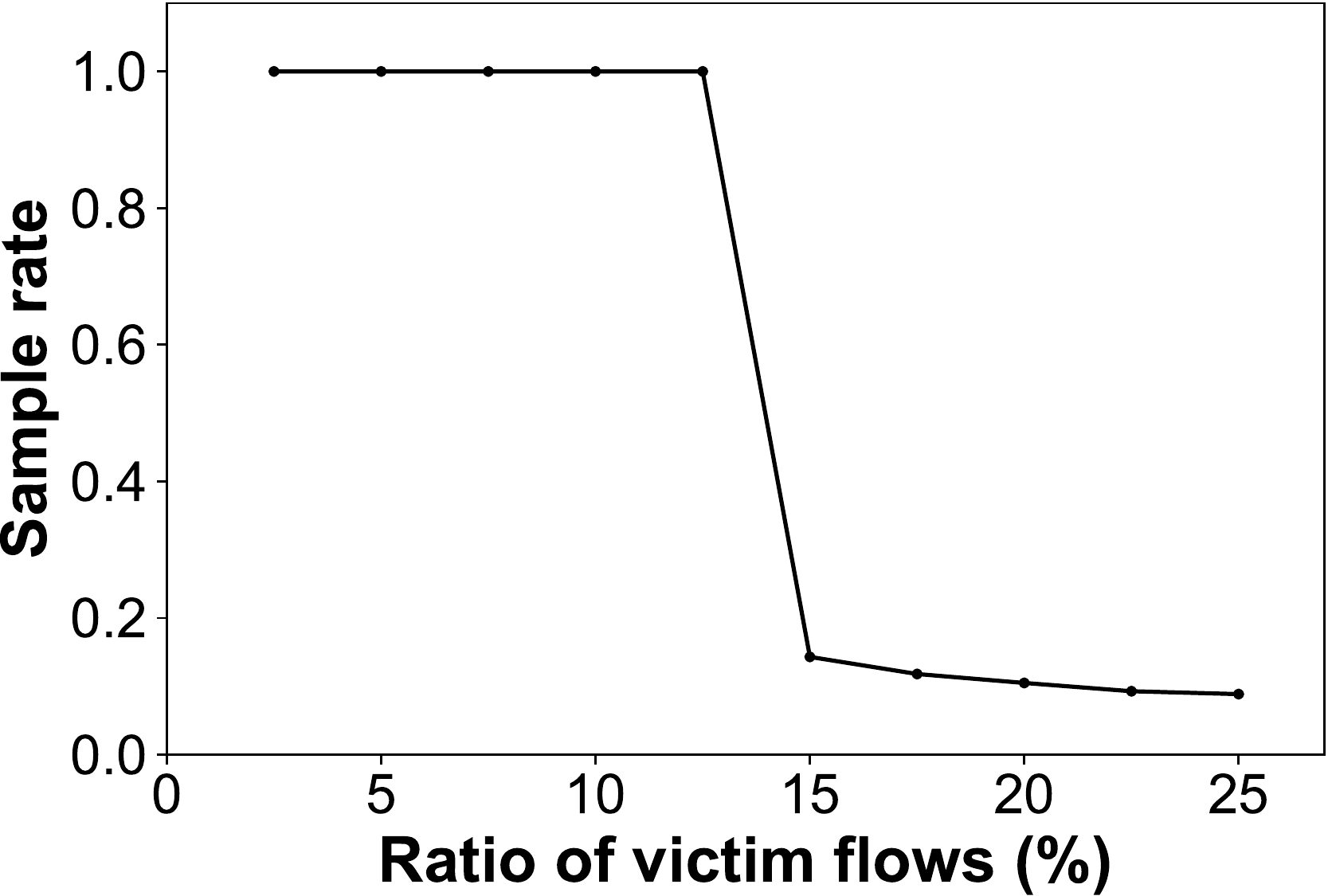}
    \label{fig:cache:lossrate:sample}
    }
\caption{Measurement attention \textit{vs.} ratio of victim flows on CACHE workload.}
\label{fig:cache:lossrate}
\end{figure*}

\begin{figure*}[ht!]
    \centering
    \subfigure[Memory division.]{
    \includegraphics[width=0.23\textwidth]{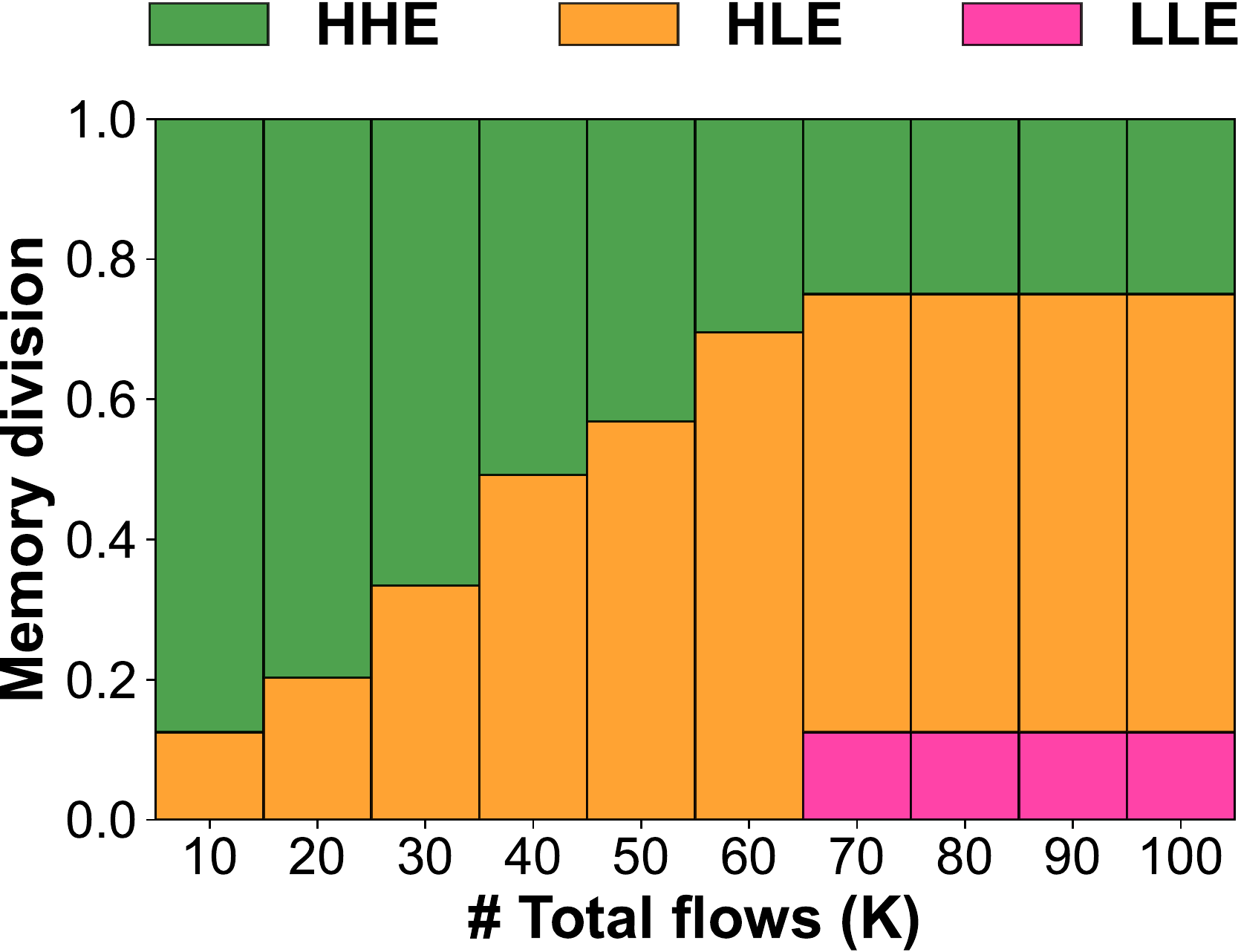}
    \label{fig:VL2:flownum:mem}
    }
    \subfigure[Number of decoded flows.]{
    \includegraphics[width=0.23\textwidth]{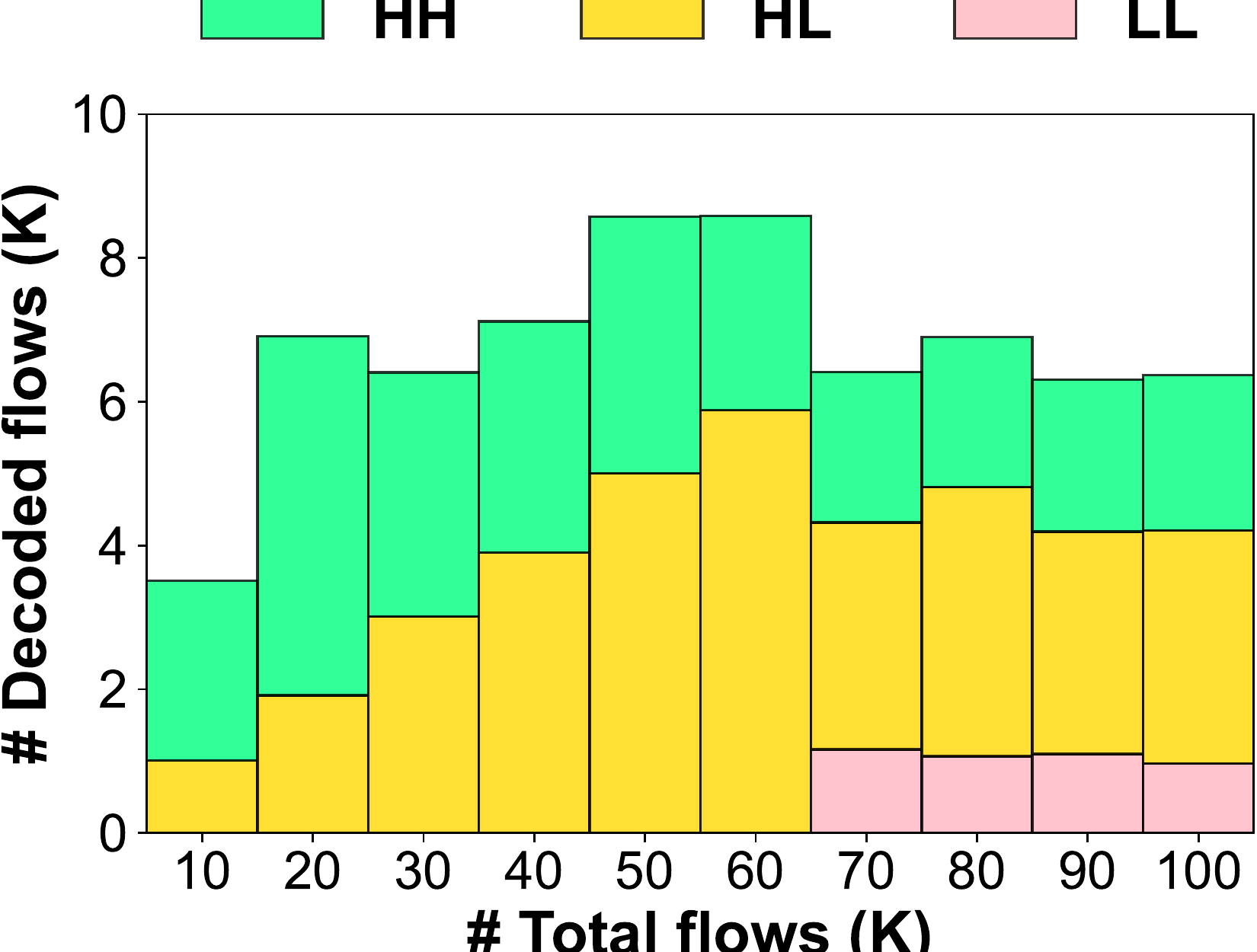}
    \label{fig:VL2:flownum:num}
    }
    \subfigure[Threshold.]{ \includegraphics[width=0.23\textwidth]{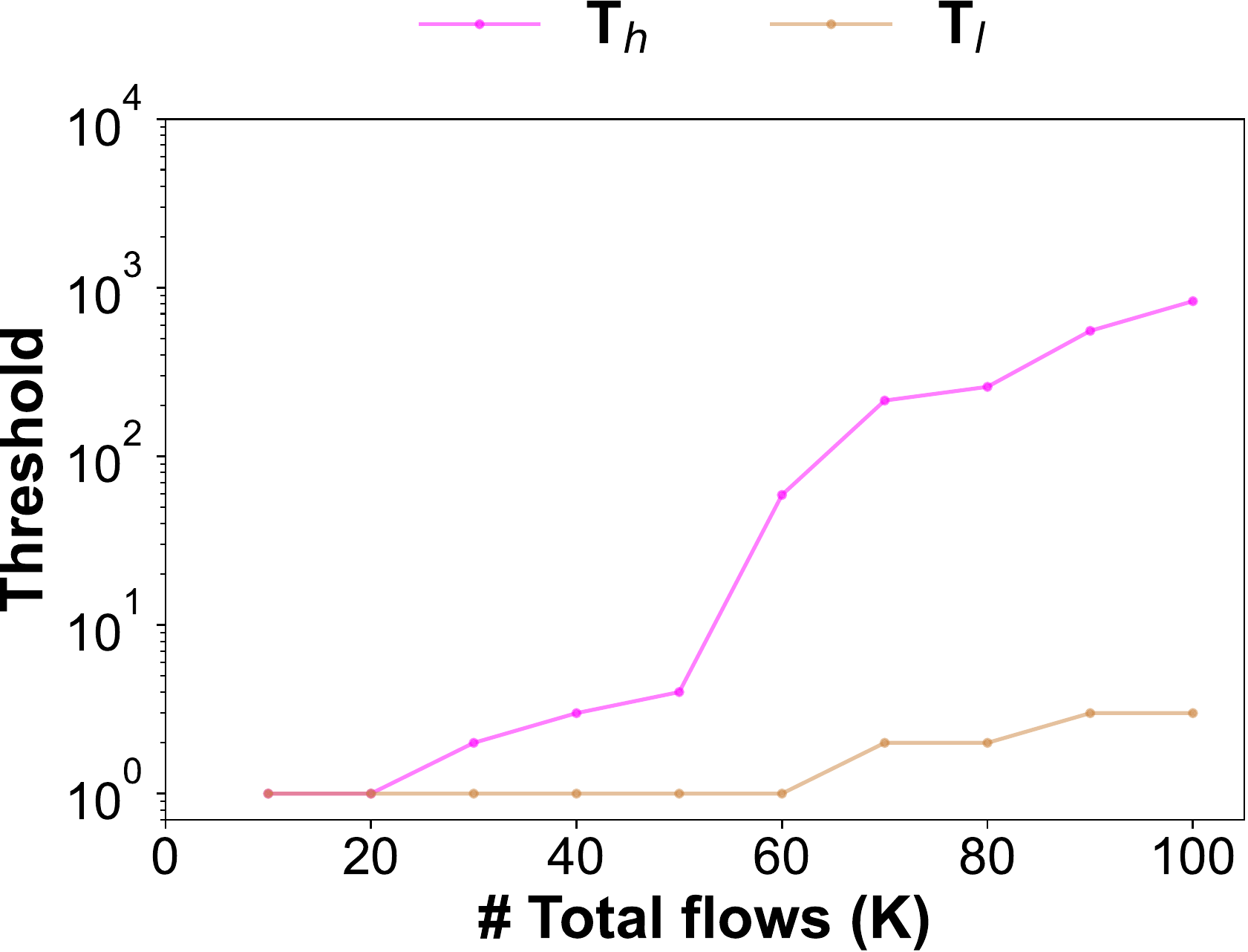}
    \label{fig:VL2:flownum:thresh}
    }
    \subfigure[Sample rate.]{ \includegraphics[width=0.23\textwidth]{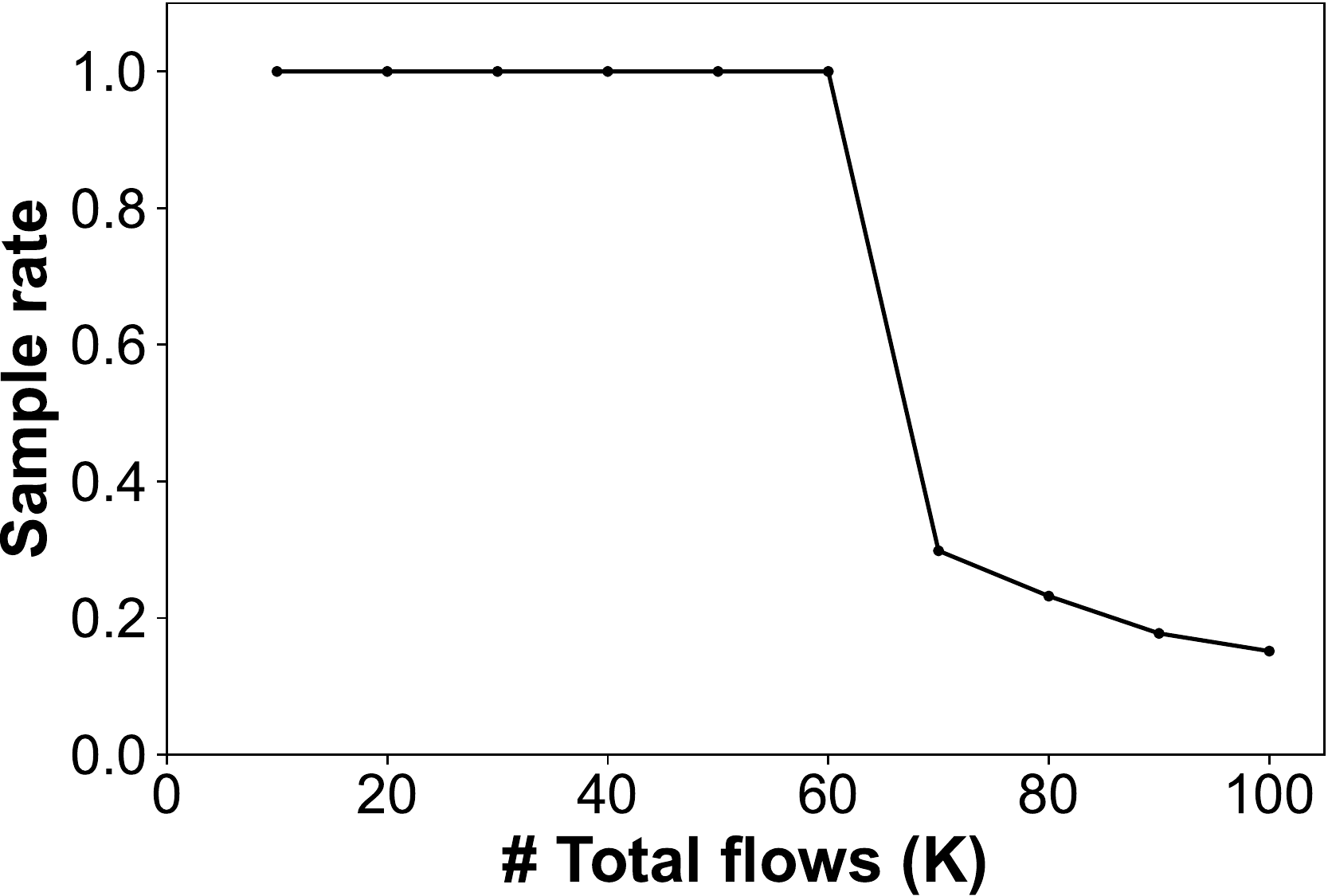}
    \label{fig:VL2:flownum:sample}
    }
\caption{Measurement attention \textit{vs.} number of flows on VL2 workload.}
\label{fig:VL2:flownum}
\end{figure*}

\begin{figure*}[ht!]
    \centering
    \subfigure[Memory division.]{
    \includegraphics[width=0.23\textwidth]{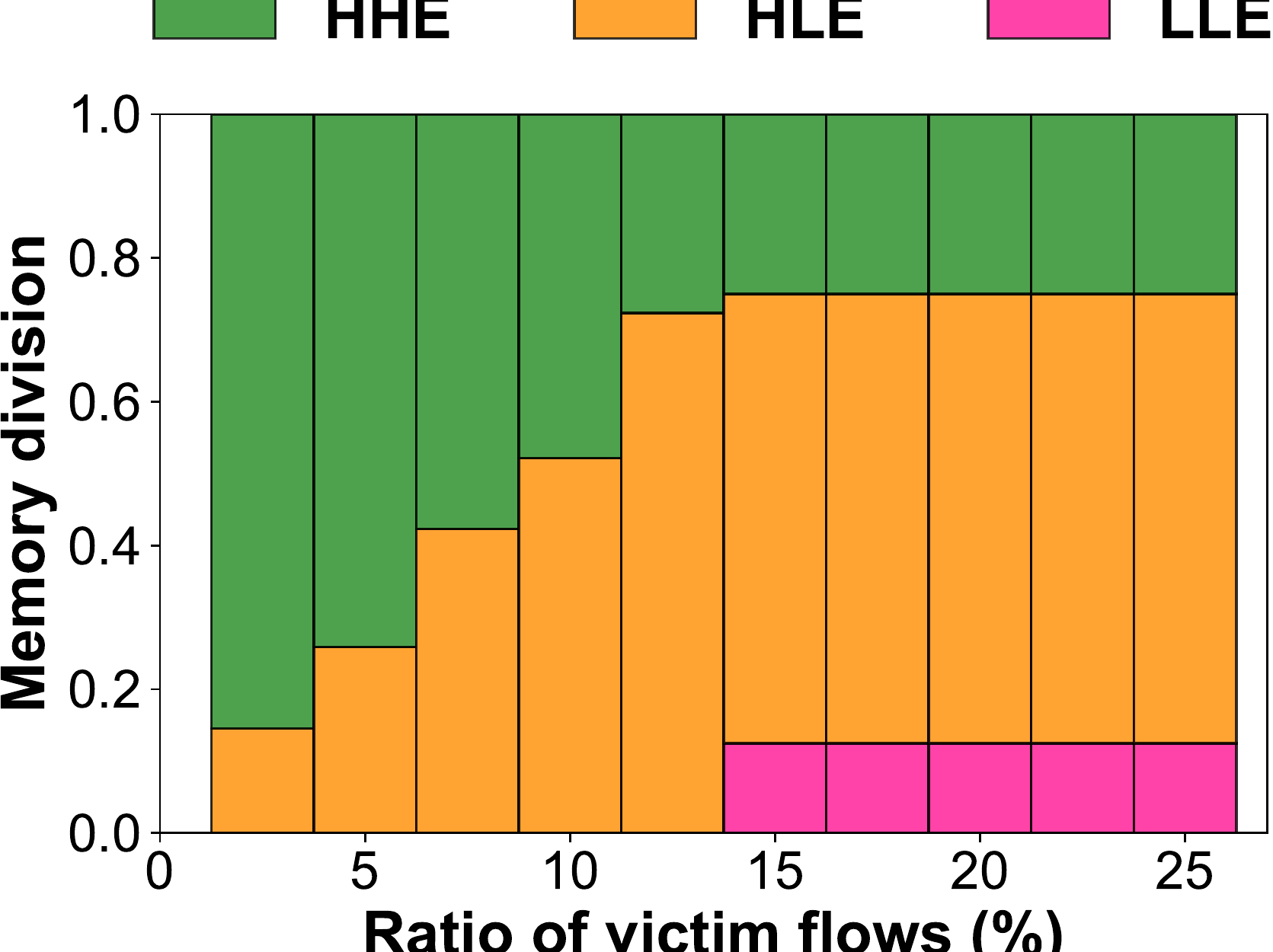}
    \label{fig:VL2:lossrate:mem}
    }
    \subfigure[Number of decoded flows.]{
    \includegraphics[width=0.23\textwidth]{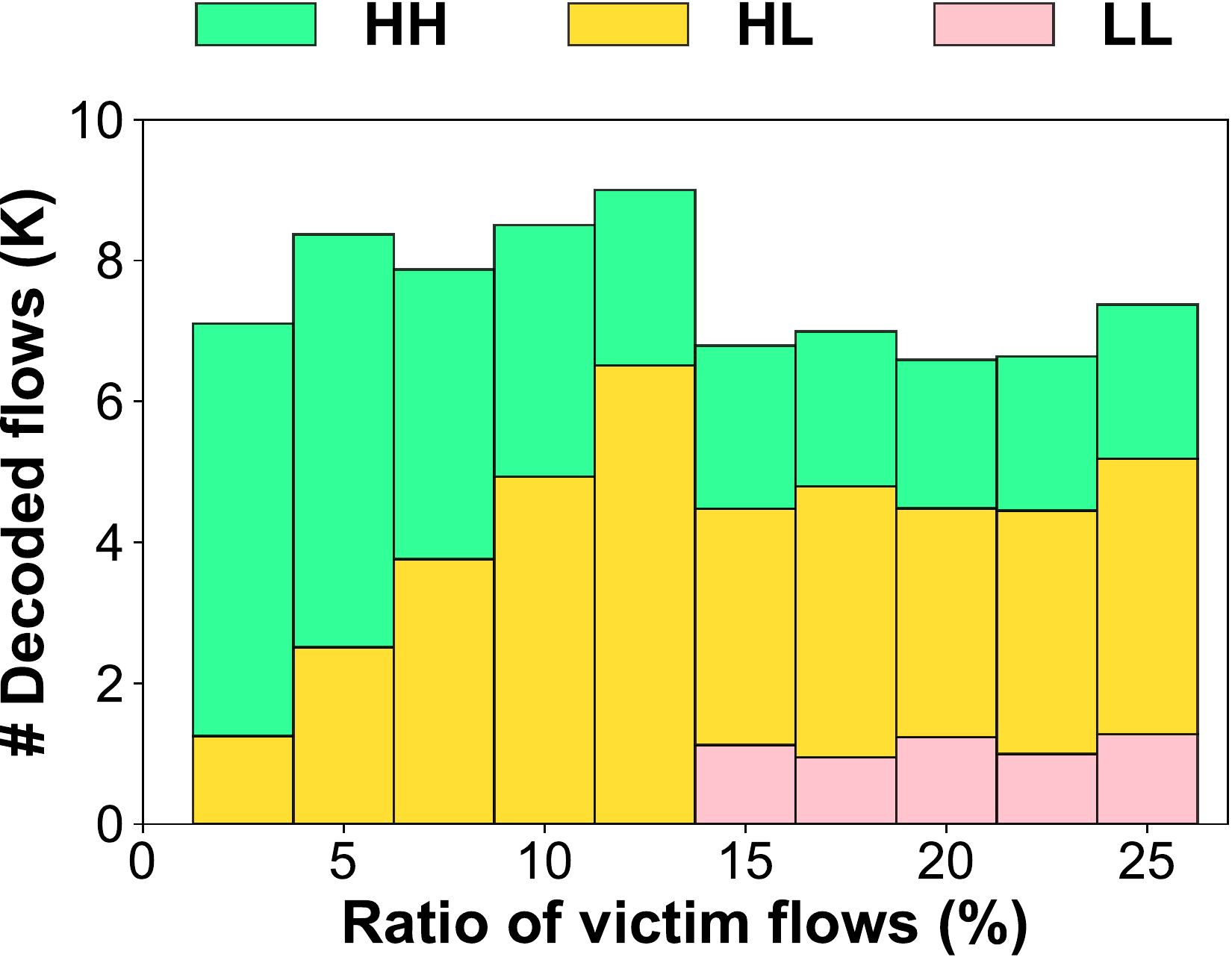}
    \label{fig:VL2:lossrate:num}
    }
    \subfigure[Threshold.]{ \includegraphics[width=0.23\textwidth]{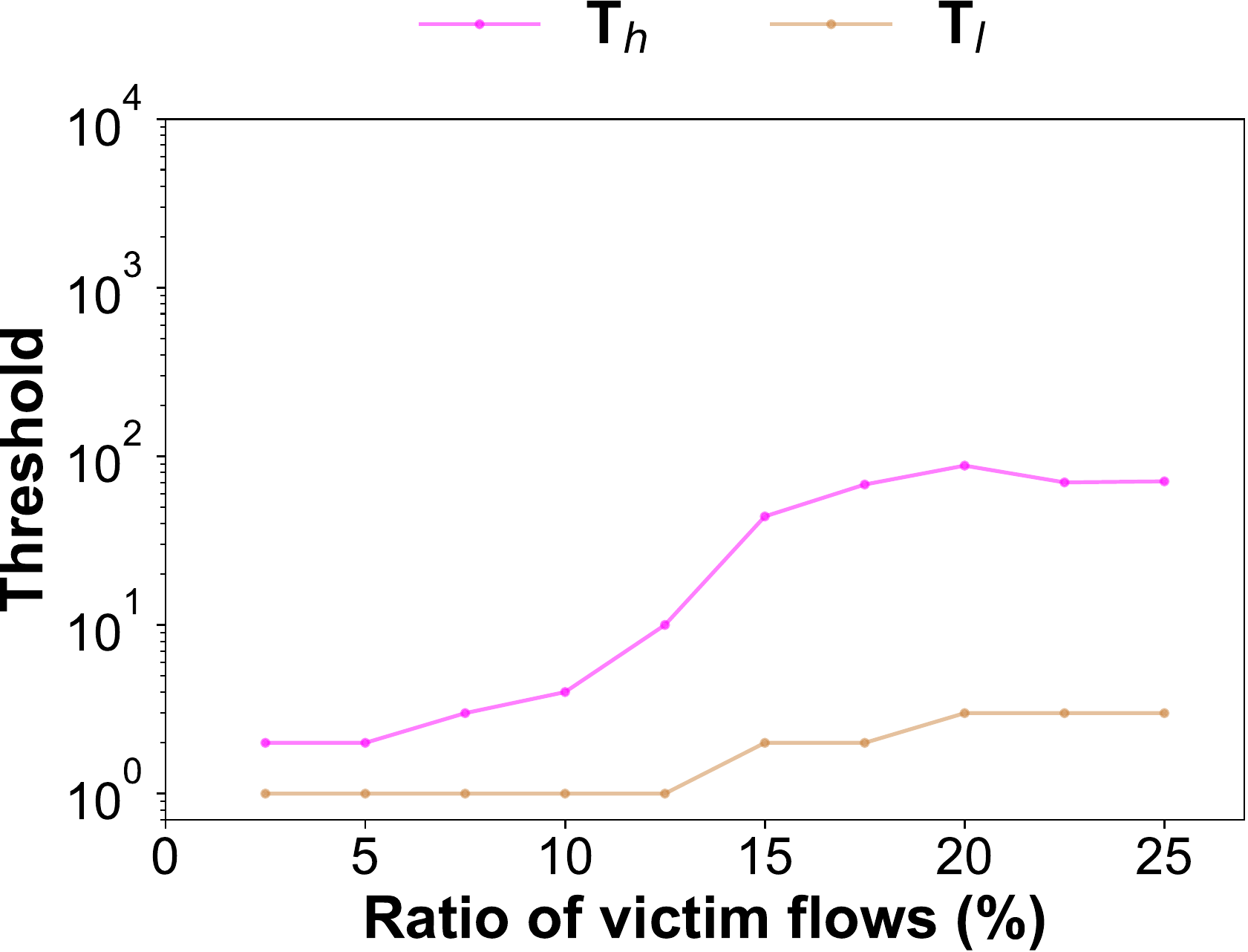}
    \label{fig:VL2:lossrate:thresh}
    }
    \subfigure[Sample rate.]{ \includegraphics[width=0.23\textwidth]{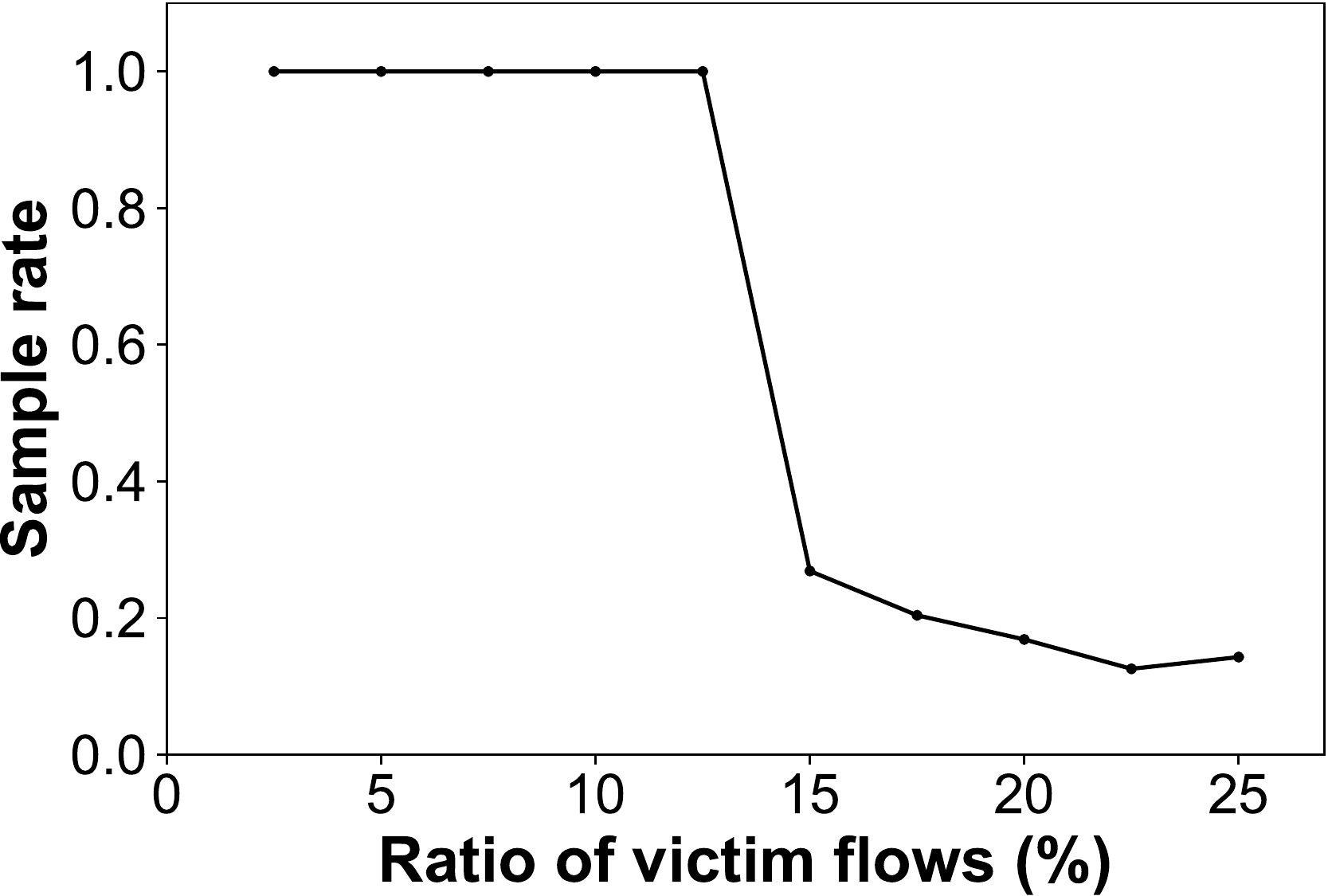}
    \label{fig:VL2:lossrate:sample}
    }
\caption{Measurement attention \textit{vs.} ratio of victim flows on VL2 workload.}
\label{fig:VL2:lossrate}
\end{figure*}

\begin{figure*}[ht!]
    \centering
    \subfigure[Memory division.]{
    \includegraphics[width=0.23\textwidth]{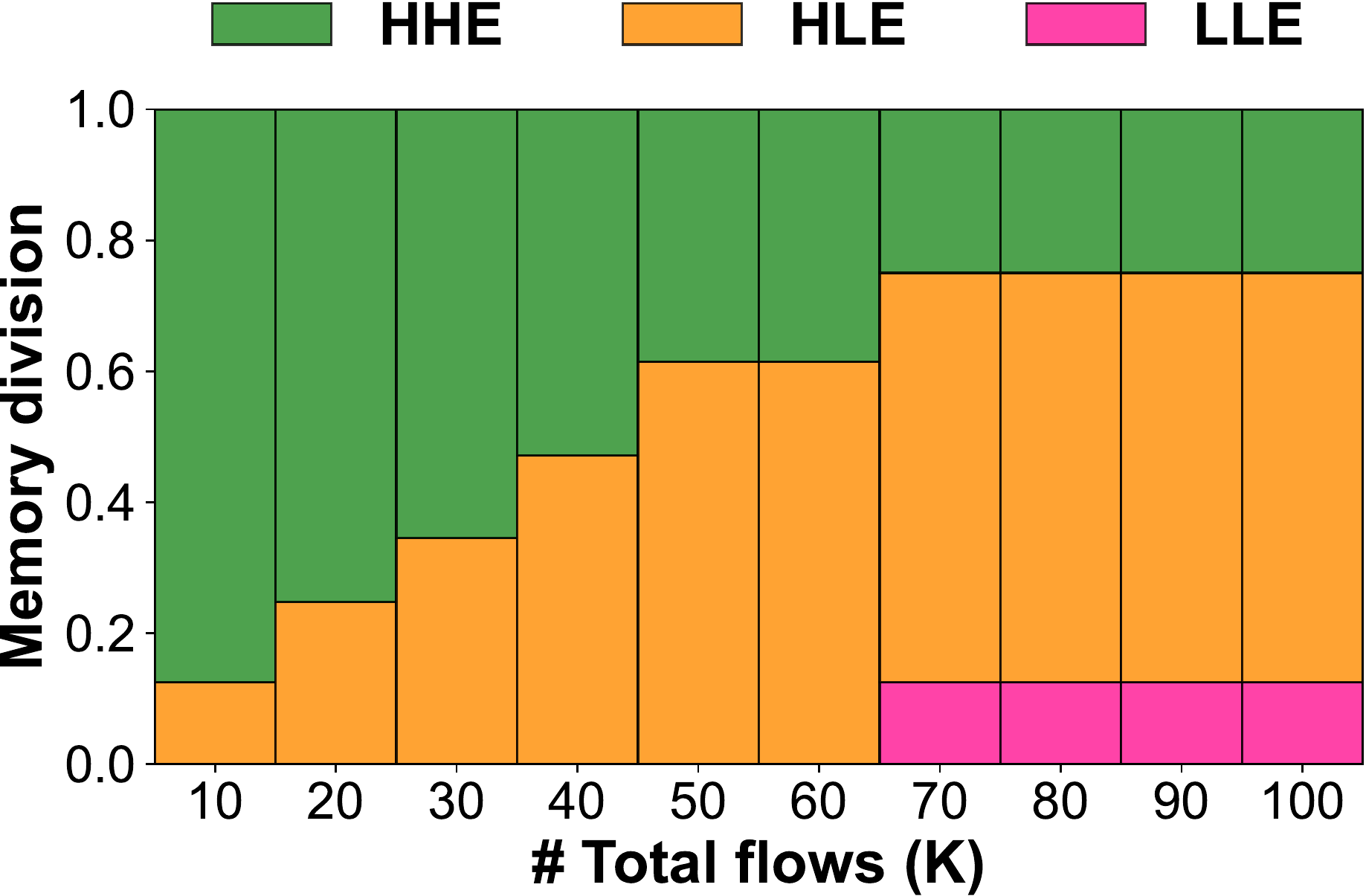}
    \label{fig:HADOOP:flownum:mem}
    }
    \subfigure[Number of decoded flows.]{
    \includegraphics[width=0.23\textwidth]{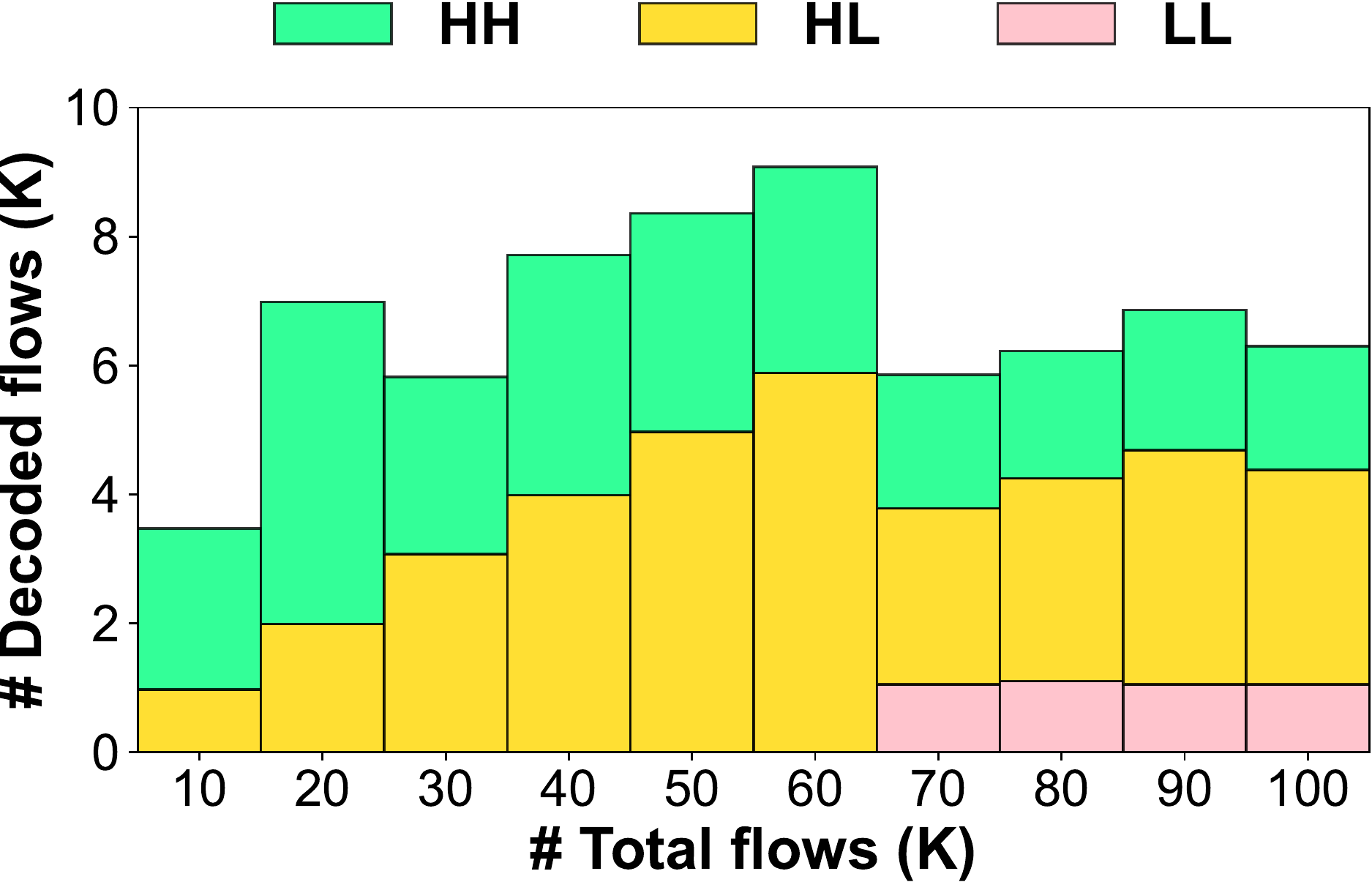}
    \label{fig:HADOOP:flownum:num}
    }
    \subfigure[Threshold.]{ \includegraphics[width=0.23\textwidth]{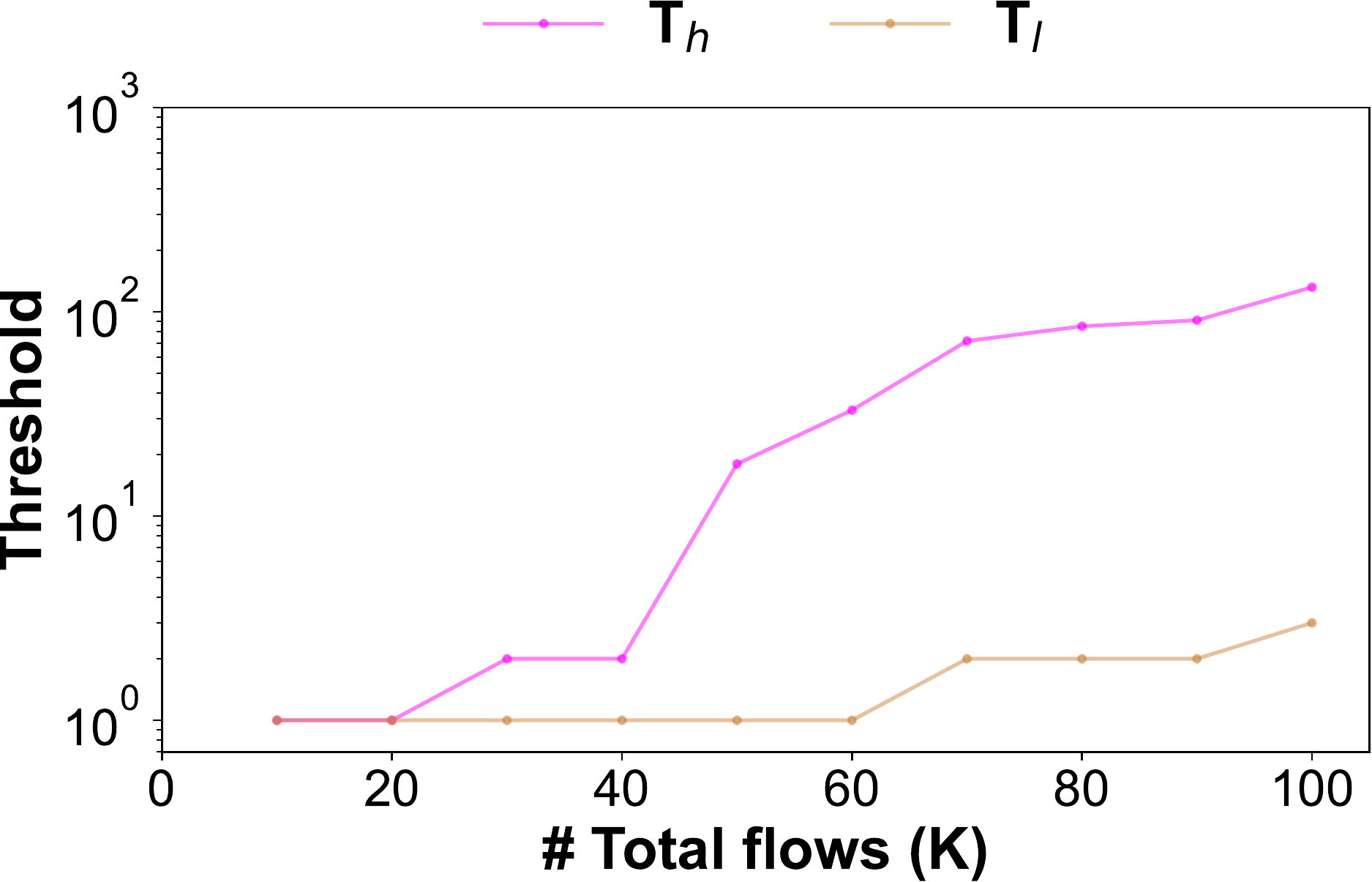}
    \label{fig:HADOOP:flownum:thresh}
    }
    \subfigure[Sample rate.]{ \includegraphics[width=0.23\textwidth]{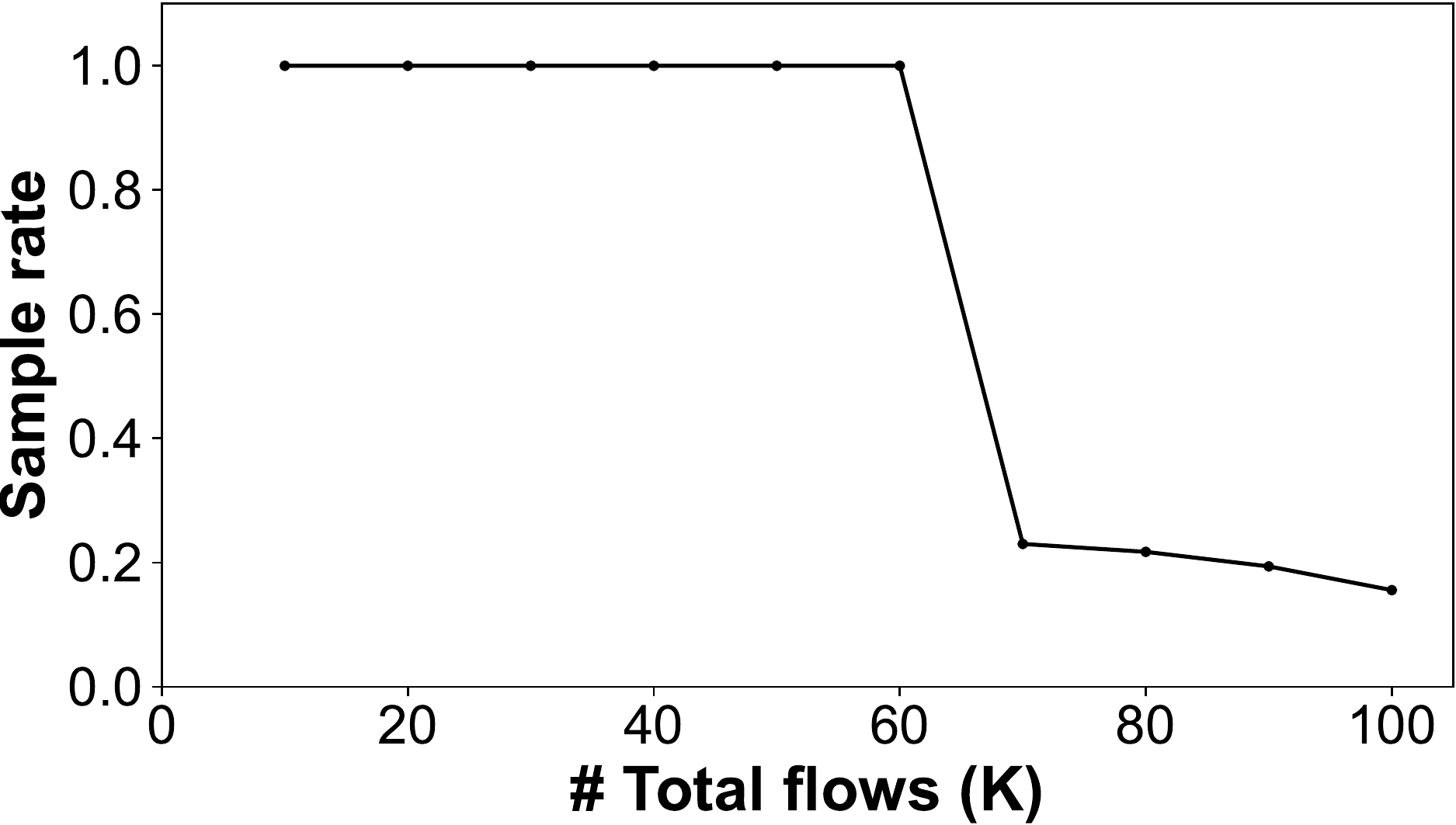}
    \label{fig:HADOOP:flownum:sample}
    }
\caption{Measurement attention \textit{vs.} number of flows on HADOOP workload.}
\label{fig:HADOOP:flownum}
\end{figure*}

\begin{figure*}[ht!]
    \centering
    \subfigure[Memory division.]{
    \includegraphics[width=0.23\textwidth]{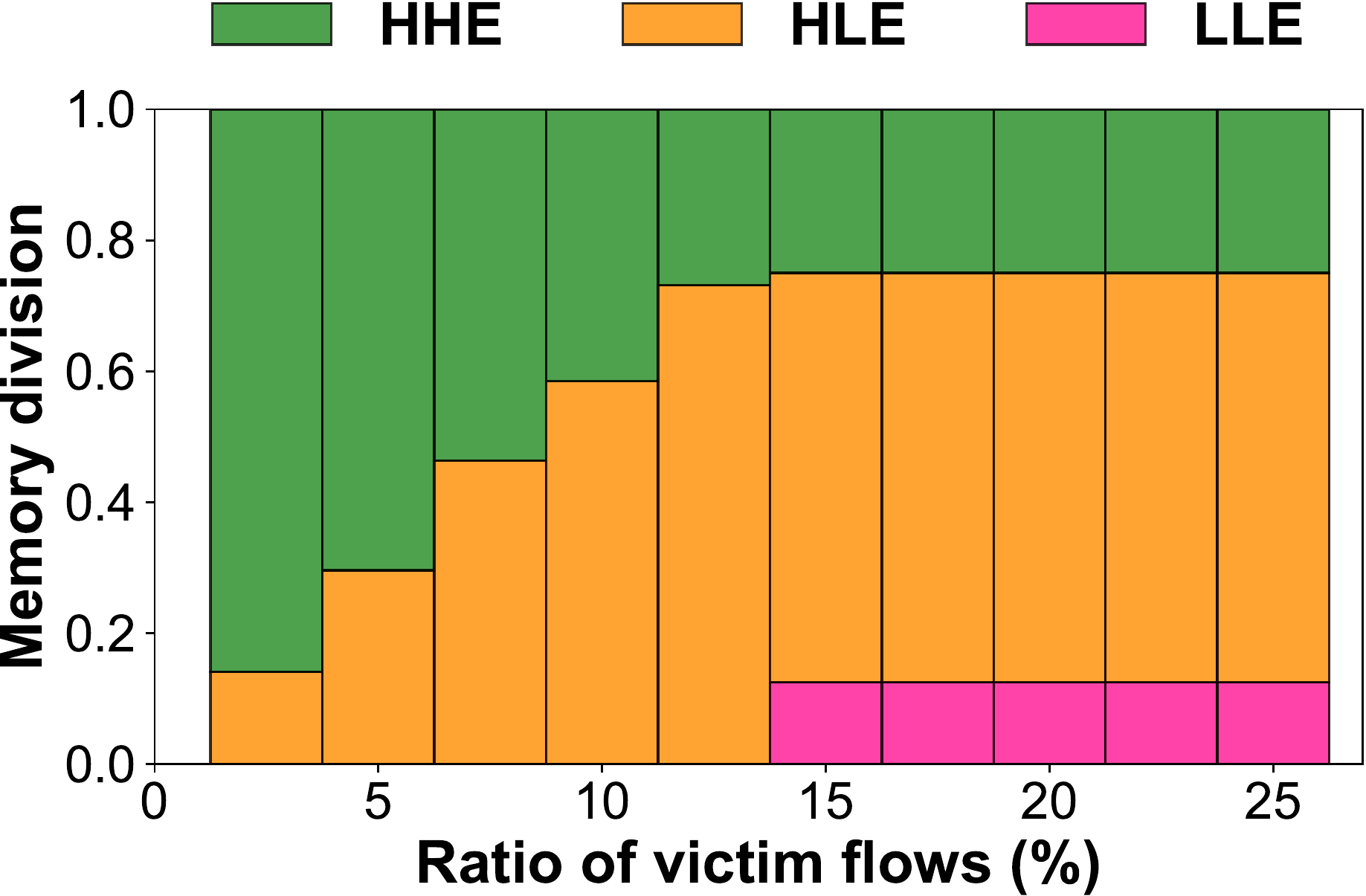}
    \label{fig:HADOOP:lossrate:mem}
    }
    \subfigure[Number of decoded flows.]{
    \includegraphics[width=0.23\textwidth]{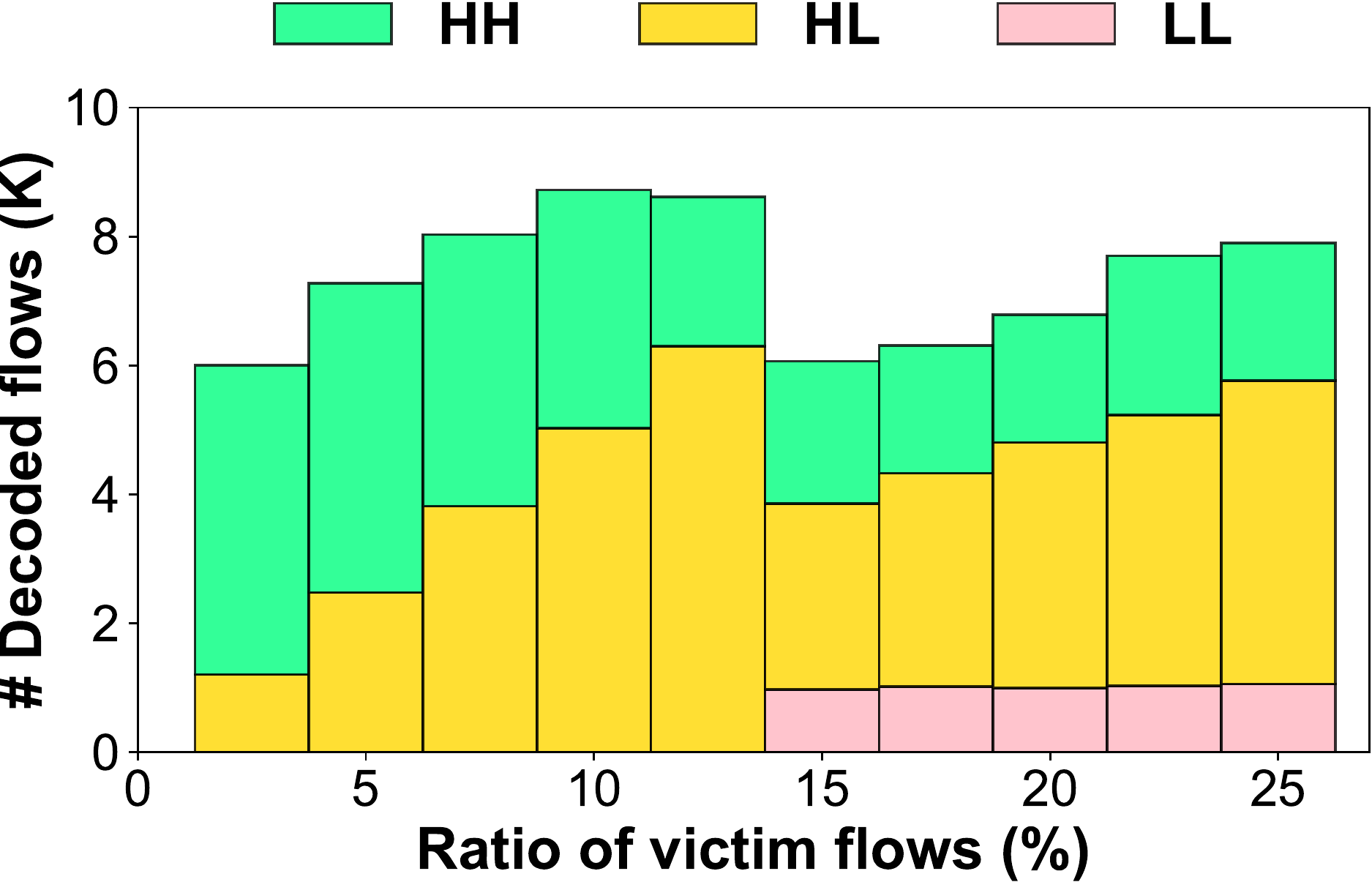}
    \label{fig:HADOOP:lossrate:num}
    }
    \subfigure[Threshold.]{ \includegraphics[width=0.23\textwidth]{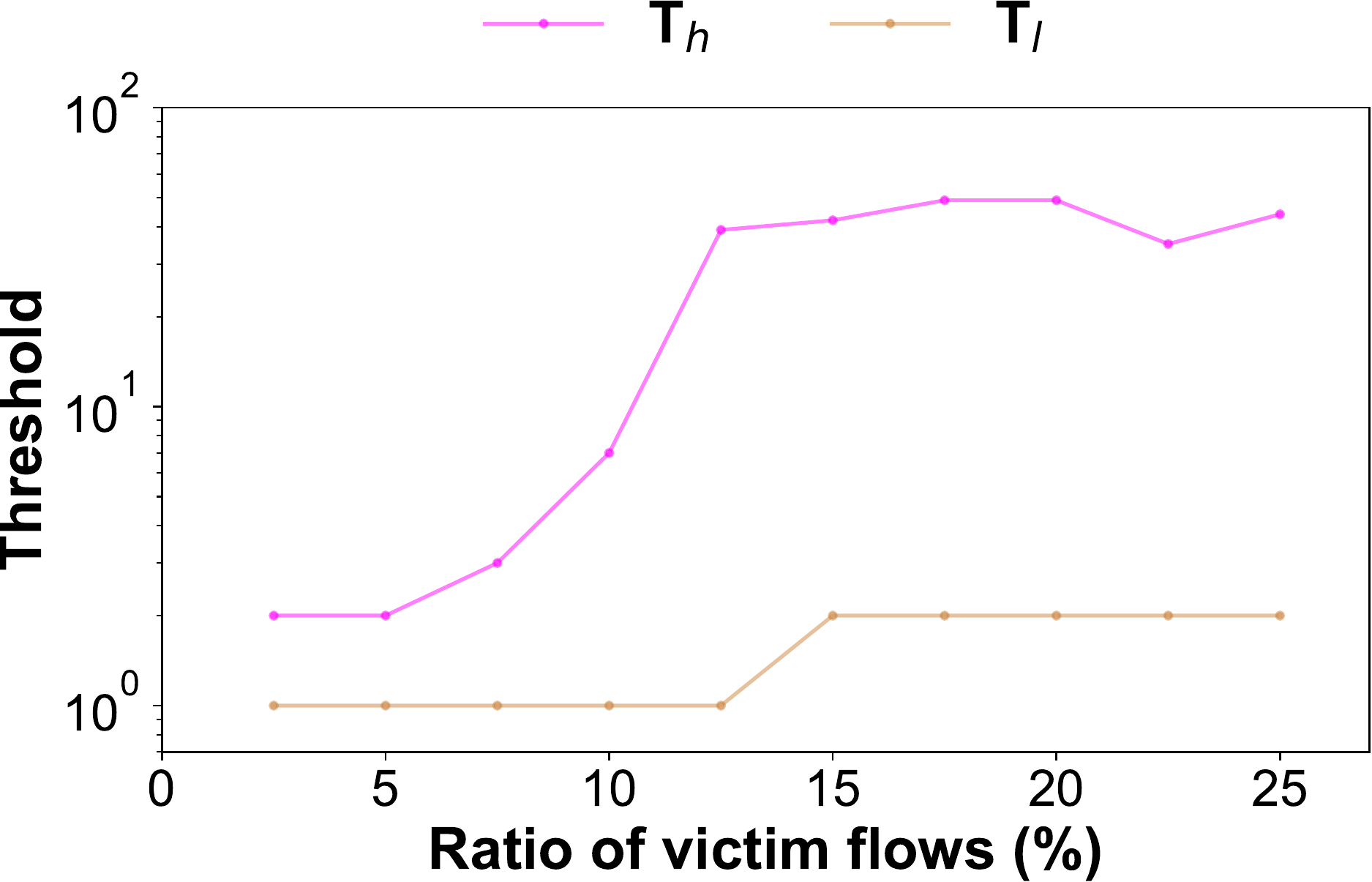}
    \label{fig:HADOOP:lossrate:thresh}
    }
    \subfigure[Sample rate.]{ \includegraphics[width=0.23\textwidth]{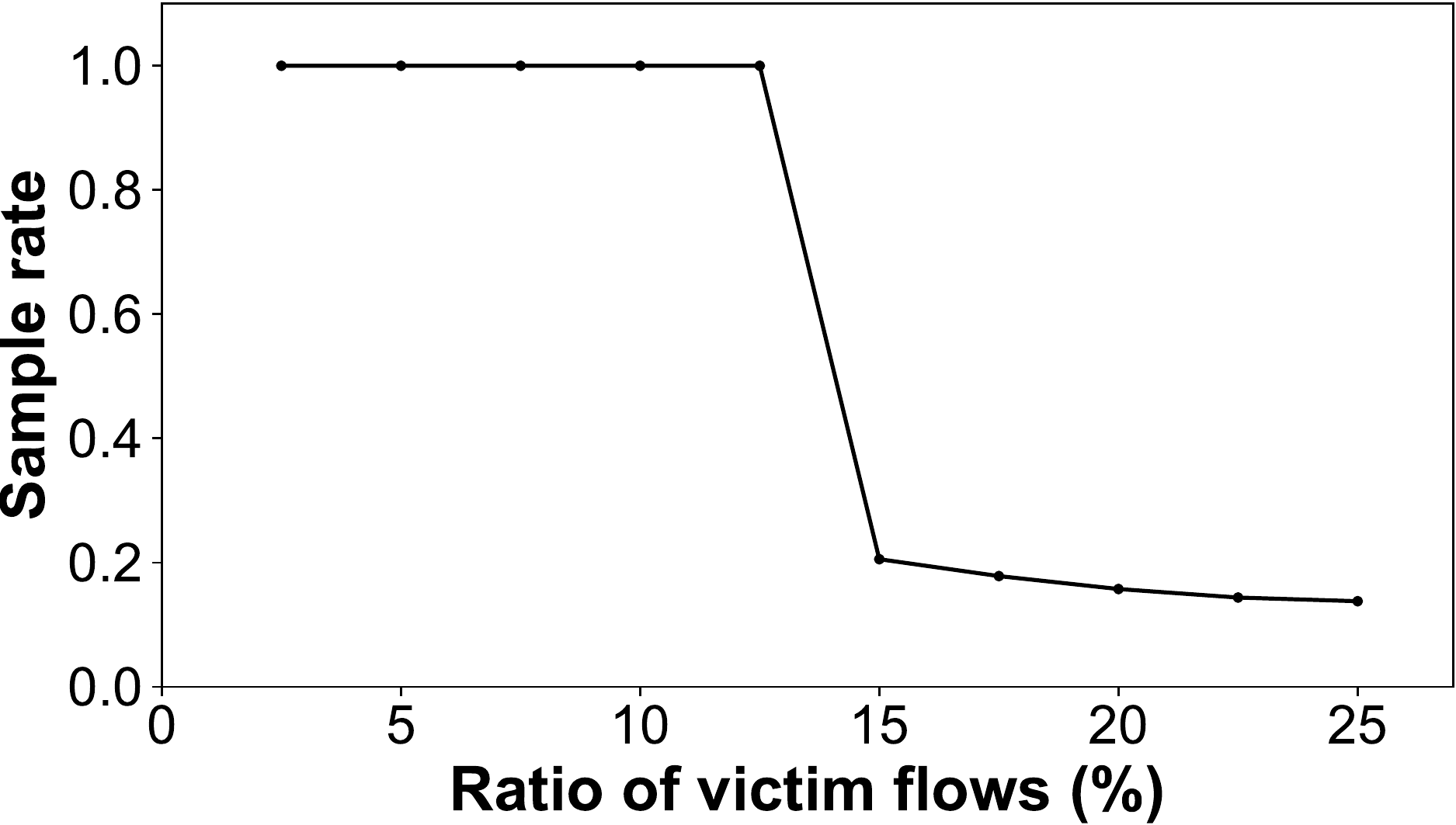}
    \label{fig:HADOOP:lossrate:sample}
    }
\caption{Measurement attention \textit{vs.} ratio of victim flows on HADOOP workload.}
\label{fig:HADOOP:lossrate}
\end{figure*}

\subsection{VL2 Workload}

\bbb{Measurement attention \textit{vs.} number of flows (Figure~\ref{fig:VL2:flownum}):}
As the number of flows increases from $10$K to $20$K, \systemname{} can record all flows and victim flows, and therefore sets both $T_h$ and $T_l$ to $1$.
As the number of flows increases from $30$K to $60$K, \systemname{} allocates more and more memory to HL encoders, and increases $T_h$ to
decrease the number of HH candidates to avoid decoding failure.
As the number of flows increases from $70$K to $100$K, the healthy network state transitions to the ill network state.
\systemname{} allocates memory to LL encoders, increases $T_l$, and decreases the sample rate, so as to to control the number of HLs and sampled LLs.
Meanwhile, \systemname{} keeps increasing $T_h$ to control the number of HH candidates.
Throughout the experiment, \systemname{} maintains the load factor higher than $51\%$.
The load factor is sightly lower, and it is because the distribution of VL2 is highly skewed.
Decreasing the thresholds by $1$ will lead to huge increase in the number of recorded flows, and thus causing decoding failure.

\bbb{Measurement attention \textit{vs.} ratio of victim flows (Figure~\ref{fig:VL2:lossrate}):}
As the ratio of victim flows increases from $2.5\%$ to $12.5\%$, \systemname{} records all victim flows by allocating more and more memory to HL encoders, and increases $T_h$ to decrease the number of HH candidates.
As the ratio of victim flows increases from $15\%$ to $25\%$, \systemname{} cannot record all victim flows and thus the healthy network state transitions to the ill network state.
\systemname{} allocates memory to LL encoders, increases $T_l$, and decreases the sample rate so as to control the number of HLs and sampled LLs.
Meanwhile, because the memory of upstream HH encoders and the number of flows remain unchanged, $T_h$ also remains unchanged.
Throughout the experiment, \systemname{} maintains the load factor higher than $53\%$.
The load factor is sightly lower, and the reason is the same as the former experiment of the the number of flow.

\subsection{HADOOP Workload}

\bbb{Measurement attention \textit{vs.} number of flows (Figure~\ref{fig:HADOOP:flownum}):}
As the number of flows increases from $10$K to $20$K, \systemname{} can record all flows and victim flows, and therefore sets both $T_h$ and $T_l$ to $1$.
As the number of flows increases from $30$K to $60$K, \systemname{} allocates more and more memory to HL encoders, and increases $T_h$ to
decrease the number of HH candidates to avoid decoding failure.
As the number of flows increases from $70$K to $100$K, the healthy network state transitions to the ill network state. 
\systemname{} allocates memory to LL encoders, increases $T_l$, and decreases the sample rate to control the number of HLs and sampled LLs.
Meanwhile, \systemname{} keeps increasing $T_h$ to control the number of HH candidates.
Throughout the experiment, \systemname{} maintains the load factor higher than $47\%$.
The load factor is sightly lower, and it is because the distribution of HADOOP is highly skewed.
Decreasing the thresholds by $1$ will lead to huge increase in the number of recorded flows, and thus causing decoding failure.

\bbb{Measurement attention \textit{vs.} ratio of victim flows (Figure~\ref{fig:HADOOP:lossrate}):}
As the ratio of victim flows increases from $2.5\%$ to $12.5\%$, \systemname{} records all victim flows by allocating more and more memory to HL encoders, and increases $T_h$ to decrease the number of HH candidates.
As the ratio of victim flows increases from $15\%$ to $25\%$, \systemname{} cannot record all victim flows and thus transitions to ill network state.
\systemname{} allocates memory to LL encoders, increase $T_l$, and decreases sample rate, so as to control the number of HHs and HLs.
Throughout the experiment, \systemname{} maintains the load factor higher than $48\%$.
The load factor is sightly lower, and the reason is the same as the former experiment of the the number of flows.

\section{Evaluation on Time/Bandwidth Overhead}

\label{app:timeoverhead}

To evaluate how fast can \systemname{} monitor the network, we evaluate various factors that could affect the setting of epoch length: 1) the time and bandwidth required to collect sketches from edge switches, 2) the time required to respond to different network states, and 3) the time required to reconfigure the \systemname{} data plane.
The central controller only uses one CPU core in evaluation.

\bbb{Time/Bandwidth overhead for collection (Figure~\ref{fig:testbed:bandwidth}):}
Experimental results show that \systemname{} consumes only a small amount of time and bandwidth in collecting all the data structures deployed on edge switches.
\systemname{} takes a total of 11.33ms to collect sketches (refer to Appendix \ref{sec:cpi} for details).
As for bandwidth, when the epoch length is set to $50$ms, the bandwidth overhead for collection is 317Mbps, consuming only 0.8\% bandwidth for the central controller equipped with a 40Gb NIC.

\bbb{Response time to different network states  (Figure~\ref{fig:testbed:response-time}):}
Experimental results show that \systemname{} can always respond to different network states within 30ms.
We count the response time of \systemname{} to each network state previously appeared in Figure \ref{fig:testbed:flownum}-\ref{fig:testbed:lossrate}, where the response time refers to the time interval between the central controller finishing the collection of sketches and the central controller generating the reconfiguration packet\footnote{The central controller sends the reconfiguration packets to edge switches to reconfigure their data planes. Please refer to Appendix \ref{sec:cpi} for details.} for the \systemname{} data plane.
Although the response time does not seem to show a clear trend with the network state, it is mainly determined by the number of HH candidates, because the central controller needs to first extract them from the upstream HH encoders and then reinsert them to the upstream HL encoders.
As shown in Figure \ref{fig:testbed:time:lossrate}, as the ratio of victim flows increases, the response time on all the four workloads decreases because the number of HH candidates decreases.
The response time finally stabilizes because the fixed memory allocation in the ill network state always decodes a similar number of flows.

\begin{figure}[t]
\setlength{\subfigcapskip}{-0.2cm}
\setlength{\abovecaptionskip}{-0.1cm}
\setlength{\belowcaptionskip}{-0.4cm}
    \centering
    \subfigure[Varying number of flows.]{ \includegraphics[width=0.2\textwidth]{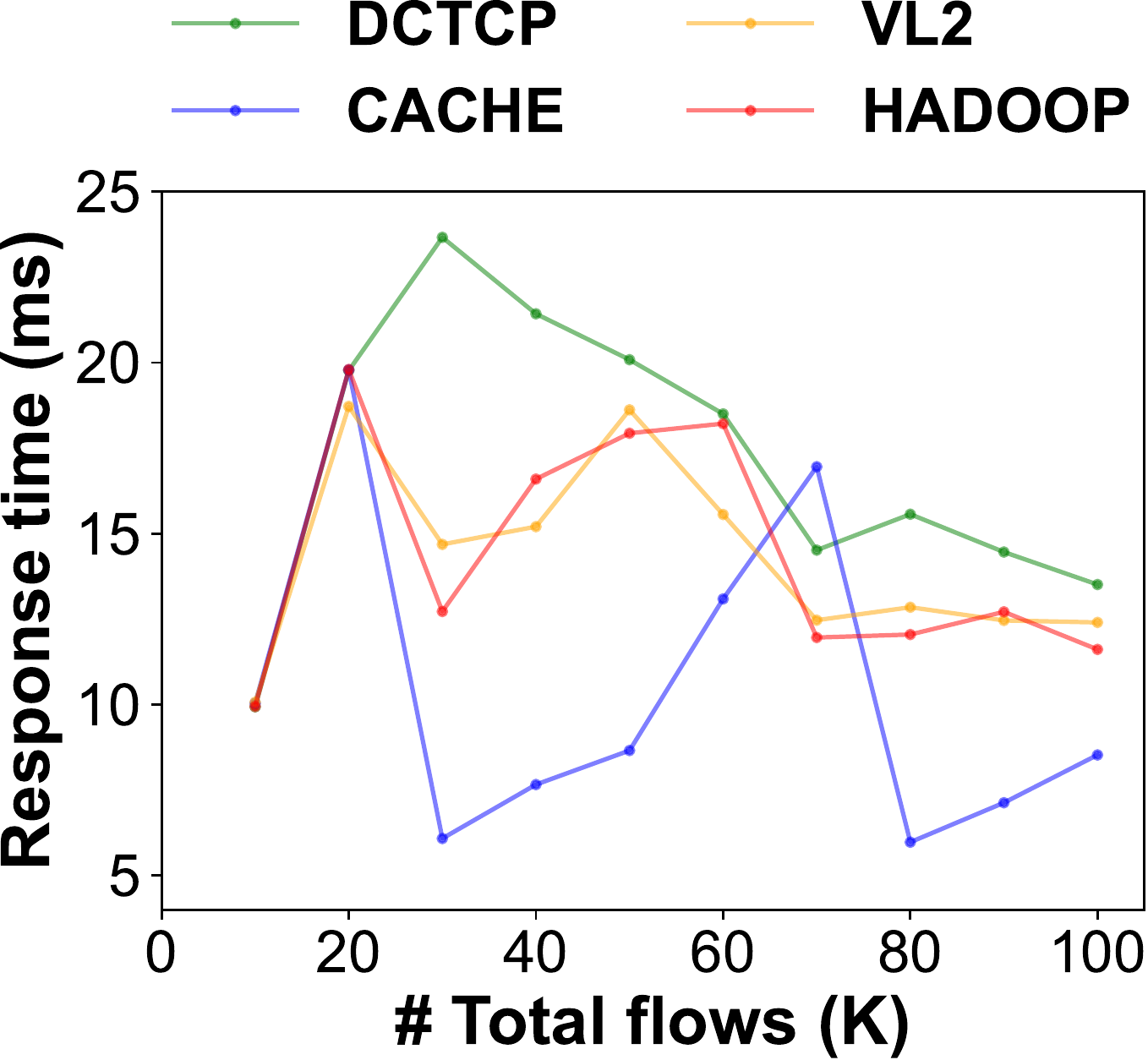}
    \label{fig:testbed:time:flownum}
    }
    \subfigure[Varying ratio of victim flows.]{ \includegraphics[width=0.2\textwidth]{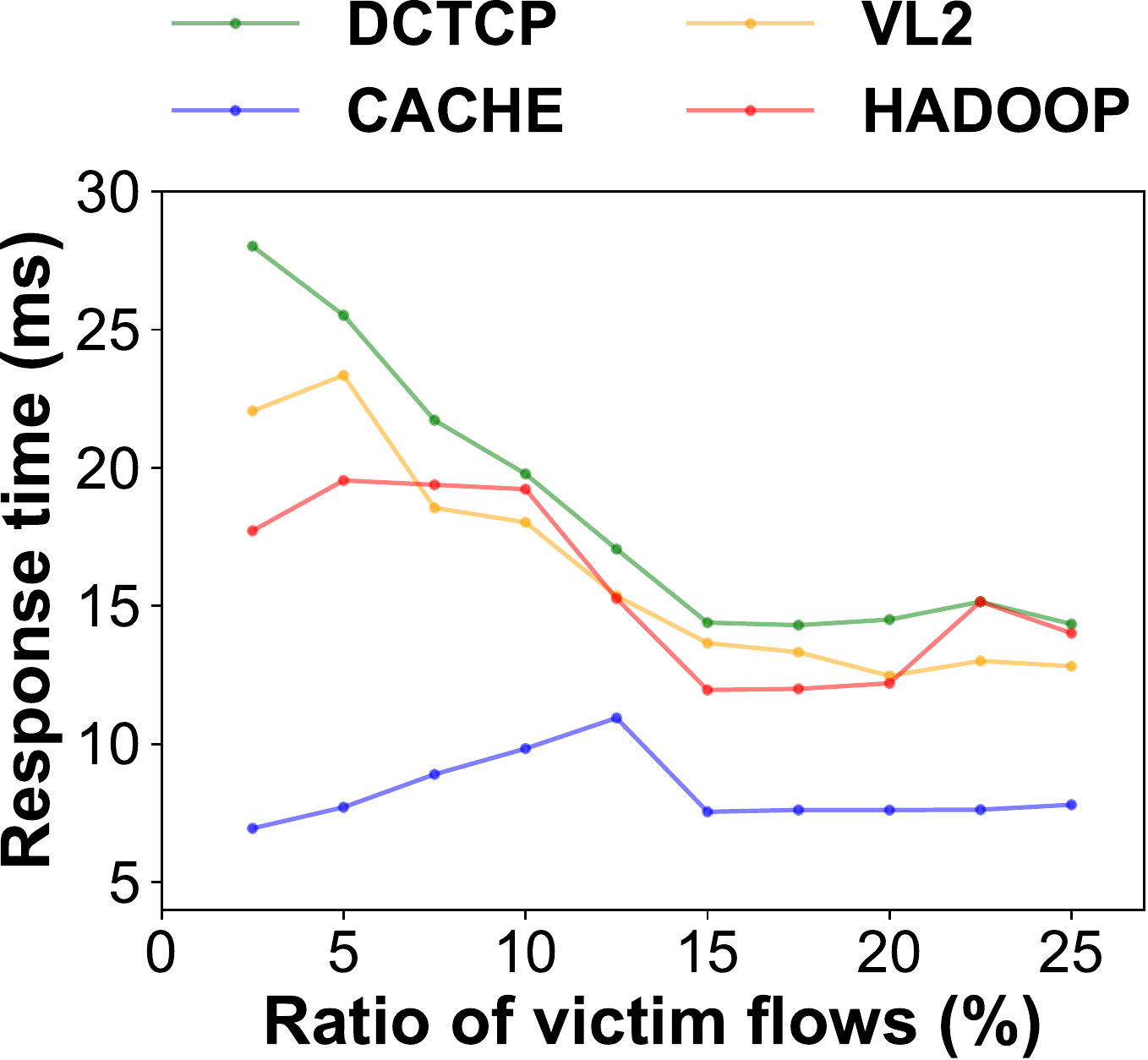}
    \label{fig:testbed:time:lossrate}
    }
    
    \caption{Response time to different network states.}
    \label{fig:testbed:response-time}
\end{figure}

\begin{figure}[t]
\setlength{\subfigcapskip}{-0.0cm}
\setlength{\abovecaptionskip}{-0.0cm}
\setlength{\belowcaptionskip}{-0cm}
\centering
    \begin{minipage}[t]{0.21\textwidth} 
    \centering
    \includegraphics[width=\textwidth,]{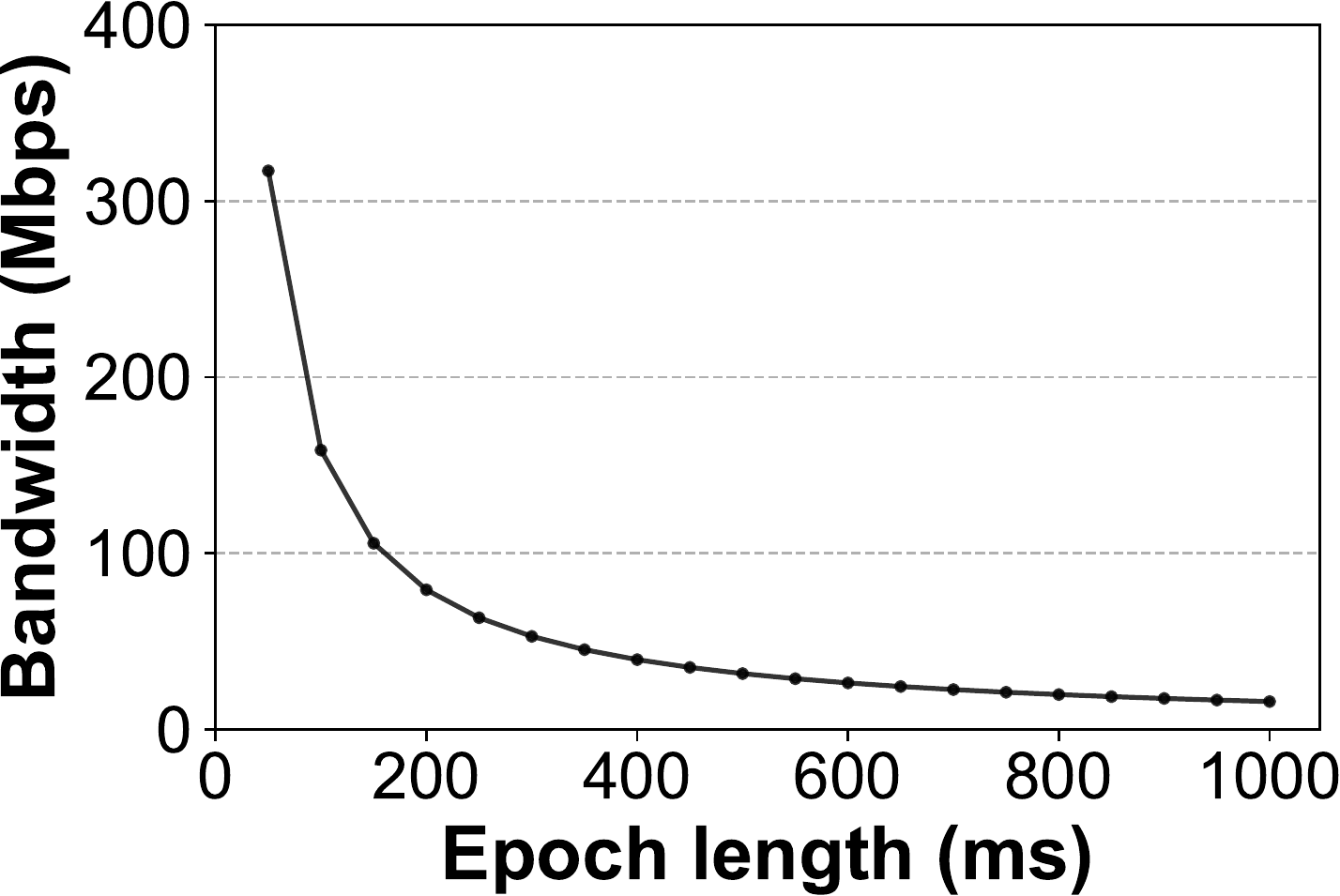}
    
    \caption{Bandwidth.}
    \label{fig:testbed:bandwidth}
    \end{minipage}
    \begin{minipage}[t]{0.22\textwidth}
    \centering
    \includegraphics[width=0.9\textwidth,]{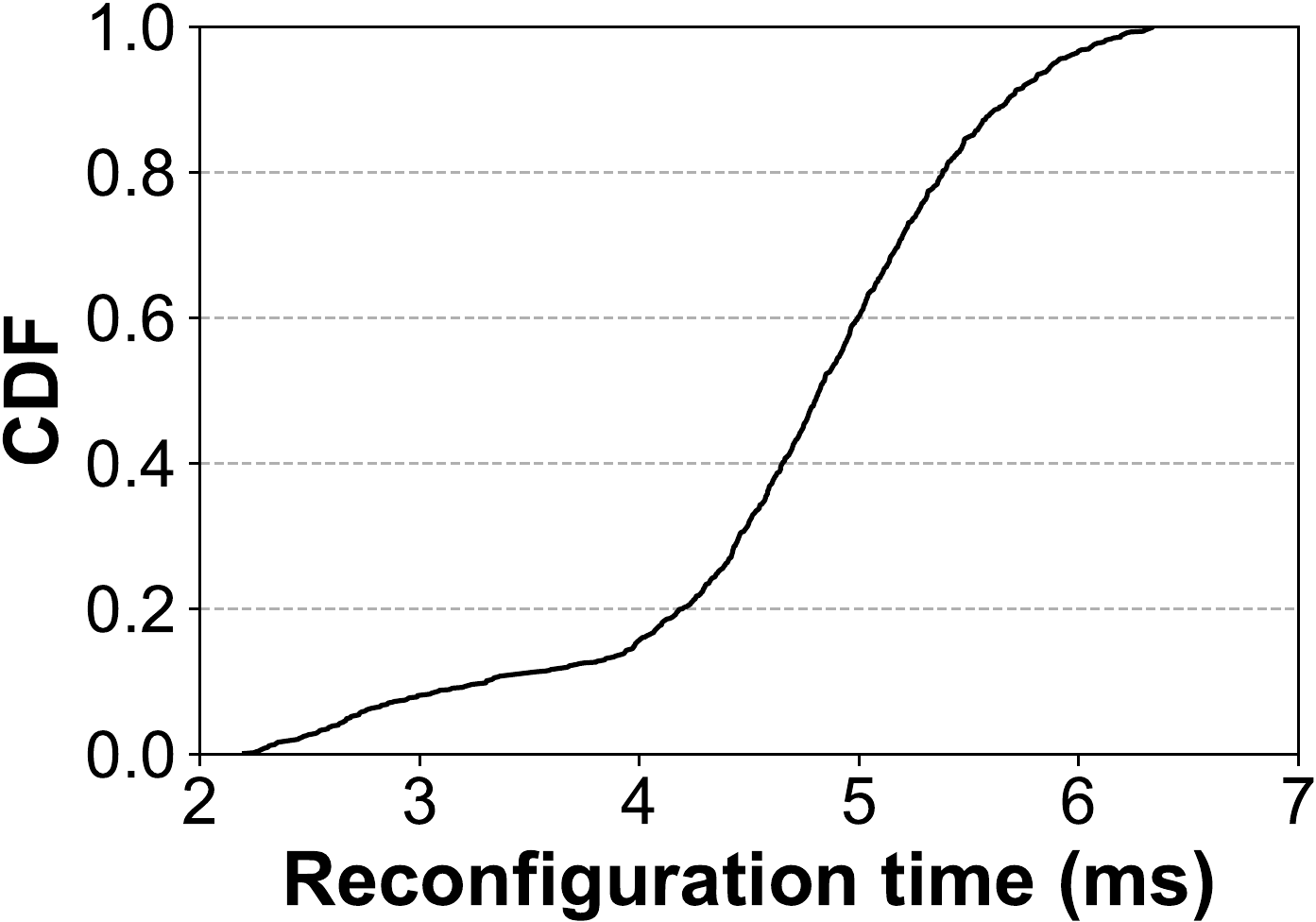}
    
    \caption{Reconfiguration time.}
    \label{fig:testbed:cdf}
    \end{minipage}
\end{figure}

\bbb{CDF of reconfiguration time  (Figure~\ref{fig:testbed:cdf}):}
Experimental results show that it takes 2$\sim$7ms to reconfigure the \systemname{} data plane. 
The central controller sends 10K reconfiguration packets with random configuration of the \systemname{} data plane to each edge switch, and we count the time for an edge switch to execute the reconfiguration.
We find 60\% of reconfigurations take less than 5ms.
The difference in time consumption is mainly because different reconfigurations require updating different numbers of TCAM entries to the switch data plane for supporting dynamic memory allocation (refer to Appendix \ref{sec:dataimpl} for details).

Adding up the above all time consumption, we find that the overall time consumption is less than 50ms.
This verifies that \systemname{} can monitor the network every 50ms on our testbed.
Considering that 1) the central controller only uses one CPU core in experiments and 2) monitoring the network every 50ms only consumes 0.8\% bandwidth of a 40Gb NIC, we believe \systemname{} can easily scale to monitor a much larger network with a shorter epoch length, requiring only one server as the central controller.

    \presec \section{Other Related Work} \postsec
\label{app:related}
Besides the solutions in Section \ref{sec:relatedworks}, there are still three kinds of solutions relevant to network measurement.
\begin{itemize}[leftmargin=*,parsep=0pt,itemsep=0pt,topsep=2pt,partopsep=2pt]

    \item \textit{Sampling-based solutions:}
    These solutions collect desired statistics from a subset of network traffic through packet sampling, including Csamp \cite{csamp2008}, NetFlow \cite{netflow2004}, sFlow \cite{sflow2001}, EverFlow \cite{everflow2015}, and more \cite{snmp,duffield2001trajectory,sekar2010revisiting,opensample2014,li2019large,roy2015inside,planck2014,nikolopoulos2019retroactive,yu2019dshark}.
    %
    %
    While sampling solutions significantly reduce the bandwidth overhead through sampling, they cannot well handle packet loss tasks as only sampled packets are measured, and thus fail to meet efficiency requirement.
    
    \item \textit{Programmable-switch-assisted solutions:}
    Beside packet loss detection, some solutions leverage the advanced features and capabilities of programmable switches to monitor micro-bursts \cite{joshi2018burstradar}, perform queue measurement \cite{chen2019fine,lei2022printqueue,sonchack2018turboflow}, and more \cite{laraba2021mitigating,wang2022closed,molero2022fast,holterbach2019blink}.
    
    \item \textit{Host-based solutions:}
    Due to the flexibility, abundant resources, and high visibility to flow-level statistics of end-hosts, these solutions are typically designed for inferring the existences, locations, and root causes of specific network events or network failures.
Typical solutions either send tailored probes into the network \cite{pingmesh2015,dhamdhere2018inferring,netbouncer2019,adams2016netnorad,aubry2016scmon,dhamdhere2007netdiagnoser,peng2017detector,simon2019} or analyze the performance of protocol stack or other I/O \cite{roy2017passive,choffnes2010crowdsourcing,mysore2014gestalt,trumpet2016,arzani2016taking,confluo2019,zeno2019,yuan2017quantitative,007-2018,gong2020microscope}.
Besides, some solutions further leverage switches to perform measurement \cite{jeyakumar2014millions,liu2016mozart} or record forwarding paths \cite{tammana2016simplifying,tammana2018distributed}.
\systemname{} can complement these solutions as \systemname{} provides flow-level statistics with high accuracy.
Take 007 \cite{007-2018} as an instance.
Network operators can replace the TCP monitoring agent that detects TCP retransmissions in 007 with \systemname{}.
After the replacement, 007 can monitor packet losses of TCP flows as well as packet losses of flows of other protocols.
Such extra visibility can help 007 better locate the link failures.

\end{itemize}

 \clearpage
\end{document}